\definecolor{princetonorange}{rgb}{1.0, 0.561, 0.0}
\titleformat{\subsection}[runin]
        {\normalfont\bfseries}
        {\thesubsection.}
        {0.5em}
        {}
        [.]
\titleformat{\subsubsection}[runin]
        {\normalfont\bfseries}
        {\thesubsubsection.}
        {0.5em}
        {}
        [.]
\renewenvironment{abstract}
               {\list{}{\rightmargin\leftmargin}                \item[\textsc{Abstract.}]\relax}
               {\endlist}
\providecommand{\keywords}[1]{\noindent\textbf{{Keywords:}} #1}
\newtheorem*{assumption*}{\assumptionnumber}
\providecommand{\assumptionnumber}{}
\newtheoremstyle{special}
    {\topsep}
    {\topsep}
    {\itshape}
    {}
    {\bfseries}
    {}
    {0.5em}
    {{\thmname{#1}\thmnumber{ #2$^{\bm*}\!$.}\thmnote{\ \textmd{(#3)}}}}
\newtheorem{prop}{Proposition}
\newtheorem{lemma}{Lemma}
\newtheorem{theorem}{Theorem}
\theoremstyle{definition}
\theoremstyle{special}
\newtheorem{spprop}[prop]{Proposition}
\renewenvironment{proof}[1][\proofname] {\par\pushQED{\qed}\normalfont\topsep6\p@\@plus6\p@\relax\trivlist\item[\hskip\labelsep\bfseries#1\@addpunct{.}]\ignorespaces}{\popQED\endtrivlist\@endpefalse}
\crefname{prop}{proposition}{props}
\crefname{corollary}{corollary}{corollaries}
\newenvironment{delayedproof}[1]
 {\begin{proof}[Proof of \Cref{#1}]}
 {\end{proof}}
\DeclareMathOperator{\Var}{Var}
\DeclareMathOperator{\E}{\mathbb{E}}
\DeclarePairedDelimiter\ceil{\lceil}{\rceil}
\DeclarePairedDelimiter\floor{\lfloor}{\rfloor}
\newtheorem*{Berry}{\textbf{Berry–Esseen theorem} {\normalfont (Theorem 2 in \citealp{petrov_estimates_1975})}}
\newcommand\footnoteref[1]{\protected@xdef\@thefnmark{\ref{#1}}\@footnotemark}
\newcounter{tmp@cnt}
\newcommand*\@labelpunc{.}
\newcommand*\combine[1][2]{    \refstepcounter{enumi}
    \setcounter{tmp@cnt}{\value{enumi}}
    \addtocounter{enumi}{#1-1}
    \item[\thetmp@cnt--\theenumi\@labelpunc]}
\newcommand{\biborder}[1]{}
\newcommand{\leeat}[1]{  \ifthenelse{\boolean{showcomments}}
{\textcolor{blue}{(L:  #1)}}{}}
\newcommand{\kirill}[1]{  \ifthenelse{\boolean{showcomments}}
{\textcolor{blue}{(K:  #1)}}{}}
\begin{document}

\title{\bfseries{\large Dominance Solvability in Random Games}}
\author{%
\scshape{\normalsize Noga Alon\thanks{Department of Mathematics, Princeton University, \href{mailto:nalon@math.princeton.edu}{nalon@math.princeton.edu}} \qquad Kirill Rudov\thanks{Department of Economics, Princeton University, \href{mailto:krudov@princeton.edu}{krudov@princeton.edu}} \qquad Leeat Yariv\thanks{Department of Economics, Princeton University, \href{mailto:lyariv@princeton.edu}{lyariv@princeton.edu} }\ \thanks{We thank Roland B{\'e}nabou, Amanda Friedenberg, Faruk Gul, Annie Liang, and Drew Fudenberg for useful feedback. We gratefully acknowledge support from NSF grant SES-1629613.}}}
\date{{\normalsize \today}}

\begin{titlingpage}
\usethanksrule
\begin{changemargin}{0.5in}{0.5in} 
 \maketitle
\begin{abstract}
\normalsize We study the effectiveness of iterated elimination of strictly-dominated actions in random games. We show that dominance solvability of games is vanishingly small as the number of at least one player's actions grows. Furthermore, conditional on dominance solvability, the number of iterations required to converge to Nash equilibrium grows rapidly as action sets grow. Nonetheless, when games are highly imbalanced, iterated elimination simplifies the game substantially by ruling out a sizable fraction of actions. Technically, we illustrate the usefulness of recent combinatorial methods for the analysis of general games.
\end{abstract}
\keywords{Random Games, Dominance Solvability, Iterated Elimination}
\end{changemargin} 
\end{titlingpage}
\newgeometry{bottom=1in, top=1.5in, right=1in,left=1in}

\section{Introduction}
\subsection{Overview}

First introduced by \cite{moulin_dominance_1979} in the context of voting, dominance solvability relies on a straightforward prescription. If a player has an action that generates worse payoffs than another regardless of what other players select---a strictly dominated action---she should never use it. When the structure of the game is commonly known, other players can infer their opponents' strictly dominated actions and assume they will not be played. With those strictly dominated actions eliminated, the resulting, reduced game may have further strictly dominated actions that can then be eliminated, and so on and so forth. This iterative procedure allows players to restrict the set of relevant actions they consider. If it converges to a unique action profile, that profile constitutes a Nash equilibrium, and the game is dominance solvable.

Dominance-solvable games are appealing on both simplicity and robustness grounds. Players do not need to hold precise beliefs about opponents or even accurately assess the payoffs resulting from each action profile---whether or not a game is dominance solvable, and the resulting predictions, depend only on \textit{ordinal} comparisons of players' payoffs. These features have suggested suitability to a range of applications, and much effort has gone into identifying naturally-occurring dominant-solvable games and implementing desirable outcomes through protocols inducing dominance solvability. 

Despite the attention dominance-solvable games have received, little is known about the iterated-elimination procedure's features and outcomes. This paper provides a general analysis of the procedure in random games. Our results highlight its potential limitations in simplifying games meaningfully.




Dominance solvability is so fundamental in game theory. Why don't we have a full understanding of the implications of the iterated-elimination procedure in our canon of knowledge already? We suspect one reason might be that our analysis requires fairly recent results in combinatorics. The main difficulty in studying dominance solvability arises since, whatever distribution over payoff rankings of actions profiles is assumed, after each iteration, the remaining actions players consider are \textit{selected} and resulting payoff rankings are no longer distributed in the same way. 


We consider random games, where the ranking of payoffs resulting from all possible action profiles is determined uniformly at random. This allows us to analyze the likelihood of different dominance features within games of varying sizes. We later show that our results translate to a variety of other distributions corresponding to payoff structures commonly assumed in the literature.\footnote{We present all of our results for two-player games. We show in the Online Appendix that results become even starker for more than two players.} 

Perhaps confirming common wisdom, we show that the probability a game is  dominance solvable vanishes quickly as any player's action set grows. Even in $2\times n$ games, this probability is strictly decreasing in $n$ and proportional to $n^{-1/2}$.\footnote{Such games correspond to settings in which one player has a coarser action set---a seller deciding whether to sell an item or not to buyers who pick payment levels, a firm that chooses whether to hire an employee or not, where the employee selects an effort level, etc.} Our derivation of this probability is based on a link we uncover between the number of players' undominated actions and \textit{Stirling numbers of the first kind}, a prominent sequence in combinatorics, enumerating various constructs since at least the 18th century.\footnote{The $k$-th Stirling number of the first kind captures the number of permutations of $n$ items with precisely $k$ cycles. For a rich discussion of applications of these numbers, see \cite{stanley_enumerative_2011}.} 
The asymptotics of these numbers' distributions, which we employ, have been discovered only over the last couple of decades.

As we increase the action sets of both players, for $m \times n$ games with $m \leq n$, the probability a game is dominance solvable is $n^{-\Theta(m)}$ and vanishes more rapidly.\footnote{We write $f(n)=\Theta(g(n))$ if both $g(n)=\mathcal{O}(f(n))$ and $f(n)=\mathcal{O}(g(n))$. Informally, it means that $f$ is bounded both above and below by $g$ asymptotically.} 

These results indicate how special many of the games the literature focuses on are. They also make the classic virtual implementation results \`{a} la \cite{abreu_virtual_1992} appear even more remarkable than before: approximation of a large class of implementation problems can be done utilizing only dominance-solvable games, despite their rarity.


The experimental literature on level-k thinking and cognitive hierarchies (see, e.g., \citealp*{costa-gomes_cognition_2001, camerer_cognitive_2004}) suggests limited ability of individuals to go through more than two iterations. \cite{fudenberg_predicting_2019} conduct Amazon Mechanical Turk (MTurk) experiments using 200 two-player $3 \times 3$ games with payoffs determined uniformly at random. An analysis of their data, as depicted in  Figure~\ref{fig: bar_fl}, demonstrates that compliance with equilibrium decisions for the row player (or simply Row) is high in \textit{solvable} games in which Row needs one or two rounds to find her equilibrium decision, but significantly lower in \textit{solvable} games in which Row needs to perform at least three iterations, or \textit{non-solvable} games with exactly one Nash equilibrium that is pure.\footnote{We are grateful to Drew Fudenberg and Annie Liang for sharing their data with us.} Thus, dominance solvability alone may not guarantee the \textquotedblleft simplicity\textquotedblright\ of a game.

Conditional on a game being dominance solvable, we look at the number of iterations required to complete the elimination process. We show that this number is large, increasing rapidly as the number of actions of at least one of the players grows. As action sets expand, \textquotedblleft simple\textquotedblright\ games become rare---they are unlikely to be dominance solvable and, even when they are, they likely require tremendous sophistication of players to reach an equilibrium outcome. Our results also open the door to questions regarding the features of dominance-solvable games required to approximate various allocation objectives. Indeed, \cite*{katok_implementation_2002} illustrate the limitations of virtual implementation in the lab due to the limited number of dominance iterations participants can successfuly perform. 

\begin{figure}
    \centering
    \includegraphics[width=0.9\linewidth]{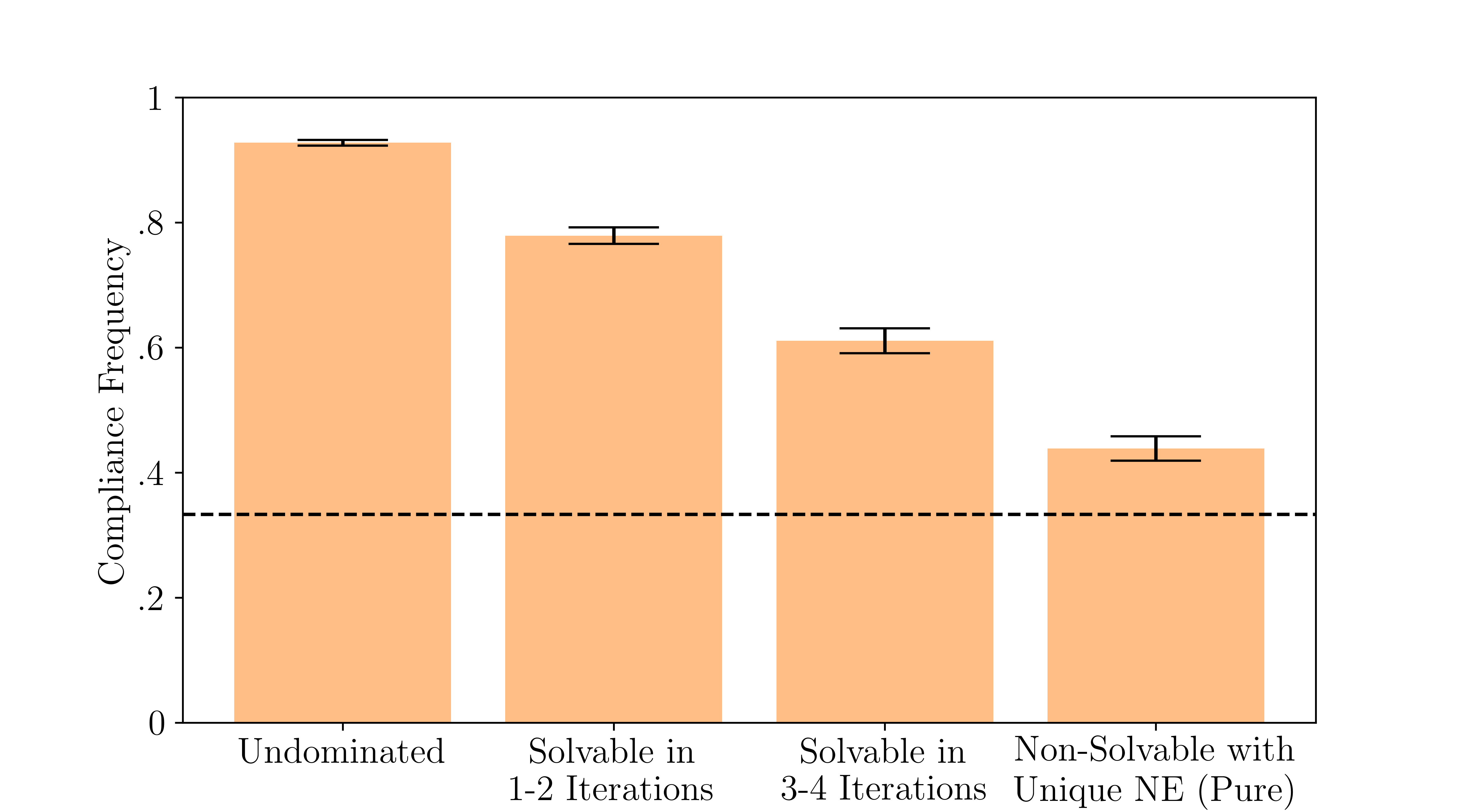}
    \caption{Frequency of Row's \textit{undominated} decisions in games with at least one Row's dominated action, \textit{iteratively undominated} decisions in \textit{solvable} games with 1 or 2 (3 or 4, respectively) rounds that Row needs to identify her equilibrium decision, \textit{equilibrium} decisions in \textit{non-solvable} games with one Nash equilibrium that is also pure}
    \label{fig: bar_fl}
    \end{figure}

Even without dominance solvability, iterated elimination of strictly dominated actions may still be effective in simplifying a game if the set of actions surviving it is relatively small. We show that whether this is the case depends on the relative number of actions each player has in the underlying game. For $2\times n$ games, the number of surviving actions for the second, column player has a mean of approximately $\ln n$ and is asymptotically normally distributed. Furthermore, for $m \times n$ games with relatively small $m=o(\ln{n})$, the fraction of surviving actions for the column player converges to zero asymptotically. This provides a silver lining to our previous results---$m\times n$ games with relatively small $m$ are dramatically simplified after the elimination process. Results are more discouraging when the first, row player has more actions. We show that in $m \times n$ games with $m=\log_2{n}+\omega(1)$, almost \textit{all} actions survive the iterative deletion process as $n$ grows.\footnote{We write $f(n)=\omega(g(n))$ if $g(n)=o(f(n))$. Informally, it means that $f$ dominates $g$ asymptotically.}

Throughout, we consider domination only via pure actions. Our notion of strict-dominance solvability closely relates to the rationalizability notion proposed by \cite{borgers_pure_1993}. Experimental evidence suggests that, indeed, identifying actions dominated by mixed strategies is far more challenging. Nonetheless, our results shed light on game complexity as viewed through the lens of the traditional rationalizability notion \citep{bernheim_rationalizable_1984, pearce_rationalizable_1984}. We show that our main insights carry over: rationalizability rarely yields a unique outcome and requires many iterations even when it does. Furthermore, the corresponding iterative procedure is frequently ineffective in limiting the actions agents need to consider.

\subsection{Literature Review}

Dominance solvability was first introduced by \cite{gale1953theory} and \cite{raiffa1957games}, with \cite{moulin_dominance_1979} offering one of its first uses as a weakening of strategy proofness in the context of voting. Dominance solvability has since been studied in a variety of applications, including auctions (see \citealp*{azrieli_dominance-solvable_2011}, and references therein), oligopolistic competition (\citealp*{borgers_dominance_1995}), and global games (\citealp*{carlsson_global_1993}).\footnote{Our study also closely relates to notions of rationalizability (\citealp*{bernheim_rationalizable_1984, pearce_rationalizable_1984}), see Section \ref{Rationalizability} for further links to that literature.}

Market designers use dominance-solvable games when implementing various allocation problems, in part because these games appear simple. In addition to \cite{abreu_virtual_1992}'s important work (mentioned above), recent research has addressed problems of robust implementation using dominance-solvable games, see e.g., \cite{bergemann_robust_2009}. This literature rarely considers the number of iterations required to reach an equilibrium in dominance-solvable games, although several recent papers have also identified strategically simple mechanisms; for example, \cite{borgers_strategically_2019} and \cite{li2017obviously}.\footnote{\cite{matsushima_mechanism_2007} and \cite{matsushima_detail-free_2008} consider incomplete-information settings with implementation in few rounds of iterated elimination of strictly-dominated strategies. Similarly, \cite*{kartik_simple_2014} consider agents with a taste for honesty and characterize social-choice functions that can be implemented using two rounds of iterated deletion. \cite{li2020simple} study the tradeoff between mechanisms' simplicity and optimality.} The experimental literature suggests that multiple iterations may not generate rationalizable outcomes, see \cite{sefton_abreumatsushima_1996} and \cite*{katok_implementation_2002}. \cite{abreu_response_1992} suggest one solution in a lively discussion with \cite{glazer_note_1992}: “[A]gents can simply be educated about how the mechanism is solved!” Our results imply that such education may be useful more often than not.

Some experimentalists advocate selecting games at random to test predictive theories about game play, see \cite{erev2007learning}. Our analysis provides some fundamental features of these games when used. While such experiments are rare, for illustration, we use data collected by \cite{fudenberg_predicting_2019}, who conducted experiments on two-player $3 \times 3$ games with payoffs determined uniformly at random. We also use data from a collection of $3 \times 3$ lab game experiments compiled by \cite{wright2014level}. In addition, we draw on experimental literature that suggests most individuals cannot perform many iterations, not without substantial experience (see, for instance, \citealp*{nagel_unraveling_1995, costa-gomes_cognition_2001, camerer_cognitive_2004}).



\cite{powers_limiting_1990} and \cite{mclennan_expected_2005} also consider random games and analyze the number of Nash equilibria, pure and mixed, while \cite{pei_rationalizable_2019} study the distribution of the number of point-rationalizable actions in such games. Our paper is related in spirit to these predecessors, though we address different questions and use different methodologies. 


We rely on recent results in combinatorics by \cite{hammett_how_2008} and \cite{hwang_asymptotic_1995, hwang_convergence_1998}. \cite{alon_2016} and \cite{stanley_enumerative_2011} provide a general overview of these methods. We hope the techniques we introduce can be useful for related problems.


\section{The Model}

\subsection{Random Games}

Consider a non-cooperative, simultaneous-move, one-shot game of complete information with two players, Row and Column. We consider only two players for presentation simplicity---in the Online Appendix, we show our main results extend to games with more than two players. Row has $m$ actions $[m]=\{1,2,\ldots ,m\}$ and Column has $n$ actions $[n]=\{1,2,\ldots ,n\}$, where $m,n$ are positive integers. Let $R=(r_{ij})\in \mathbb{R}^{m\times n}$ and $C=(c_{ij})\in \mathbb{R}^{m\times n}$ denote the $m\times n$ Row's and Column's payoff matrices respectively. We can represent this normal-form game by a bimatrix of the form 
\begin{equation*}
(R,C)=%
\begin{pmatrix}
r_{11},c_{11} & r_{12},c_{12} & \dots & r_{1n},c_{1n} \\ 
r_{21},c_{21} & r_{22},c_{22} & \dots & r_{2n},c_{2n} \\ 
\vdots & \vdots & \ddots & \vdots \\ 
r_{m1},c_{m1} & r_{m2},c_{m2} & \dots & r_{mn},c_{mn}%
\end{pmatrix}%
.
\end{equation*}

In order to study the general properties of this class of games, we assume all payoffs are randomly generated. Since dominance solvability hinges on ordinal comparisons alone, we can focus on the randomness of payoff rankings, abstracting from the underlying cardinal payoff distributions.\footnote{We ignore indifferences, which would arise with measure $0$ for any continuous distribution of payoffs. In Section \ref{Rationalizability}, we also consider dominance via mixed strategies and rationalizability.} In that sense, our analysis is ``distribution-free.'' Let $S_{m}$ denote the symmetric group of permutations of $[m]$. We maintain the following \textit{ordinal randomness assumption} throughout our analysis:

\begin{enumerate}
\item for each $j\in \lbrack n]$, Row's payoffs ${r_{\cdot j}}$ are uniform
on $S_{m}$;

\item for each $i\in \lbrack m]$, Column's payoffs ${c_{i\cdot }}$ are
uniform on $S_{n}$;

\item random permutations $\{r_{\cdot j}, c_{i \cdot}\}$, $i \in \lbrack m
\rbrack$ and $j \in \lbrack n \rbrack$, are mutually independent.
\end{enumerate}

In other words, for each fixed action of Row or Column, Column's or Row's ordinal rankings over its actions are uniform, and all ordinal rankings are mutually independent. In Section \ref{AlternativeDistributions}, we consider alternative distributions that correspond to a variety of commonly studied classes of games; they yield qualitatively identical results.

Let $G(m,n)$ denote the corresponding \textit{random game}.

\subsection{Three Dimensions of Pure-Strategy Strict Dominance}

We examine the general properties of random games related to pure-strategy strict dominance. An action is \textit{pure-strategy strictly dominated} if it always yields a worse outcome than some other action, regardless of other players' actions. If an action is not pure-strategy strictly dominated, it is called \textit{pure-strategy strictly undominated}. An action is \textit{strictly dominant} if all alternative actions are strictly dominated.

The elimination procedure that iteratively discards of all pure-strategy strictly dominated actions until there is no pure-strategy strictly dominated action is called \textit{iterated elimination of pure-strategy strictly dominated actions}. We also call rounds of this elimination procedure \textit{iterations}.\footnote{For finite games, the order in which pure-strategy strictly dominated actions are eliminated does not matter. To define the number of iterations, we suppose that, at each iteration (or round) of the elimination procedure, all players delete all pure-strategy strictly dominated actions.} If by iterated elimination of pure-strategy strictly dominated actions there is only one action left for each player, the game is called a \textit{pure-strategy strict-dominance solvable} game. To simplify the terminology in this paper, we will often omit the \textquotedblleft pure-strategy\textquotedblright\ preamble.

Our analysis focuses on the following \textit{three dimensions of strict dominance} for any random game $G(m,n)$. First, we ask how common strict-dominance solvable games are. We address this question by studying the \textit{probability of strict-dominance solvability}, denoted by $\pi (m,n)$. Second, we ask how \textquotedblleft complex\textquotedblright\ strict-dominance solvable games are. We use the number of iterations required conditional on strict-dominance solvability as our complexity measure for strict-dominance solvable games. We call that measure the  \textit{conditional number of iterations} and denote it by $I(m,n)$. Last, we inspect the complexity of games surviving iterated elimination of pure-strategy strictly dominated actions. As a complexity measure for surviving games, we analyze the \textit{number of surviving actions} after the iterated procedure, which we denote by $S^{R}(m,n)$ for Row and $S^{C}(m,n)$ for Column.

As a by-product, we also examine the number of strictly undominated actions denoted by $U^{R}(m,n)$ for Row and $U^{C}(m,n)$ for Column. It provides insights on the likelihood of games with a dominant-strategy equilibrium, where $U^{R}(m,n)$ and $U^{C}(m,n)$ are singletons.

\section{Motivating Example: Two Actions for One Player}

In this section, we fix the number of Row's actions to $m=2$ and vary the number of Column's actions $n$.\footnote{Due to symmetry, if we instead fix the number of Column's actions, the analysis is identical.} If $m=1$, all realized games are dominance solvable within one iteration. Therefore, the minimal non-trivial case corresponds to $m=2$.



\subsection{Undominated Actions}

Since there are only two actions for Row, the distribution of her number of strictly undominated actions is straightforward. One action is strictly dominated by another action with probability $\left(  \frac{1}{2}\right) ^{n}$, where $n$ is the number of Column's actions. In addition, there are two, mutually exclusive, ways to choose a strictly dominated action. Thus, Row has one strictly undominated action with probability $\left( \frac{1}{2}\right) ^{n-1}$.

Unfortunately, we cannot follow the same argument for Column with $n$ actions and the payoff matrix 
\begin{equation*}
C=%
\begin{pmatrix}
c_{11} & c_{12} & \ldots & c_{1j} & \ldots & c_{1n} \\ 
c_{21} & c_{22} & \ldots & c_{2j} & \ldots & c_{2n}%
\end{pmatrix}
,
\end{equation*}
\noindent where rows $\{c_{1\cdot },c_{2\cdot }\}$ are \textit{i.i.d.} uniform on $S_{n}$. One action is strictly dominated by another with probability $\frac{1}{4}$ in isolation. However, there are many ways by which one action can be dominated by various others, and they are not mutually exclusive.

Instead, we employ combinatorial techniques. Because we care only about the number of undominated actions and not their labels, we can set either of $\{c_{1\cdot },c_{2\cdot }\}$ to any fixed permutation. Without loss of generality, we fix $c_{1\cdot }=e_{n}\equiv (1,2,\ldots ,n)$ and focus on 
\begin{equation*}
C=%
\begin{pmatrix}
1 & 2 & \ldots & j & \ldots & n \\ 
c_{21} & c_{22} & \ldots & c_{2j} & \ldots & c_{2n}%
\end{pmatrix}
,
\end{equation*}
where $c_{2\cdot }$ is uniform on $S_{n}$. Formally, our notation above with two rows that are \textit{i.i.d.} uniform on $S_{n}$ is equivalent to the two-row notation with one fixed row and another drawn uniformly from $S_{n}$. 

Several conclusions follow immediately. First, the $n$-th action is always strictly undominated. Furthermore, for any $1\leq j\leq n-1$, the $j$-th action is strictly undominated if and only if $c_{2j}>c_{2i}$ for all $i>j$, which occurs with probability $\frac{1}{n-j+1}$. Thus, because of linearity of expectations, the expected number of Column's undominated actions is $H_{n}$, where $H_{n}\equiv 1+\frac{1}{2}+\ldots + \frac{1}{n}$ is the $n$-th harmonic number. Since $H_{n}\sim \ln {n}$, asymptotically, the fraction of Column's undominated actions is negligible. 

To establish the distribution of the number of undominated actions, we rely on an underlying recursive structure. There are $n!$ combinations in total for $c_{2\cdot }$. Let $s(n,k)$ denote the number of combinations corresponding to exactly $k$ of Column's actions being strictly undominated, $k\in \lbrack n]$. There are two relevant cases. If $c_{21} \in [n-1]$, then Column's first action is strictly dominated and we need to have $k$ undominated actions among the remaining $(n-1)$ actions. If $c_{21}=n$, Column's first action is strictly undominated and we need to have $k-1$ undominated actions among the remaining $(n-1)$ actions. Thus, 
\begin{equation*}
s(n,k)=(n-1)s(n-1,k)+s(n-1,k-1).
\end{equation*}

This expression corresponds to the recurrence relation of the \textit{unsigned (or signless) Stirling numbers of the first kind}, commonly denoted by $s(n,k)$, 
with the initial conditions $s(n,k)=0$ if $n<k$ or $k=0$, except for $s(0,0)=1$. Therefore,

\begin{lemma}
\label{lem: u_2_n} Consider a random game $G(2,n)$. Then, for any $n \geq 1$,

\begin{enumerate}
\item {$\Pr\left(U^{R}(2,n)=1\right)=\dfrac{1}{2^{n-1}}$;}

\item {for any $k\in \lbrack n]$, $\Pr \left( U^{C}(2,n)=k\right) =\dfrac{s(n,k)}{n!}$.}
\end{enumerate}
\end{lemma}

The combinatorics literature offers various interpretations for the unsigned Stirling numbers of the first kind. The original definition of $s(n,k)$ is algebraic. Namely, they are the coefficients in the expansion of the rising factorial: 
\begin{equation*}
x^{(n)}\equiv x(x+1)\ldots (x+n-1)=\sum_{k=0}^{n}s(n,k)x^{k}.
\end{equation*}
We use this definition to find the probability of dominance solvability in Proposition~\ref{prop: p_2_n} below. There are various other interpretations. For instance, $s(n,k)$ corresponds to the number of permutations $\sigma \in S_{n}$ with exactly $k$ cycles.\footnote{For other enumerative interpretations see \cite{stanley_enumerative_2011}.} 

\subsection{Dominance Solvability}
In order to express the probability of dominance solvability, let $n!!$ denote the \textit{double factorial} of a positive integer $n$, defined as the product of all the integers from $1$ up to $n$ with the same parity (odd or even) as $n$.\footnote{By definition, $(2n-1)!!=1\cdot 3\cdot \ldots \cdot (2n-1)$ and $(2n)!!=2\cdot 4\cdot \ldots \cdot (2n)=n!\cdot 2^{n}$.} In addition, let $W(n)$ denote the so-called \textit{Wallis ratio} (\citealp{qi_best_2015}) defined as 
\begin{equation*}
W(n)\equiv \frac{(2n-1)!!}{(2n)!!}=\frac{\Gamma \left( n+1/2\right) }{\Gamma
(1/2)\Gamma (n+1)},
\end{equation*}%
\noindent where $\Gamma (x)$ is the gamma function with $\Gamma (1/2)=\sqrt{%
\pi }$.

Proposition~\ref{prop: p_2_n} provides analytical formulas for the probability of dominance solvability.

\begin{prop}
\label{prop: p_2_n} Consider a random game $G(2,n)$. Then,

\begin{enumerate}
\item for any $n \geq 1$, $\pi(2,n) = 2W(n)=\dfrac{(2n-1)!!}{2^{n-1}\cdot n!}
$;

\item $\pi(2,n)$ is strictly decreasing in $n$;

\item $\lim\limits_{n\rightarrow \infty }n^{1/2}\cdot \pi (2,n)=\dfrac{2}{%
\sqrt{\pi }}$.
\end{enumerate}
\end{prop}

Intuitively, we derive the exact formula for $\pi (2,n)$ as follows. Recall that the order in which strictly-dominated actions are eliminated does not matter. There are $n$ possible mutually exclusive events corresponding to the number $k$ of strictly undominated actions for Column, $k\in \lbrack n]$, \ each occurring with probability $\frac{s(n,k)}{n!}$ respectively. The induced $2\times k$ game, derived from eliminating all of Column's dominated actions, is strict-dominance solvable if and only if Row has exactly one strictly undominated action. This occurs with probability $\left( \frac{1}{2}\right) ^{k-1}$ since Row's and Column's payoffs are independent. By summing over all possible cases $k\in \lbrack n]$, and using the algebraic definition of $s(n,k)$ together with various well-known identities, we get the desired expression. The monotonicity of $\pi (2,n)$ follows from the identity $\Gamma (x+1)=x\Gamma (x)$. The asymptotic characterization follows from Stirling's formula applied to the gamma function.

It is interesting to note that dominance solvability is rare even conditional on there being a unique pure-strategy Nash equilibrium. Indeed, from \cite{powers_limiting_1990}, the asymptotic number of pure-strategy Nash equilibiria in $G(2,n)$ follows a binomial distribution $B(2,1/2)$.
In particular, a $2 \times n$ game has exactly one pure equilibrium with probability close to $1/2$ for large $n$. Dominance-solvable games then account for a vanishing fraction of those.  

\subsection{Conditional Iterations}

There is exactly one iteration conditional on $G(2,n)$ being strict-dominance solvable if and only if both Row and Column have strictly-dominant actions. Thus, using the identity $\Gamma (n+1)=n\Gamma (n)$, we have: 
\begin{equation*}
\Pr (I(2,n)=1) =\left( \frac{1}{2^{n-1}}\cdot \frac{1}{n}\right) \cdot 
\frac{1}{\pi (2,n)} =\frac{\sqrt{\pi }}{2^{n}}\cdot \frac{\Gamma (n)}{\Gamma (n+1/2)}\sim 
\sqrt{\pi }\cdot \frac{1}{2^{n}\cdot n^{1/2}}.
\end{equation*}

Next, there are exactly two iterations conditional on $G(2,n)$ being strict-dominance solvable if and only if either Row or Column have a dominant action, not both. That is, 
\begin{equation*}
\Pr (I(2,n)=2) =\left( \frac{1}{2^{n-1}}+\frac{1}{n}-\frac{1}{2^{n-2}}\cdot 
\frac{1}{n}\right) \cdot \frac{1}{\pi (2,n)} =\frac{n+2^{n-1}-2}{2^{n}}\cdot \sqrt{\pi } \cdot \frac{\Gamma (n)}{\Gamma(n+1/2)}\sim \frac{\sqrt{\pi }}{2}\cdot \frac{1}{n^{1/2}}.
\end{equation*}
\label{subsection: i_2_n}Finally, there are three conditional iterations in all remaining cases: 
\begin{equation*}
\Pr (I(2,n)=3)=1-\frac{n+2^{n-1}-1}{2^{n}}\cdot \sqrt{\pi }\cdot \frac{\Gamma (n)}{\Gamma (n+1/2)}\sim 1-\frac{\sqrt{\pi }}{2}\cdot \frac{1}{n^{1/2}}.
\end{equation*}

In the Online Appendix, we show that $\Pr (I(2,n)=1)$ is monotonically, and exponentially, decreasing to zero as the number of Column's actions $n$ goes to infinity. In particular, it is unlikely for games to be solvable in strictly-dominant actions. In fact, as the derivation above suggests, it is rare to have a strictly dominant action even for only one of the players. Formally, $\Pr (I(2,n)=2)$ is also monotonically decreasing, albeit not exponentially, to zero. Therefore, asymptotically, the more pervasive manner by which dominance solvability is achieved involves the maximum of three elimination iterations, where Column is the first to eliminate actions. Proposition~\ref{prop: i_2_n} summarizes this discussion by focusing on the expected number of conditional iterations.

\begin{prop}
\label{prop: i_2_n} Consider a random game $G(2,n)$. Then,

\begin{enumerate}
\item $\E\left[I(2,n)\right]=3-\dfrac{n+2^{n-1}}{2^n} \cdot \sqrt{\pi} \cdot 
\dfrac{\Gamma(n)}{\Gamma(n+1/2)}$;

\item $\E\left[I(2,n)\right]$ is strictly increasing in $n$;

\item $\lim\limits_{n\rightarrow \infty }n^{1/2}\cdot \left( 3-\E\left[
I(2,n)\right] \right) =\dfrac{\sqrt{\pi }}{2}$.
\end{enumerate}
\end{prop}

As mentioned in the introduction, experimental evidence suggests individuals' limited ability to go beyond two iterations. Proposition~\ref{prop: i_2_n} then implies that most $2\times n$ games that are dominance solvable may be de-facto challenging to reason through. As we will soon show, this point becomes even starker when both players have a substantial number of actions.

\subsection{Surviving Actions}

Row has exactly one action surviving iterated elimination of strictly dominated actions if and only if the game is strict-dominance solvable. Therefore, 
\begin{equation*}
\Pr \left( S^{R}(2,n)=1\right) =\pi (2,n),\quad \Pr \left( S^{R}(2,n)=2\right) =1-\pi (2,n)\text{,\quad and \quad} \E\left[ S^{R}(2,n)\right] =2-\pi (2,n),
\end{equation*}%
\noindent and both comparative statics and asymptotic features follow directly from Proposition~\ref{prop: p_2_n}.

As for Column, by similar arguments we obtain 
\begin{equation*}
\Pr \left( S^{C}(2,n)=1\right) =\pi (2,n).
\end{equation*}%
\noindent For any $k\neq 1$, $k\in \lbrack n]$, Column has exactly $k$ surviving actions if and only if he has exactly $k$ undominated actions and the considered game is not strict-dominance solvable, so that 
\begin{align*}
\Pr \left( S^{C}(2,n)=k\right) & =\Pr \left( U^{C}(2,n)=k\right) \cdot \Pr\left( S^{R}(2,n)\neq 1\mid U^{C}(2,n)=k\right) \\
& =\Pr \left( U^{C}(2,n)=k\right) \cdot \Pr \left( U^{R}(2,k)\neq 1\right) =\frac{s(n,k)}{n!}\cdot \left( 1-\frac{1}{2^{k-1}}\right) ,
\end{align*}
\noindent where the second equality follows from independence between Row's
and Column's payoffs.

It follows from Proposition~\ref{prop: p_2_n} that, asymptotically, Row has nothing to eliminate, so that Column can eliminate actions only in his first iteration. Intuitively, then, the difference between the number of Column's strictly-undominated actions and the number of Column's surviving actions vanishes asymptotically.

Formally, the distribution of the number of Column's surviving actions is similar to that pertaining to strictly-undominated actions with two exceptions. First, for any $k\neq 1$, $k\in \lbrack n]$, the corresponding probabilities are discounted by $\left( 1-\frac{1}{2^{k-1}}\right) $ with smaller discounts for larger $k$. Second, there is a spike at  $k=1$ in the distribution of the number of Column's surviving actions that corresponds to the probability of strict-dominance solvability. Indeed, for any $n\geq 2$, because $\Pr \left( U^{R}(2,2)=1\right) =1/2$, we have  $\Pr \left( S^{C}(2,n)=1\right) >\Pr \left( S^{C}(2,n)=2\right) $.

For any $n\geq 1$, the sequence of numbers $s(n,k)$, $k=0,1,\ldots ,n$, is log-concave\footnote{A sequence $a=(a_{0},a_{1},\cdots ,a_{n})$ of nonnegative real numbers is log-concave if $a_{k}^{2}\geq a_{k-1}a_{k+1}$ for any $k\in \lbrack n-1]$.} and, hence, unimodal (\citealp{stanley_enumerative_2011}). In addition, the signless Stirling numbers $s(n,k)$ are maximized at $k(n)$ that is either $\floor{H_n}$ or $\ceil{H_n}$. That is, $k(n)\sim \ln {n}$ asymptotically. Results by \cite{hwang_asymptotic_1995} suggest that, although for any fixed  $k\neq 1$, $k\in \lbrack n]$, the corresponding probability $\Pr \left( S^{C}(2,n)=k\right) $ converges to zero faster than $\Pr \left( S^{C}(2,n)=1\right) =\pi (2,n)$, the probability of the distibution mode---corresponding to $k(n)\sim \ln {n} $---converges to zero slower than $\pi (2,n)$. We formalize this claim in the Online Appendix. In fact, results from probabilistic combinatorics also suggest that the distribution induced by $\frac{s(n,k)}{n!}$ is asymptotically normal (\citealp{gontcharoff_du_1944, hwang_convergence_1998}). This implies that the number of surviving Column's actions is asymptotically normal. Proposition~\ref{prop: s_c_2_n} formalizes this intuition.

\begin{prop}
\label{prop: s_c_2_n} Consider a random game $G(2,n)$. Then, 
\begin{equation*}
\Pr \left( S^{C}(2,n)-\E\left[ S^{C}(2,n)\right] \leq x\cdot \sqrt{\Var\left[
S^{C}(2,n)\right] }\right) =\Phi (x)+\mathcal{O}\left( \frac{1}{\sqrt{\ln n}%
}\right) ,
\end{equation*}
\noindent where $\Phi (\cdot )$ is the distribution function of the standard
normal distribution, 
\begin{align*}
\E\left[ S^{C}(2,n)\right] =\ln {n}+\gamma +o(1)\text{,\quad and \quad}
\sqrt{\Var\left[ S^{C}(2,n)\right] }=\sqrt{\ln {n}}- \frac{{\pi ^{2}}-6\gamma}{12\sqrt{\ln {n}}}+o\left( \frac{%
1}{\sqrt{\ln {n}}}\right) .
\end{align*}
\end{prop}


The formal proof 
follows similar lines to those appearing in the analysis of \cite{hwang_convergence_1998}.\footnote{Since we have a spike and discounted probabilities, the problem does not belong to the exp-log class that \cite{hwang_convergence_1998} studies. Therefore, we cannot use his results directly.} It uses the Berry-Esseen inequality (\citealp{petrov_estimates_1975}) stated in terms of characteristic functions to find convergence rates. Namely, using the algebraic definition of $s(n,k)$, we compute the characteristic function of the number of surviving Column's actions and compare it to the characteristic function of the standard normal distribution.

\begin{figure}[t]
\centering
\begin{subfigure}[t]{.49\textwidth}
\centering
\includegraphics[width=\linewidth]{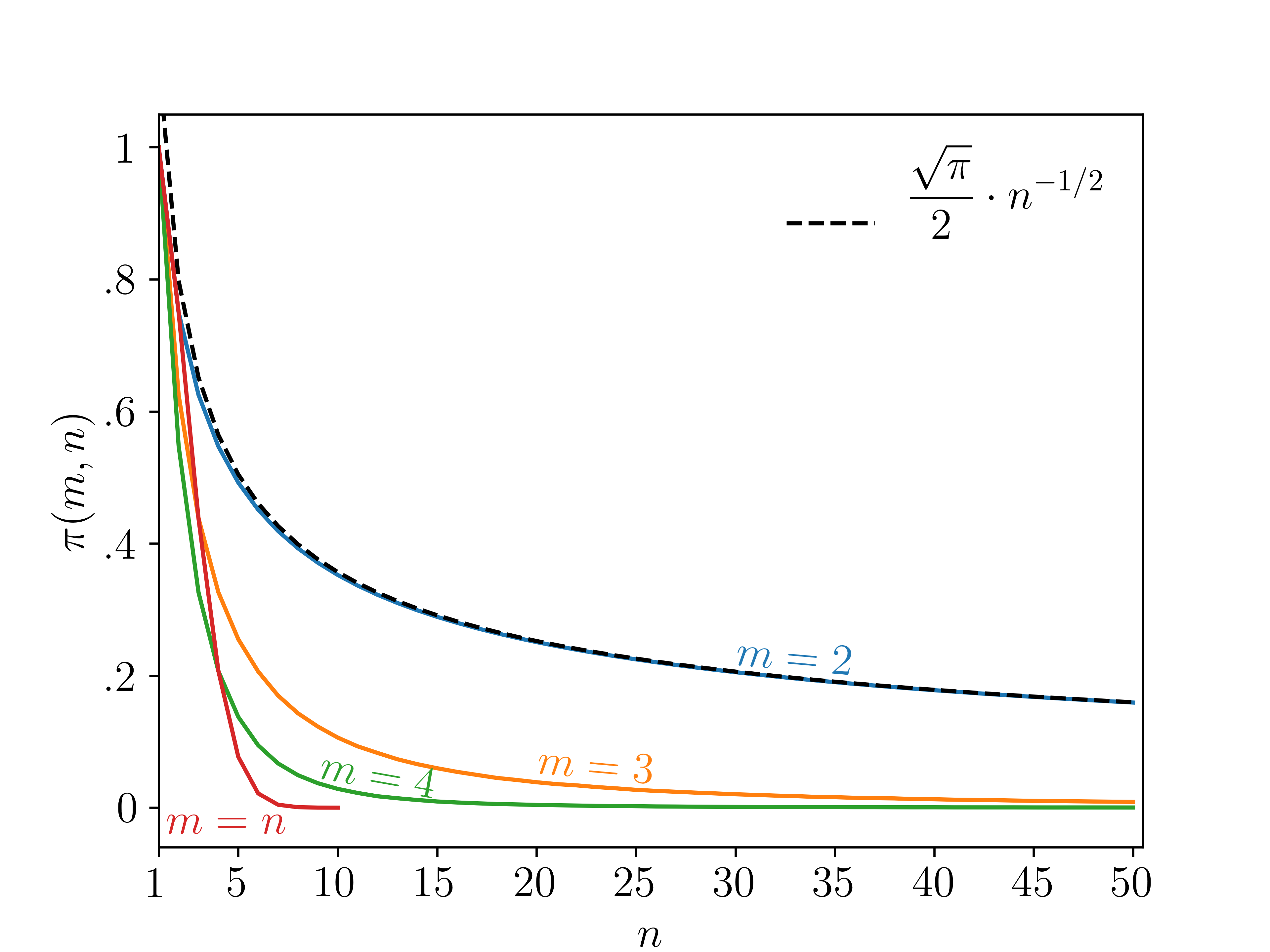}
        \caption{Probability of strict dominance}\label{fig: pi}
\end{subfigure}
\begin{subfigure}[t]{.49\textwidth}
\centering
\includegraphics[width=\linewidth]{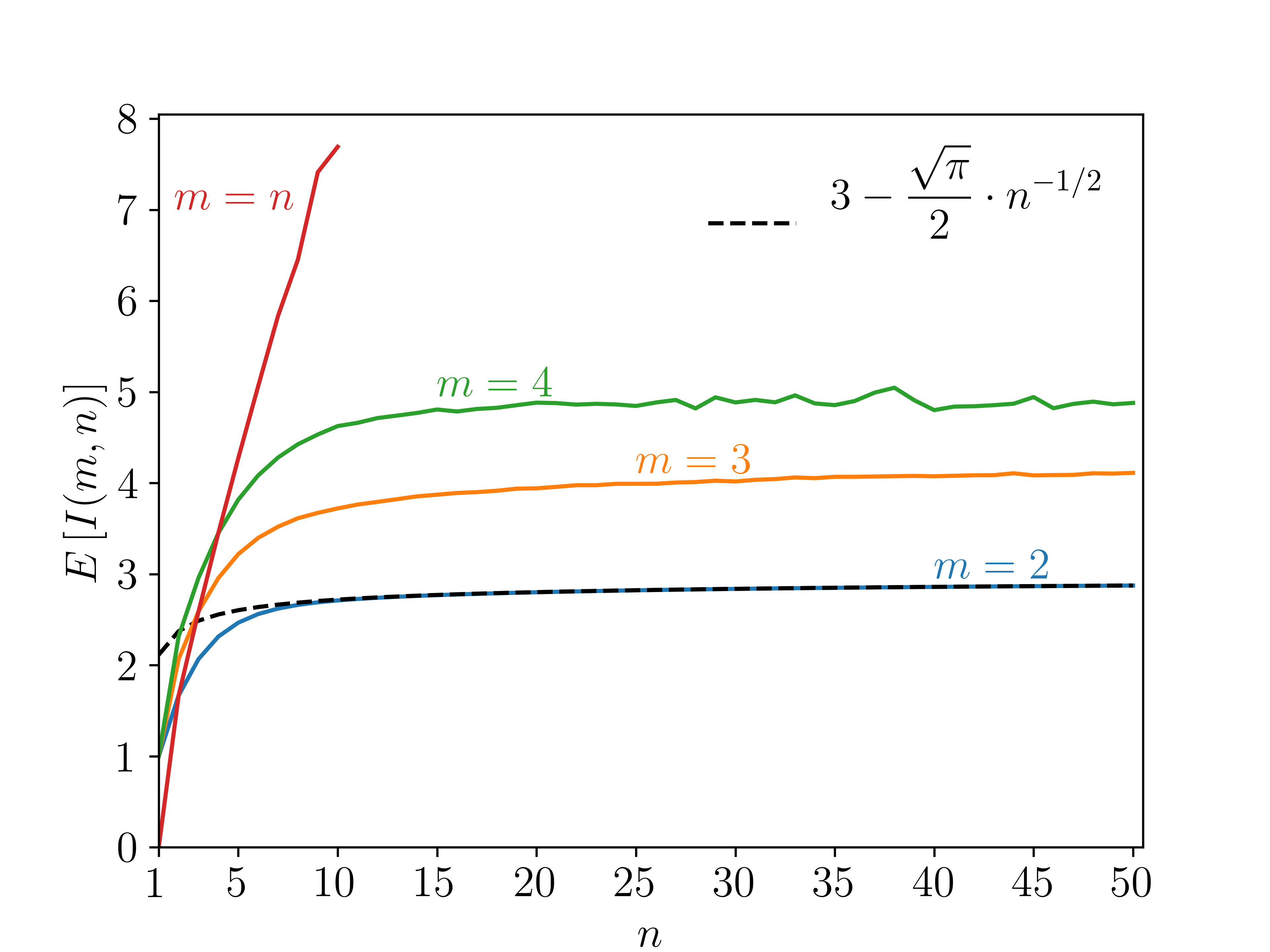}
\caption{Conditional number of iterations}\label{fig: I}
\end{subfigure}
\par
\medskip
\par
\begin{subfigure}[t]{\textwidth}
\centering
\vspace{0pt}
\includegraphics[width=0.49\linewidth]{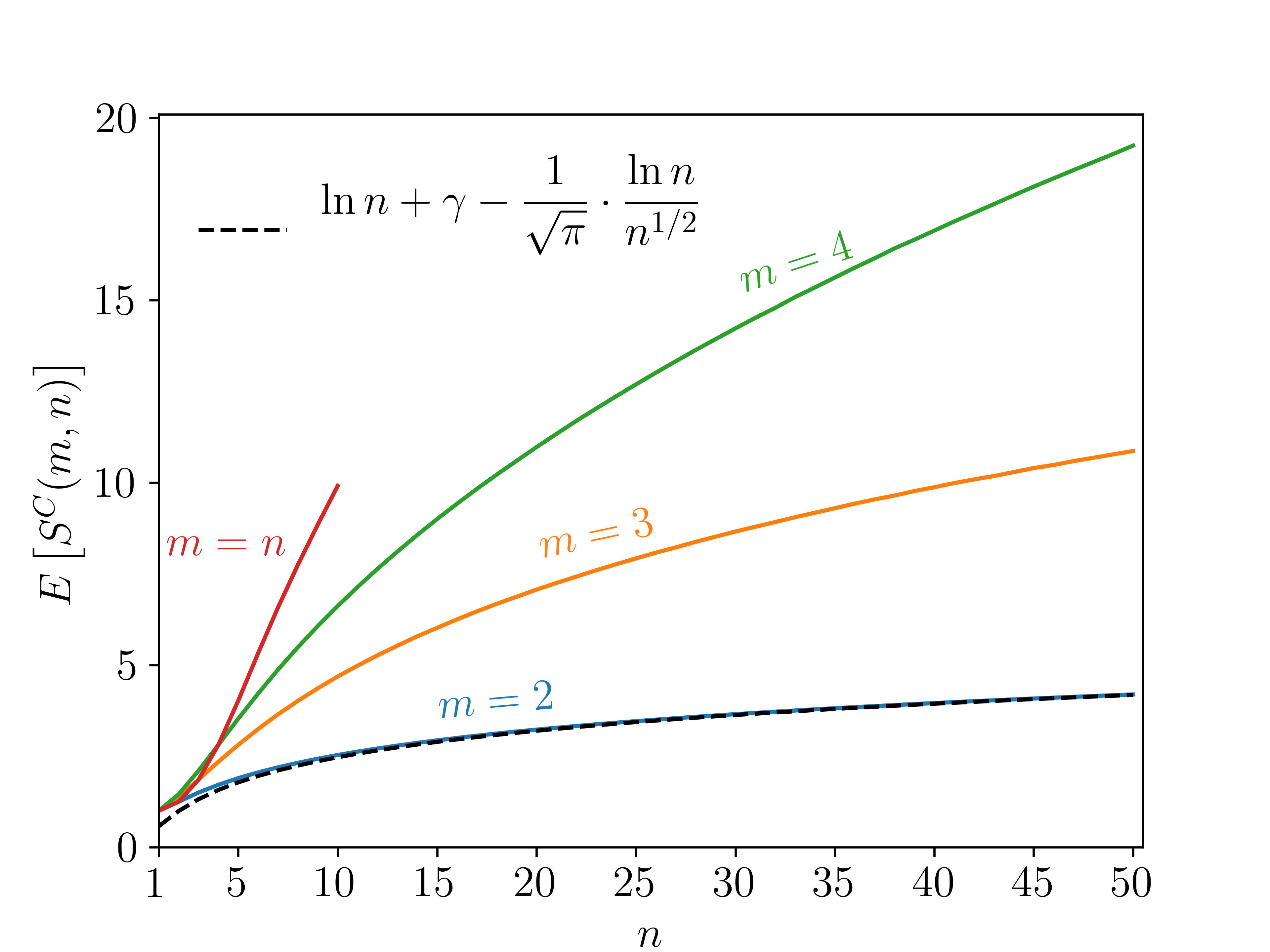}
\includegraphics[width=0.49\linewidth]{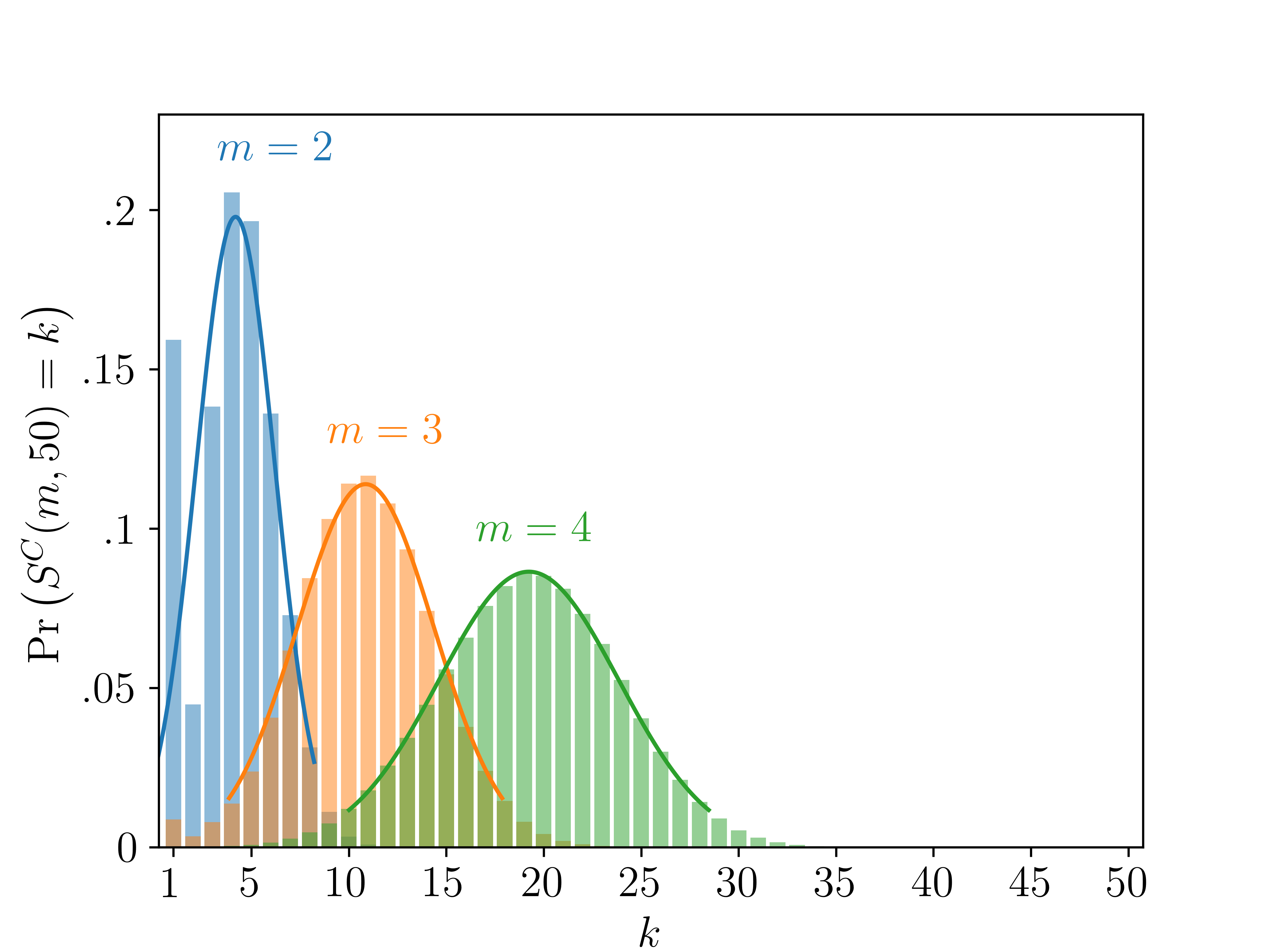}
\caption{Surviving actions}\label{fig: S_column}
\end{subfigure}
\caption{Three dimensions of dominance solvability}
\label{fig: m_n}
\end{figure}

Figure~\ref{fig: m_n} summarizes our discussion in this subsection, when focusing on the $m=2$ curves. The panels of the figure depict the different objects we analyze for random games varying in size, $n=1,...,50$: the probability of dominance solvability in panel~(\subref{fig: pi}), the expected number of conditional iterations in panel~(\subref{fig: I}), and the expectation and distribution of the number of surviving Column actions in panel~(\subref{fig: S_column}). In addition to their exact values, we also depict the asymptotic behavior analytically described in our results. As can be seen, our asymptotic characterizations provide remarkably close approximations for $2\times n$ games in which $n>5$.

\section{Arbitrary Action Sets: General Analysis}

We now turn to general $m\times n$ games. Without loss of generality, we fix the number of Column's actions to $n$ and vary the number of Row's actions $m \leq n$, as a function of $n$. 

Our general analysis suggests several main insights. First, dominance solvability is rare---for $2<m\leq n$, $\pi (m,n)=n^{-\Theta(m)}$ and it converges to zero as $n\rightarrow \infty $ at convergence rates that increase in $m$. Second, the conditional number of iterations is large even for relatively small games. Third, the iterated-elimination procedure is effective in allowing players to eliminate a significant fraction of their actions only in sufficiently imbalanced games.


In contrast to the $m=2$ case, the analysis here involves various novel enumerative issues that have not yet been studied in the combinatorics literature. We employ mostly probabilistic methods to obtain closed-form expressions for our variables of interest and study their asymptotic patterns. 

\subsection{Undominated Actions}

In general, an action can be dominated by multiple other actions. When calculating the number of undominated actions, one needs to consider the various interdependent possibilities of domination of any set of actions. For general $m\times n$ games, the exact distribution of the number of undominated actions of either player is challenging to characterize. Indeed, in the Online Appendix, we show that even for the case of $m=3$, the basic problem of finding the probability that Column has no strictly dominated actions turns out to be mathematically equivalent to a specific problem from the so-called ``permutation avoidance'' literature (e.g., see \citealp{gunby_asymptotics_2019}) and cannot be calculated explicitly \citep{hammett_how_2008}.\footnote{The Online Appendix offers a general description of the connection between the set of problems we consider and permutation patterns.} Nonetheless, we now illustrate a recursive structure of the \textit{expected} number of each player's undominated actions, which generalizes some of our observations from the $m=2$ case. 


We focus on Column, as results for Row are symmetric. Consider the expected number of Column's undominated actions. By symmetry, the probability that a given Column's action is undominated in $G(m,n)$ is independent of its label. Therefore, by the linearity of expectations, the expected proportion of Column's undominated actions $\frac{\E\left[U^C(m,n)\right]}{n}$ is also the probability that the first of Column's actions is undominated. To glean some intuition for the recurrence relation governing this probability, and hence for $\E\left[U^C(m,n)\right]$, consider $n$ mutually exclusive events, each corresponding to $c_{m1}=k$ for some $k \in [n]$. That is, we consider every possible payoff of Column from his first action, and Row's $m$-th action. Any such event occurs with probability $\frac{1}{n}$. If $c_{m1}=k$, the first action is undominated if and only if it is undominated in the reduced $(m-1)\times(n-k+1)$ game formed by removing all columns $j$ with $c_{mj}<k$ and the last row. This event occurs with probability $\frac{\E\left[U^C(m-1,n-k+1)\right]}{n-k+1}$. By summing over all possible cases $k \in \lbrack n \rbrack$, we achieve a recurrence relation, which is useful for bounding the expected number of players' undominated actions, as in the following lemma.


\begin{lemma}
\label{lem: u_m_n}
Consider a random game $G(m,n)$. Then, for any $m,n\geq2$,
\begin{enumerate}
    \item $U^{C}\left(m,n\right)$ first-order stochastically dominates $%
U^{C}\left(m-1,n\right)$;
    \item $\E\left[U^C(m,n)\right]=\mathlarger{\sum}\limits_{k=1}^n \dfrac{\E\left[U^C(m-1,k)\right]}{k}$ and it is component-wise strictly increasing;
    \item \label{lem: u_m_n_3} $\dfrac{(\ln{n})^{m-1}}{(m-1)!} \leq \E\left[U^C(m,n)\right] \leq \mathlarger{\sum}\limits_{k=0}^{m-1} \dfrac{(\ln{n})^{k}}{k!}$.
\end{enumerate}
\end{lemma}

Intuitively, it is harder for Column to eliminate his actions as the number of Row's actions $m$ becomes larger, which is at the heart of the lemma. This lemma can be seen as a generalization of Lemma~\ref{lem: u_2_n}. The recurrence relation extends our observations from Section 3.1. By employing the recurrence relation for $m=2$, we immediately verify that $E\left[U^C(2,n)\right]=\sum_{k=1}^n \frac{1}{k}=H_n$. For $m=3$, it gives $E\left[U^C(3,n)\right]=\sum_{k=1}^n \frac{H_k}{k}=\frac{H_n^2+H_n^{(2)}}{2}$, where $H^{(m)}_n\equiv 1+\frac{1}{2^m}+\ldots+\frac{1}{n^m}$ is the $n$-th generalized harmonic number of order $m$ and the last identity follows, say, from \cite*{alzer_series_2006}. Nonetheless, deriving closed-form solutions for $n \geq 4$ becomes challenging. Indeed, to our knowledge, this recurrence has not been studied before and has no known explicit solution. 

The bounds in part~\ref{lem: u_m_n_3} of the lemma can be rewritten as
\begin{equation*}
    n \cdot \Pr\big({\rm Poisson}(\ln{n})= m-1\big) \leq \E\left[U^C(m,n)\right] \leq n \cdot \Pr\big({\rm Poisson}(\ln{n})\leq m-1)\big),
\end{equation*}
where ${\rm Poisson}(\lambda)$ is a Poisson random variable with parameter $\lambda>0$. 
These bounds are derived recursively. For example, consider the upper bound. For $m=2$, the upper bound holds since $E\left[U^C(2,n)\right] = H_n \leq \ln{n}+1$ for any $n\geq 1$. Now, let $f(x)=\frac{1}{x}\left(\ln{x}+1\right)$, which is strictly decreasing for $x \geq 1$. We can then write:
\begin{equation*}
E\left[U^C(3,n)\right]=\sum_{k=1}^n \frac{E\left[U^C(2,k)\right]}{k}\leq \sum_{k=1}^n f(k) \leq f(1)+\int_{1}^n f(x)=1+\ln{n}+\frac{(\ln{n})^2}{2},
\end{equation*}
as desired. At each subsequent step, as we increase $m$, we can redefine the function $f(x)$ accordingly and show it is strictly decreasing for $x \geq 1$. Similar, albeit somewhat more intricate, arguments are needed to illustrate the lower bound stated in the lemma.

We conclude this subsection with the asymptotic analysis of undominated actions for both players. Consider Row, who has $m(n) \leq n$ actions. It is immediate to see that Row cannot eliminate any of her actions as $n$ goes to infinity, irrespective of the dependence of $m(n)$ on $n$. Indeed,
\begin{equation*}
    \Pr\left(U^R(m,n) < m \right) \leq m(m-1) \times 2^{-n} \to 0 \quad \text{as} \quad n \to \infty, \, m=m(n)\leq n,
\end{equation*}
where the inequality follows from the fact that for Row, there are $m(m-1)$ pairs of strategies and each strategy can dominate another with probability $(\frac{1}{2})^n$.\footnote{A variant of this observation for $m=n$ is also stated as Proposition 5 in \cite{pei_rationalizable_2019}.}


The asymptotic analysis for Column, who has a greater number of actions, is more delicate. Whether Column can significantly alter a game by eliminating his actions in the first iteration depends on the relative sizes of the players' action sets. First, when Row has relatively few actions, $m(n)=o(\ln{n})$, the proportion of Column's undominated actions converges to zero asymptotically. In other words, such $m(n) \times n$ games are greatly simplified. We show this by first illustrating that the bounds in part~\ref{lem: u_m_n_3} of the lemma are asymptotically equivalent when $m(n)=o(\ln{n})$, and then noticing that the lower bound $\frac{(\ln{n})^{m-1}}{(m-1)!n}=\Pr\big({\rm Poisson}(\ln{n})= m-1\big)$ for the proportion of Column's undominated actions converges to zero for any $m=m(n)$---intuitively, the probability that ${\rm Poisson}(\ln{n})$ equals the specific value $(m(n)-1)$ becomes negligible.



A different picture emerges when Row's action set is large, namely when $m(n)=\log_2{n}+\omega(1)$. In this case, asymptotically, almost all of Column's actions are undominated. When $m$ grows, as discussed above, the lower bound for the proportion of Column's undominated actions converges to zero and is, hence, less useful. However, we can construct a different, more useful bound. For any given Column's action, there are $n-1$ other actions that may strictly dominate it, each with probability $\frac{1}{2^m}$. Using a union bound, $\frac{\E\left[U^C(m,n)\right]}{n} \geq 1-\frac{n-1}{2^m}$. When $m(n)=\log_2{n}+\omega(1)$,
\begin{equation*}
    \lim\limits_{n\rightarrow \infty} \frac{\E\left[U^C(m,n)\right]}{n}=1.\footnote{We can use the same union bound to show that for $m=2\log_2{n}+\omega(1)$, Column cannot delete any action at all. Namely, $\E\left[U^C(m,n)\right] = n-o(1)$.}
\end{equation*}

Figure~\ref{fig:p_m_n} summarizes these asymptotic observations by depicting the proportion of Column's undominated actions for large $n=10^9$ as a function of $m$ (the solid black line) along with its bounds (encompassing the region shaded in orange). As can be seen, our bounds are accurate for $m=o(\ln{n})$ and  $m=\log_2{n}+\omega(1)$ and provide a narrow band for the majority of relevant cases of $m=m(n) \leq n$.\footnote{We conjecture that for the small subset of intermediate $m(n)$ not covered formally by our analysis, the proportion of Column's undominated actions is very close to its upper bound $\Pr\big({\rm Poisson}(\ln{n})\leq m-1)\big)$. This conjecture is confirmed by our numerical exercises. It suggests that, for almost all intermediate $m(n)$, the proportion of Column's undorminated actions is large and close to one.} 


\begin{figure}[H]
    \centering
    \includegraphics[width=0.57\linewidth]{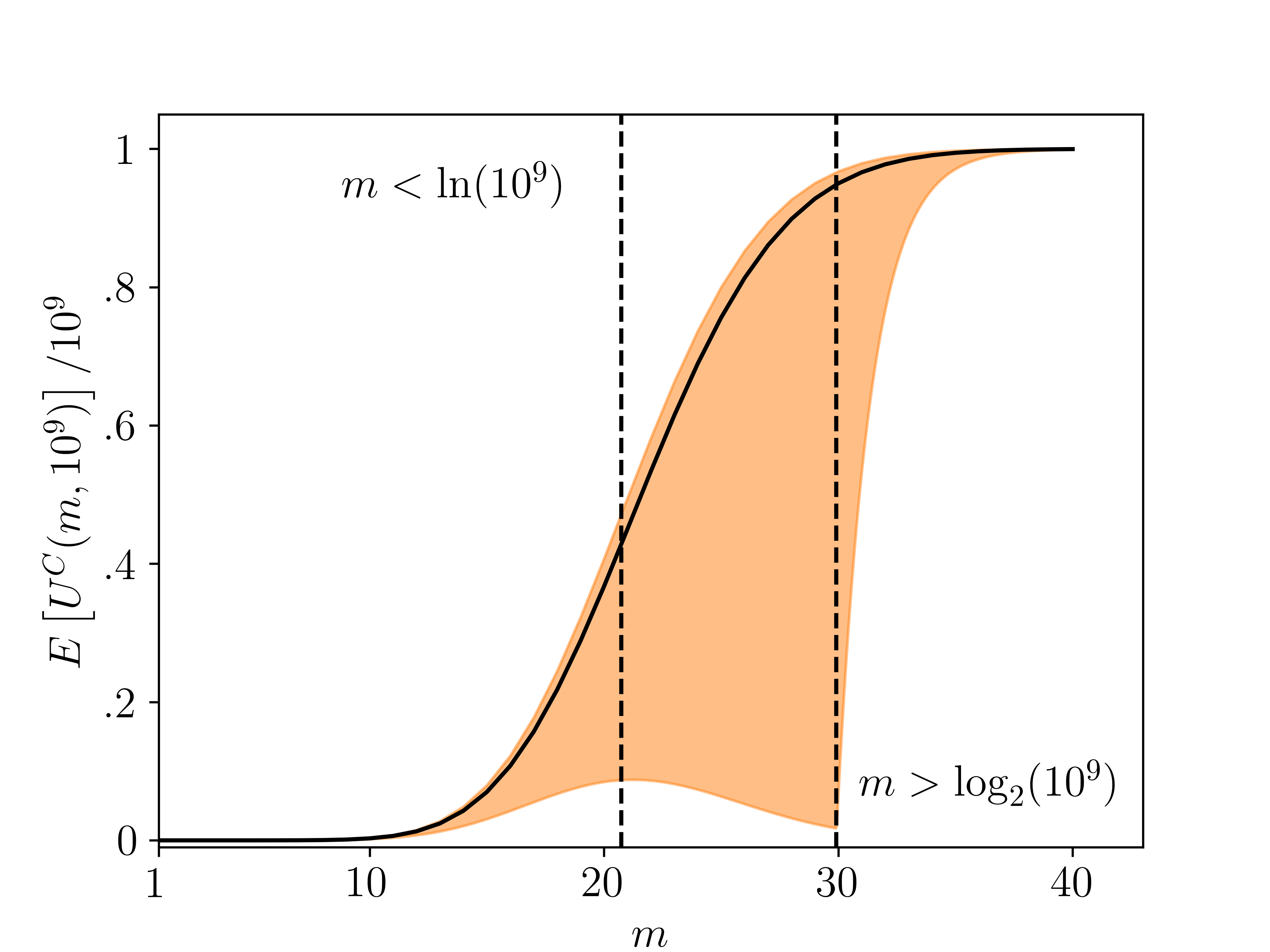}
    \captionsetup{justification=centering}
    \caption{Expected proportion of Column's undominated actions $\E\left[U^C(m,10^9)\right]/{10^9}$ with bounds (in orange)}
    \label{fig:p_m_n}
\end{figure}

\subsection{Dominance Solvability}

When analyzing the probability of dominance solvability, an additional enumerative issue emerges in general $m \times n$ games. In $2 \times n$ games, there are at most $3$ iterations. When, say, Column eliminates his actions first, the Row's payoffs in the induced, restricted game are still independent. And the third iteration, when it exists, corresponds to a simple maximization. In contrast, in general games, there can be many iterations.\footnote{The maximum number of iterations for an $m$ by $n$ game is $2m-1$ when $m<n$, and $2m-2$ if $m=n$.} These introduce non-trivial correlations: the fact that, say, a Column's action is not eliminated in the first iteration provides information on the payoffs it can generate. This information cannot be ignored when considering the third iteration. In particular, we cannot generally emulate the construction underlying the calculation of $\pi(m,n)$ obtained for $m=2$, which effectively considered each player's eliminated actions in isolation. In this subsection, we analyze $\pi(m,n)$ by taking into direct account the iterative nature of deletion. 

Our analysis in Section 4.1 suggests that for $m(n) \times n$ games, when the number $m(n)$ of Row's actions is relatively large, namely when $m(n)=\log_2{n}+\omega(1)$, Row cannot eliminate any of her actions and almost all of Column's actions survive in his first iteration with high probability. As a result, the probability a game is dominance solvable is vanishingly small when both action sets grow at these rates. In particular, perhaps confirming common wisdom, $n \times n$ games are rarely simplified for large $n$.

What is possibly less transparent is that the probability a game is dominance solvable vanishes quickly as long as \textit{any} action set grows, irrespective of their relative sizes. Indeed, in the previous section we showed that even seemingly simple $2 \times n$ games are rarely dominance solvable. Theorem~\ref{thm:p_m_n} below generalizes this statement to any $m(n) \times n$ games with $m(n) \leq n$. That is, even though particular games can be greatly simplified in just one iteration as we showed in Section 4.1, such games are not solvable with high probability.

\begin{theorem}
\label{thm:p_m_n}
There exist $C_1,C_2>0$ such that, for any $m \leq n$,
\begin{align*}
    n^{-(m-1)} \leq \pi(m,n) \leq C_1 \cdot n^{-C_2m}.
\end{align*}
In particular, $\pi(n,n)\leq n^{-\left(\frac{1}{3}-o(1)\right)n}$.
\end{theorem}

Despite the complex iterative nature of elimination, we can characterize an asymptotically tight bound for the convergence rate of the probability of dominance solvability.\footnote{Formally, this theorem can be restated as $\pi (m,n)=n^{-\Theta(m)}$, where $\Theta(m)$ denotes functions of $m$ that are asymptotically bounded both above and below by a linear function in $m$.} For games with action sets of comparable sizes, and particularly for balanced $n \times n$ games, this estimate is significantly more accurate than the crude union bound from Section 4.1:
\begin{equation*}
\pi(n,n) \leq \Pr\left(U^R(n,n) < n \right) + \Pr\left(U^C(n,n) < n \right) \leq 2n(n-1) \times 2^{-n}=\mathcal{O}\left(\frac{n^2}{2^n}\right).
\end{equation*}

We discuss the ideas guiding the proof in Section 4.5. Intuitively, the lower bound corresponds to the probability that Column has a strictly dominant action, in which case the game is strict-dominance solvable. The upper bound is derived from the consideration of two cases. If an $m \times n$ game is sufficiently imbalanced, we bound $\pi(m,n)$ by the probability $\Pr\left(S^R(m,n) < m \right)$ that Row deletes at least one action in the iterative procedure. Otherwise, if a game is roughly balanced, Row needs to eliminate many actions, not just one, when the game is dominance solvable. We account for the iterative elimination procedure to derive a bound in that case.

As compared with games exhibiting a unique pure equilibrium, dominance-solvable games are still vanishingly rare. Indeed, \cite{powers_limiting_1990} implies that the distribution of the number of pure-strategy Nash equilibria approaches the Poisson distribution $\rm Poisson(1)$ with mean 1 as both players' action sets expand. In addition, if one lets the number $n$ of Column's actions go to infinity while keeping the number $m$ of Row's actions finite, the limiting distribution of the number of pure-strategy Nash equilibria is a binomial random variable $B(m,1/m)$. In either case, asymptotically, a random game has exactly one pure-strategy Nash equilibrium with a positive probability. Hence, the probability of strict-dominance solvability conditional on there being a unique pure-strategy Nash equilibrium converges to zero.


\subsection{Conditional Iterations} 

As discussed in the introduction, solvable games are used for the implementation of desirable outcomes in a variety of applications due to their perceived simplicity and robustness. Nonetheless, empirically, people seem to have difficulty applying more than two or three iterations. We therefore ask what is the number of iterations players need to go through in generic games, conditional on dominance solvability.

Ex ante, it is unclear how the number of actions of each player affects the number of deletion rounds solvable games entail on average. Indeed, if a game is dominance solvable, the number of actions eliminated at each iteration could affect the number of iterations needed. Our question here regards the identification of the most pervasive way by which dominance solvability is obtained. 
As discussed in Section 4.2, the analysis of the general case is challenging since remaining games after each round of deletion exhibit non-trivial correlations between the conditional payoff distributions corresponding to different actions. Consequently, when examining $m \times n$ games with $m \geq 3$, we rely on simulations to compute the conditional number of iterations.

Specifically, we simulate $10^{6}$ games for each game dimension. 
Panel~(\subref{fig: I}) of Figure~\ref{fig: m_n} illustrates the resulting number of conditional iterations for various $m \times n$ games. As can be seen, the number of iterations required grows with the number of actions both players can use. In particular, for $n \times n$ games, this growth appears rapid and nearly linear, suggesting a large number of iterations, even for relatively small games. For instance, for $10 \times 10$ games, on average, more than $7$ iterations are required conditional on dominance solvability.

\subsection{Surviving  Actions}

Even though dominance solvability is scarce, iterated deletion of dominated actions might still be effective in simplifying a game as long as the set of actions surviving the elimination procedure is relatively small. In Section 4.1, we showed that the effectiveness of the first iteration depends on the relative size of players' action sets. For $m \times n$ games with relatively small $m=o(\ln{n})$, the proportion of undominated actions for Column converges to zero asymptotically. This provides a silver lining to Theorem~\ref{thm:p_m_n}---$m\times n$ games with relatively small $m$ are significantly simplified even after the first iteration. Can subsequent iterations simplify games even further?

It turns out that subsequent iterations are not effective asymptotically. In other words, one iteration is asymptotically sufficient. We prove this by showing that Row eliminates at least one of her actions with vanishing probability. Therefore, as for $2 \times n$ games, asymptotically, there is no difference between the number of Column's undominated actions and the number of Column's surviving actions.

In contrast, for $m \times n$ games with larger $m=\log_2{n}+\omega(1)$, almost all actions are undominated, as $n$ grows. Asymptotically, such games cannot be simplified at all.

Theorem~\ref{thm:s_m_n} summarizes our results pertaining to the effectiveness of the iterated elimination procedure in simplifying general games.

\begin{theorem}
\label{thm:s_m_n}
Consider a random game $G(m,n)$. Then, for any $n \geq m=m(n) \geq 2$,
\begin{enumerate}
    \item $\lim\limits_{n\rightarrow \infty}\Pr\left(S^R(m,n) < m \right)=0$, i.e. Column can proceed with at most one iteration asymptotically;
    \item for small $m=o(\ln{n})$, we have $\lim\limits_{n\rightarrow \infty} \dfrac{(m-1)!}{(\ln{n})^{m-1}} \times \E\left[S^C(m,n)\right] = 1$, and therefore $\lim\limits_{n\rightarrow \infty} \dfrac{\E\left[S^C(m,n)\right]}{n}=0$;
    \item for larger $m=\log_2{n}+\omega(1)$, we have $\lim\limits_{n\rightarrow \infty} \dfrac{\E\left[S^C(m,n)\right]}{n}=1$.\footnote{By our analysis in Section 4.1, for $m=2\log_2{n}+\omega(1)$, all Column's actions survive the iterated elimination, or formally $\E\left[S^C(m,n)\right] = n-o(1)$.}
\end{enumerate}
\end{theorem}

In Section 4.5, we provide more detailed bounds on $\Pr\left(S^R(m,n) < m \right)$ and discuss their use for the proof of this theorem. Intuitively, the first part holds for fixed $m=2$ from our analysis in Section 3. As $m(n)$ grows, there are two competing forces. On the one hand, Row has more actions, so it is easier to eliminate at least one of them. On the other hand, it is harder to eliminate any particular action of Row since, by Lemma~\ref{lem: u_m_n}, Column deletes fewer actions in his first iteration as the number of Row's actions grows. The proof illustrates that the latter force asymptotically dominates the first, irrespective of the dependence of $m(n)$ on $n$. The last two parts of the theorem follow immediately from our analysis in Section 4.1 and the theorem's first part.

Despite the iterative nature of the deletion procedure, our analysis suggests that players cannot go far beyond their first iteration \textit{unconditionally}. This is comforting news for $m(n) \times n$ games with relatively small $m(n)$. These games can be greatly simplified in only a few iterations. Thus, even taking into account experimental evidence suggesting the limited ability of individuals to go through an extensive number of deletion iterations, when games are highly imbalanced, the iteration procedure can be quite useful in simplifying games. This insight reverses when players' action sets are of comparable sizes. For larger games, regardless of the number of iterations one contemplates as plausible, the iterative elimination procedure does not alter substantially the game players need to consider.\bigskip

Figure~\ref{fig: m_n} uses simulations to depict the objects of our analysis for $m\times n$ games, $m=3,4$ and $n \in [50]$, as well as $n\times n$ games, $n \in [10]$.\footnote{As already described, we use $10^{6}$ simulations for each game size, in addition to exact values corresponding to the $m=2$ case discussed in the previous section. For $n \times n$ games, we restrict $n$ to be no larger than $10$ for computational reasons.} 

For $m\times n$ games with $m \geq 3$, all qualitative conclusions resemble those derived for the $m=2$ case. In particular, the probability of strict-dominance solvability converges to zero, albeit more rapidly. We already described the pattern that emerges for the number of conditional iterations: they grow with $m$, and more rapidly so when players' action sets are of the same size. Last, the number of surviving Column actions grow with $m$.\footnote{Furthermore, the distribution of the number of surviving actions appears approximately normal even when fixing $n$ at $50$. Intuitively, since Row cannot eliminate any of her actions asymptotically, we expect this distribution to be close to the one of undominated actions. The number of undominated actions can be represented as the sum of many rare ``almost independent'' indicator random
variables, each corresponding to whether a particular action is undominated or not. The so-called ``Poisson Paradigm'', see \cite{alon_2016}, would suggest that the distribution of the number of undominated actions approximate a Poisson distribution, which in turn approaches a normal distribution as the Poisson mean goes to infinity.}

Figure~\ref{fig: m_n} also highlights the contrast between games in which both players' action sets expand and those in which one of the players has a fixed action set. The likelihood of dominance solvability $\pi(n,n)$ vanishes quickly, standing at less than $ 5\%$ when $n\geq 7$, the number of conditional iterations $I(n,n)$ increases indefinitely and exceeds 3 starting from $n=4$. The expected number of surviving actions coincides with the full action set even for small $n$, namely any $n\geq 5$.

\subsection{Structure of Proofs}

In this section, we sketch for the interested reader the arguments underlying our main results. Detailed proofs appear in the Appendix. We first discuss Proposition~\ref{prop:s_m_n} that, together with our analysis in Section 4.1, implies Theorem~\ref{thm:s_m_n}. We then use the proposition to illustrate the ideas generating the proof of Theorem~\ref{thm:p_m_n}. 

\begin{prop}
\label{prop:s_m_n}
For any $m \leq n$, 
\begin{equation*}
\Pr\left(S^R(m,n) < m \right) \leq m(m-1)\cdot\left(\dfrac{m}{n}\right)^{\frac{m-1}{4}}.
\end{equation*}
\end{prop}

The intuition for this proposition is the following. The number of strategy pairs for Row is $m(m-1)$. We can then use symmetry to bound $\Pr\left(S^R(m,n) < m \right)$ by $m(m-1)$ times the probability that Row's second action strictly dominates her first \textit{after} the first round of Column's elimination. Namely, we restrict the problem to a collection of particular ``subgames'' in which Row considers only two actions. 

To complete the heuristic argument underlying this proposition, notice that for any pair of Row's actions, Row's second action does not dominate her first after Column's first elimination round only if, for some $j \in [n]$, $j$-th Column's action is undominated and $r_{1j}>r_{2j}$---denote by $\mathcal{R}_j\equiv\{r_{1j}>r_{2j}\}$ the corresponding event. Without loss of generality, we can fix $c_{1\cdot}=(n,n-1,\ldots,1)$. For Column's $j$-th action to be undominated, it suffices for it to deliver the largest payoff among the first $j$ of Column's actions for at least one of Row's action. Formally, it corresponds to the event $\mathcal{C}_j\equiv \bigcup_{i \geq 2} \{c_{ij}=\max_{k\leq j} c_{ik}\}$. In the proof, we show that these events $\{E_j \equiv \mathcal{C}_j \bigcap \mathcal{R}_j\}_{j \in [n]}$ are mutually independent. This observation allows us to bound the probability that Row's second action strictly dominates her first one after the first Column's iteration by 
\begin{equation*}
\Pr\left(\bigcap_{j \in [n]}\overline{E}_j\right)=\prod_{j \in \lbrack n \rbrack} \left(1- \Pr\left(E_j\right)\right)\leq \exp(-\sum_{j \in \lbrack n \rbrack}\Pr\left(E_j\right)). 
\end{equation*}
Since each event $E_j$, $j \in [n]$, itself represents a finite union of events, we apply Bonferroni's inequality to find the lower bound on $\Pr\left(E_j\right)$ and use it with other well-known inequalities to obtain the proposition's claim. In most cases, the proposition's bound is sharper than the crude union bound $m(m-1)\cdot2^{-n}+n(n-1)\cdot2^{-m}$ for the probability of deleting any action in the first iteration.\footnote{In fact, it is possible to use the inclusion–exclusion principle to obtain an exact expression for $\Pr(E_j)$. In the Appendix, we use it to prove that for any \textit{finite} $m$, we have $\sum_{j \in \lbrack n \rbrack}\Pr\left(E_j\right)=\mathcal{O}\left((m-1)/2 \cdot \ln{n} \right)$, and thus $\Pr\left(S^R(m,n) < m \right)=\mathcal{O}\left(n^{-{(m-1)/2}}\right)$, which is tight when $m=2$.}

The proposition, together with the union bound above, allows us to immediately conclude that in $m(n) \times n$ games with an arbitrary specification of $m(n) \leq n$, Row cannot eliminate any of her actions asymptotically. Formally, $\lim\limits_{n\rightarrow \infty}\Pr\left(S^R(m,n) < m \right)=0$, as desired. The last two parts of Theorem~\ref{thm:s_m_n} follow from our results in Section 4.1.

As concerns the proof of Theorem~\ref{thm:p_m_n}, we obtain the lower bound by noting that an $m \times n$ game is strict-dominance solvable when Column has a strictly dominant action. This event occurs with probability $n^{-(m-1)}$. 

The derivation of the upper bound is more intricate and divided into two separate cases. First, we focus on sufficiently imbalanced $m \times n$ games, for which $m$ is relatively small compared to $n$. In that case, we bound $\pi(m,n)$ by the probability that Row eliminates at least one action: $\pi(m,n)=\Pr\left(S^R(m,n) = 1 \right) \leq \Pr\left(S^R(m,n) < m \right)$. We use Proposition~\ref{prop:s_m_n} to get the desired bound. 

Second, we consider relatively balanced $m \times n$ games, for which Row has almost as many actions as Column. Then, the iterative elimination procedure needs to remove many of Row's actions, not just one. Because the corresponding bound for $\Pr\left(S^R(m,n) < m \right)$ becomes less relevant when dealing with games that feature many Row actions and specifically $n \times n$ games, we use an alternative argument. 

Specifically, we first show that for perfectly balanced $n \times n$ games, $\pi(n,n)\leq n^{-\Theta(n)}$. An $n \times n$ game is solvable if and only if exactly $(n-1)$ actions for Row or Column are eliminated in the iterative procedure. We obtain the stated upper bound by considering a relaxed problem of finding the probability of eliminating at least an $\alpha$-fraction of Row's or Column's actions, $\alpha \in (0,1-1/n]$. Intuitively, we expect the latter probability to be sufficiently small for large enough $\alpha>0$. 

For the relaxed problem, we iteratively (if needed) eliminate strictly dominated actions in an arbitrary order and stop exactly when an $\alpha$-fraction of Row's or Column's actions is eliminated. By symmetry with respect to players' labels, it is without loss of generality to suppose that at the stopping point, Row has eliminated approximately an $\alpha$-fraction of her actions, while Column has eliminated a smaller fraction of actions. 


These iterations introduce non-trivial correlations between conditional payoffs corresponding to different actions. However, in the next step, we simplify it further to establish our stated bounds. Consider the final subgame at which our iterative process above stops, when approximately an $\alpha$-fraction of Row's actions have been eliminated. Suppose action $r$ of Row has been eliminated and let $X$ denote the set of Column's actions in this subgame, his surviving actions. In the original game, action $r$ would also be dominated for Row were Column's actions restricted to $X$. Recall that the number of actions $r$ as such accounts for approximately an $\alpha$ fraction of Row's actions. Furthermore, the set $X$ accounts for at least a $(1-\alpha)$-fraction of Column's actions. To derive the desired upper bound for $\pi(n,n)$, we therefore assess the probability that at least a fraction of $\alpha$ of Row's actions is dominated in a subgame with at least $(1-\alpha)$ of Column's actions. This allows us to circumvent the correlations between payoffs in games selected through the iterative process. Our stated bound $\pi(n,n)\leq n^{-\left(\frac{1}{3}-o(1)\right)n}$ is obtained by picking $\alpha=1/3$.\footnote{Various values of $\alpha$ could generate our desired bound.}


We next connect imbalanced $m \times n$ games, $m\leq n$, to the already examined perfectly balanced $m \times m$ games. When considering the likelihood of dominance solvability, it is important to note that not every subgame of a strict-dominance solvable $m \times n$ game is solvable in itself.\footnote{Furthermore, in general, we cannot partition a non-random strict-dominance solvable game into two non-intersecting strict-dominance solvable subgames of given dimensions. That is, $\pi (m,n)$ is not component-wise sub-multiplicative.} Nonetheless, we can pick \textit{particular} subgames of any size that are solvable. In particular, any subgame generated by eliminating some of the players' dominated actions would be dominance solvable as well. Therefore, we have the following lemma.

\begin{lemma}
\label{lem: subgame} Consider a  strict-dominance solvable $m \times n$ game, where $m, n \geq 1$. Then, for any $m^{\prime }\in \lbrack m \rbrack$, $n^{\prime }\in \lbrack n \rbrack$, there exists a strict-dominance solvable $m^{\prime }\times
n^{\prime }$ subgame.\footnote{As an immediate corollary, for any fixed $m \geq 2$, $\pi(m,n) \leq m \cdot \pi(m-1,n)$, and hence convergence rates are weakly increasing.}
\end{lemma}

For all remaining cases of the number $m$ of Row's actions, we use Lemma~\ref{lem: subgame} to bound  $\pi(m,n) \leq \binom{n}{m} \cdot \pi(m,m)$, and then apply $\pi(m,m)\leq m^{-\left(\frac{1}{3}-o(1)\right)m}$ to obtain our results. 

\section{Alternative Distributional Assumptions} \label{AlternativeDistributions}

Our analysis focused on random games. However, strategic interactions that have received attention in the literature, theoretically and empirically, are inspired by applications, and could be far from random. One immediate concern could be that real-world interactions correspond to games that are more amenable to the iterated elimination procedure. To what extent are our qualitative results driven by our uniform determination of game structures? In this section, we present data from lab experiments and from simulations indicating that our results 
hold for a wide variety of alternative distributional assumptions that correspond to commonly-studied strategic interactions.

\subsection{Comparison of Lab and Random Games}

We start by comparing random games with those played in lab experiments. Our analysis below uses data on initial play within $86$ symmetric $3 \times 3$ games from $6$ different experiments collected by \cite{wright2014level} and utilized by \cite{fudenberg_predicting_2019} (in addition to simulated random games as analyzed in the current paper).

\begin{figure}[h]
    \centering
\includegraphics[width = 0.48\linewidth]{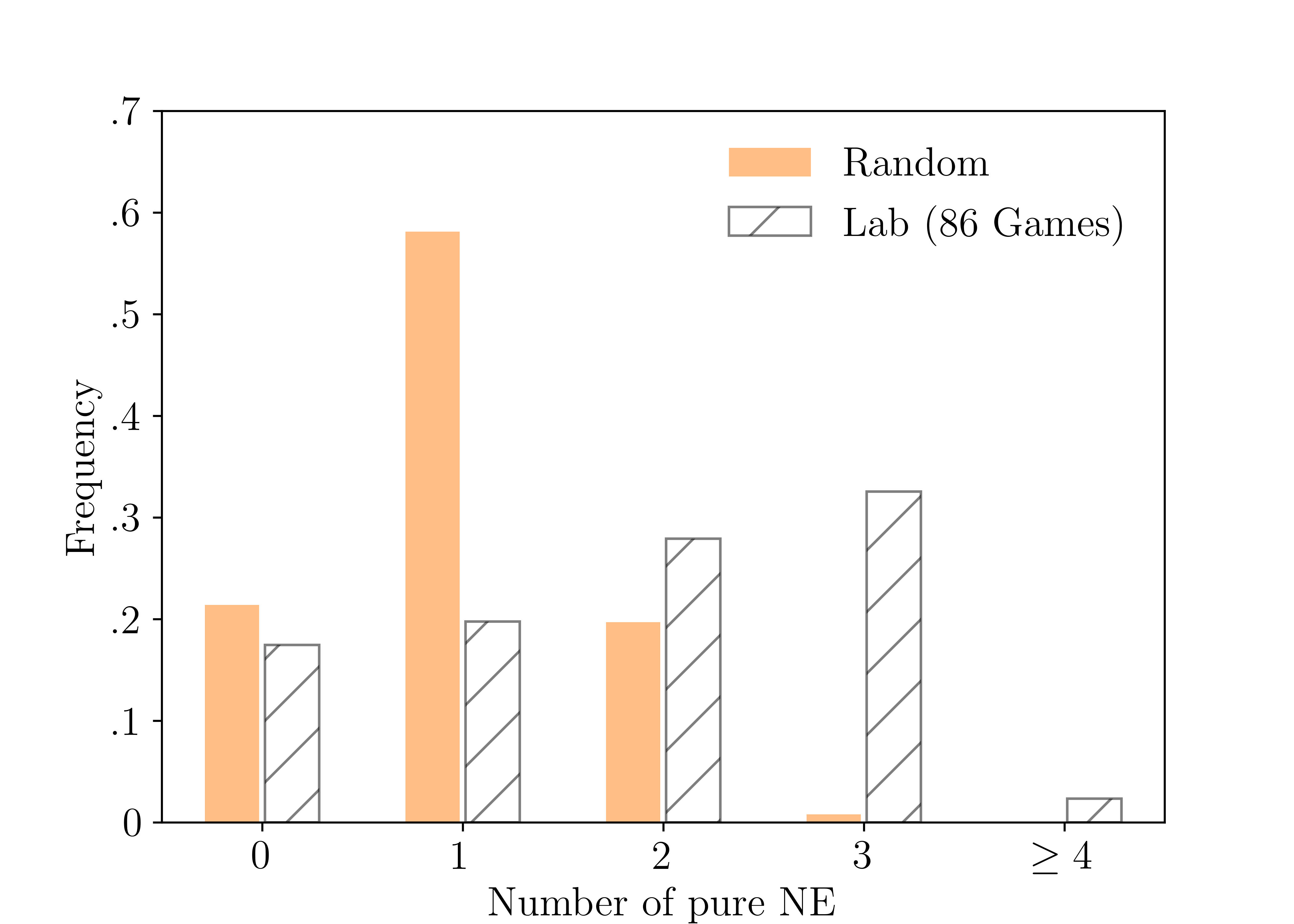}   
\includegraphics[width = 0.48\linewidth]{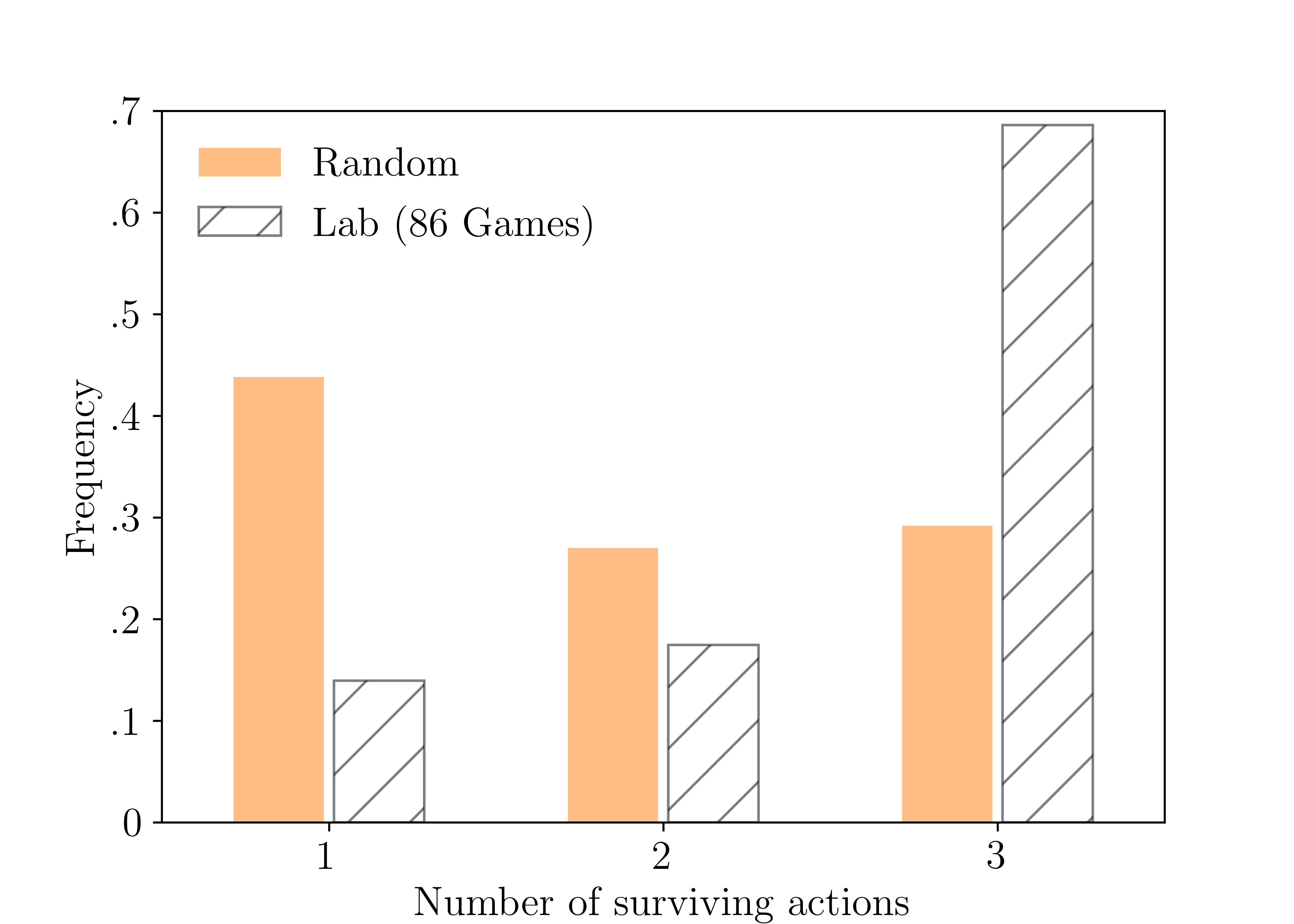}
\captionsetup{justification=centering}
\caption{Frequency of pure-strategy Nash equilibria ({\color{princetonorange}Left Panel}) and surviving actions ({\color{princetonorange}Right Panel}) in random and lab games}
\label{fig: lab vs random}
\end{figure}

Figure~\ref{fig: lab vs random} depicts the frequency of pure Nash equilibria on the left panel and the number of actions surviving elimination of strictly-dominated actions, in both random $3 \times 3$ games that we study, and those collected from the experimental literature. The figure already suggests that, if anything, randomly-generated games tend to be ``strategically simpler'' than experimental games---they feature fewer pure Nash equilibria and greater impact of the elimination procedure in that fewer actions survive.\footnote{In fact, our random games are also simpler than random games from \cite{fudenberg_predicting_2019}. Intuitively, indifferences that they allow for generate larger best-response sets and hinder the deletion of strictly-dominated strategies.} As we soon show, this increased simplicity translates to the three dimensions of complexity we inspect throughout the paper. 

\subsection{Alternative Classes of Games} 

In this subsection, we compare the three dimensions of complexity inspected throughout the paper across random games exhibiting commonly-studied structures, lab experimental games, and our uniformly random games. We focus the comparison on balanced games, which allows us to consider symmetric games as well.\footnote{We see qualitatively similar results for imbalanced games, see details in the Online Appendix.} 

In addition to standard games $G(n,n)$ with payoff matrices $(R,C)$ studied before, we also analyze randomly generated games with the following constraints:\footnote{The constraints imposed by these classes of games and the ones discussed next are ordinal in nature. In particular, their analysis is, again, distribution-free. In all our simulations, we randomly generate games as we do for our baseline games $G(n,n)$, but impose additional constraints corresponding to the various structures described here. \cite{pei_rationalizable_2019} consider point-rationalizability in similar classes of games, see also our discussion in Section \ref{Rationalizability}.}
\begin{enumerate}
    \item \textit{symmetric} games $G^{sym}(n)$, where payoff matrices are transposes of each other, $(R, R^T)$;
    \item \textit{potential} or \textit{common interest} games Potential$(n)$, in which it is possible to capture both players' payoff matrices by a single matrix called an ordinal potential, so that payoff matrices are identical, $(R, R)$;
    \item \textit{constant-sum} games Const-Sum$(n)$: $(R, c-R)$, where $c$ is an arbitrary constant matrix.
\end{enumerate}

Furthermore, we consider random games with strategic complementarities, i.e. games having nondecreasing best-response functions. Formally, a game has strategic complementarities if, given an order on players' strategies, an increase in one player's strategy induces other players to increase their strategies, see \cite{topkis1979equilibrium}, \cite*{bulow1985multimarket}, and \cite{vives1990nash}. 
Specifically, we consider random games as follows:
\begin{enumerate}
    \setcounter{enumi}{3}
    \item games \textit{with strategic complementarities} Strat-Complements$(n)$ with payoff matrices $(\bar{R},\bar{C})$, where random $\bar{R}$ is equal to $R$ conditional on it having a nondecreasing best-response function with respect to natural orderings $\{1,2,\ldots,n\}$ for both players and $\bar{C}$ is defined similarly;\footnote{The conditional distribution can be stated in terms of the unconditional one. It implies that best-response functions are chosen uniformly from 
    nondecreasing functions for given natural orderings.}
    \item \textit{symmetric} games \textit{with strategic complementarities} Strat-Complements$^{sym}(n)$ with payoffs $(\bar{R},\bar{R}^T)$, where $\bar{R}$ is defined as above.
\end{enumerate}

Figure~\ref{fig: alt_bal} depicts the three complexity dimensions we study for these simulated classes of games, and for the 86 symmetric lab games assembled by \cite{wright2014level} and discussed in the previous subsection. 


Panel~(\subref{fig: pi_alt_bal}) of the figure indicates that almost all of the above games yield even lower probability of dominance solvability. In particular, lab games correspond to a substantially lower probability of dominance solvability than random games of the same size. The one exception is games with strategic complementarities, which are somewhat more likely to be solvable. Nonetheless, even for those games, the probability of dominance solvability converges to zero rapidly, standing at less than $2\%$ for $8 \times 8$ games. 

Panel~(\subref{fig: I_alt_bal}) of the figure displays the number of iterations conditional on dominance solvability.  The random games we study are, to some extent, more complex in that respect, corresponding to a greater number of necessary iterations. Nonetheless, the differences are not vast. Furthermore, even for the ``simplest'' games in this respect, symmetric games with complementarities, the number of conditional iterations exceeds $2$ for $8 \times 8$ games.\footnote{We do not simulate larger games since the probability of dominance solvability is then very low across all structures we consider, and the number of simulations required to establish reasonable precision becomes prohibitively large.}


\begin{figure}[t!]
    \centering
    \begin{subfigure}[t]{.49\textwidth}
    \centering
    \includegraphics[width=\linewidth]{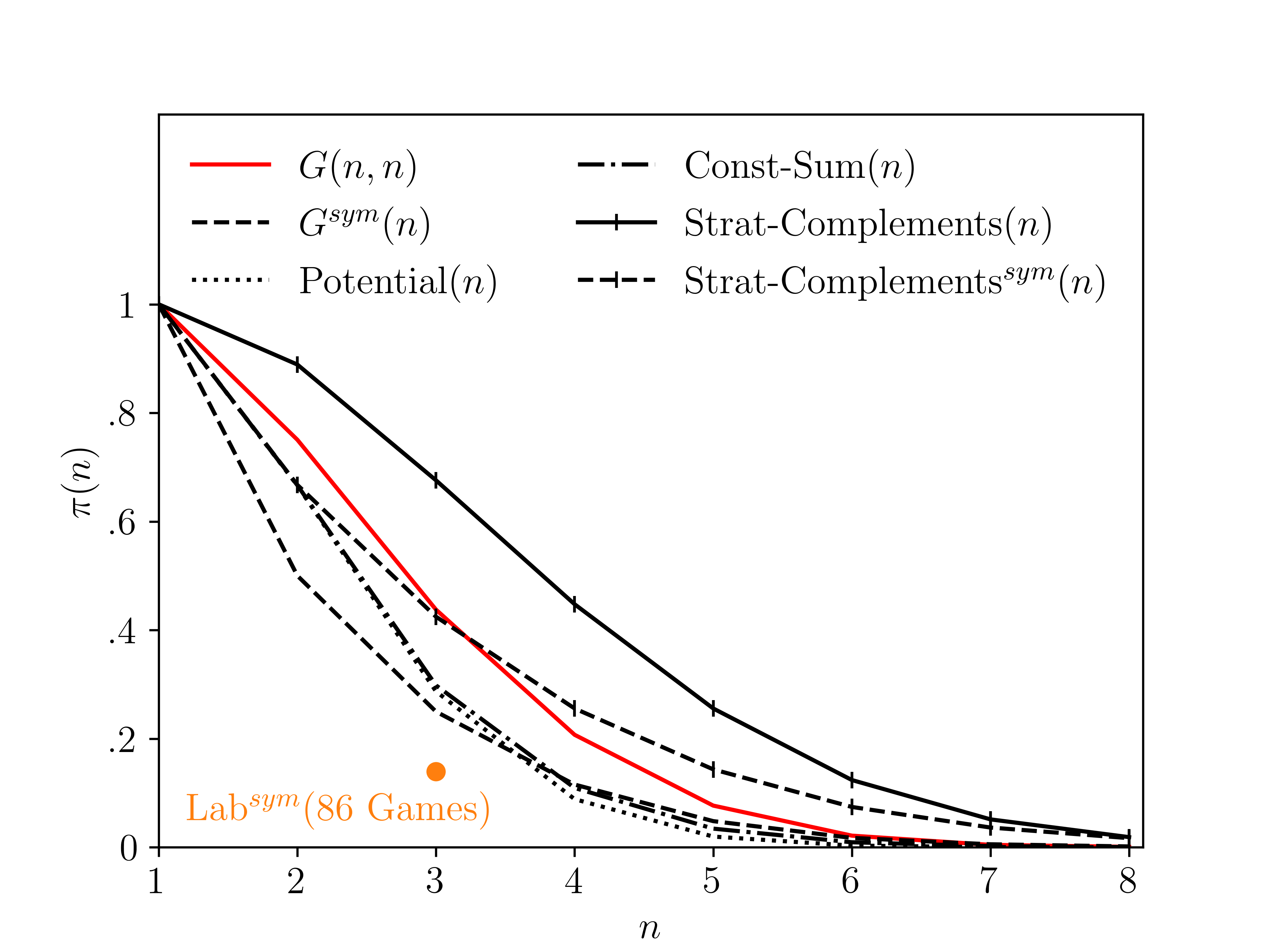}
            \caption{Probability of strict dominance}\label{fig: pi_alt_bal}
    \end{subfigure}
    \begin{subfigure}[t]{.49\textwidth}
    \centering
    \includegraphics[width=\linewidth]{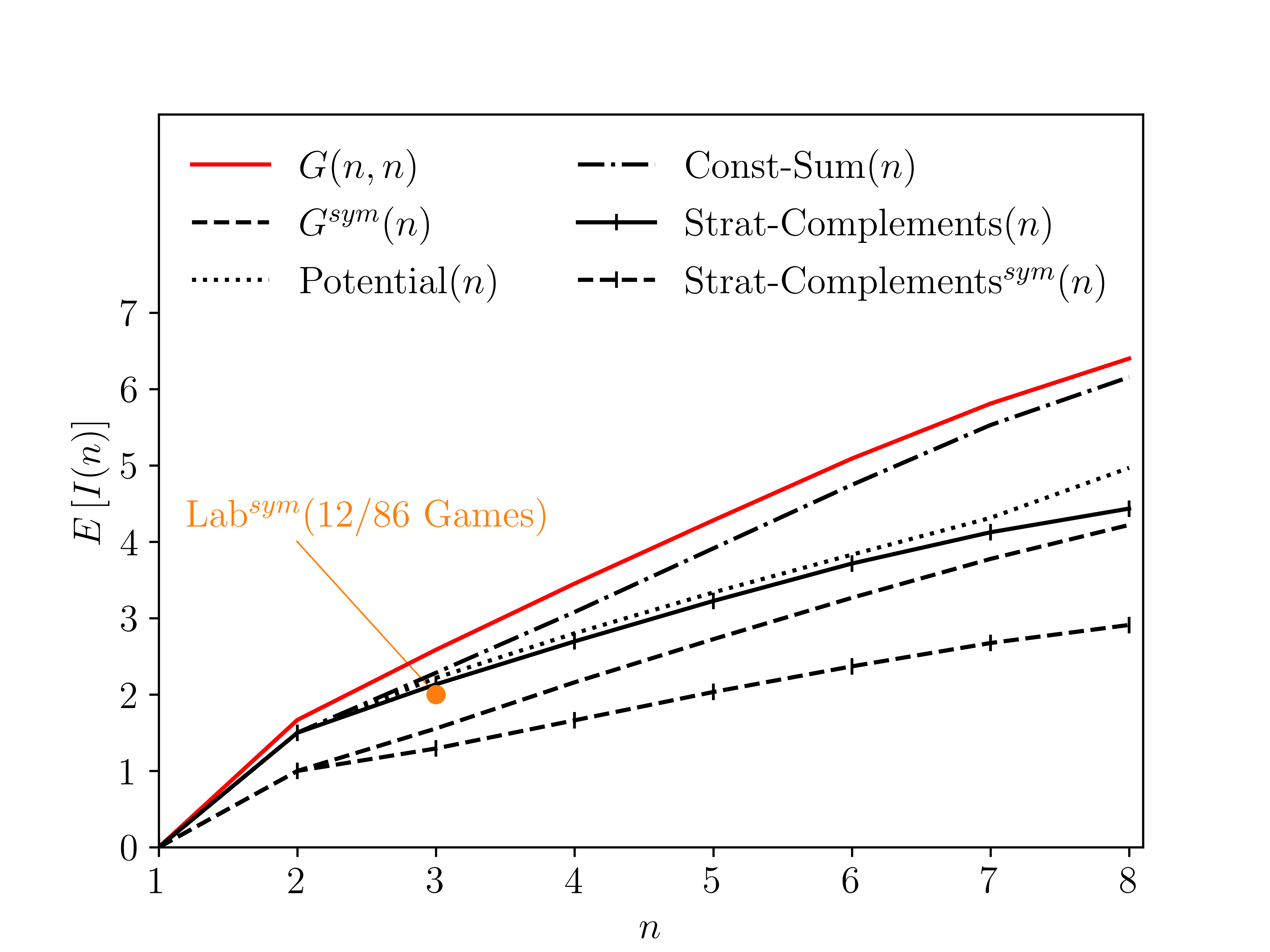}
    \caption{Conditional number of iterations}\label{fig: I_alt_bal}
    \end{subfigure}
    \par
    \medskip
    \par
    \begin{subfigure}[t]{\textwidth}
    \centering
    \vspace{0pt}
    \includegraphics[width=0.49\linewidth]{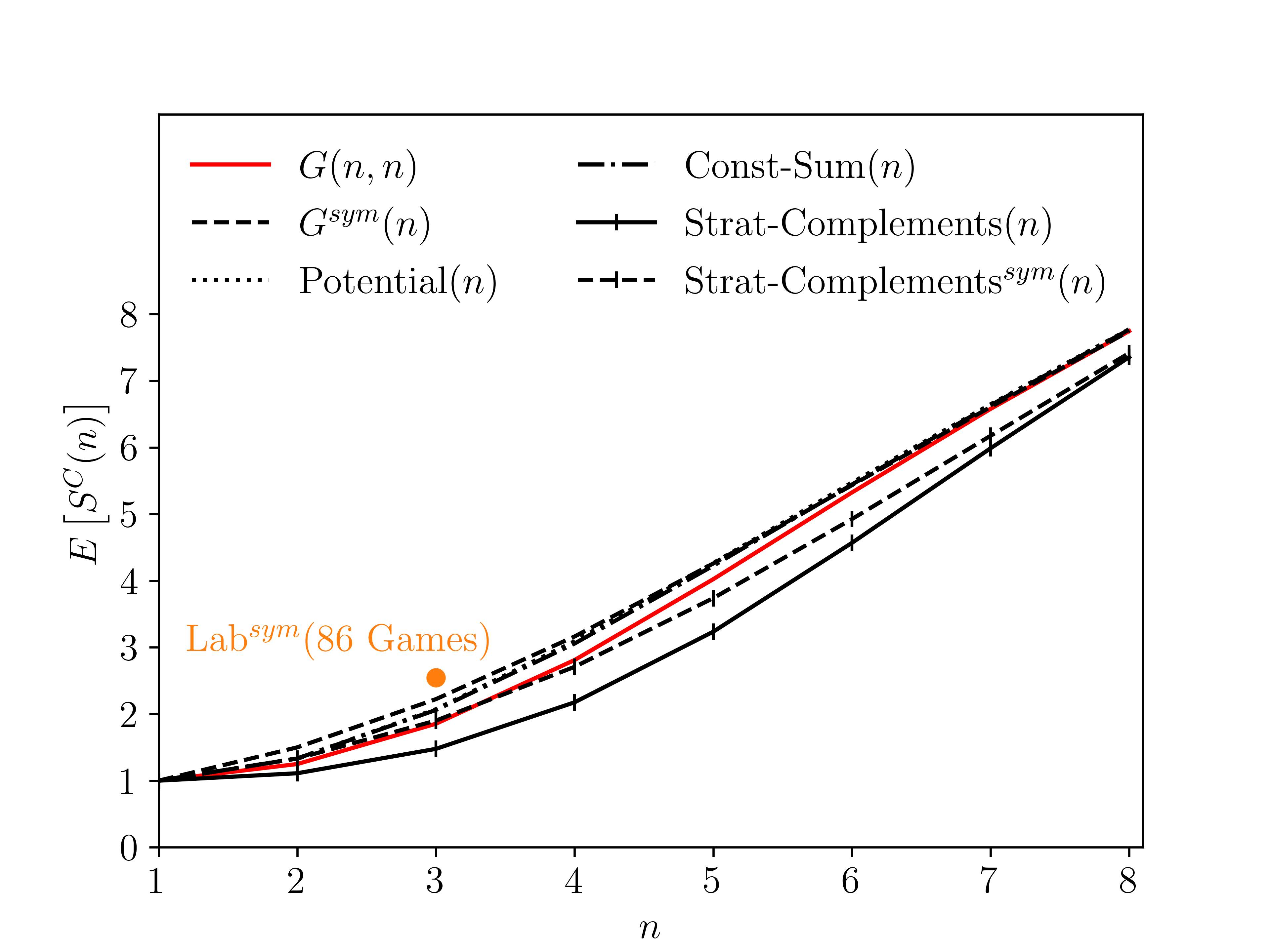}
    \caption{Surviving actions}\label{fig: S_alt_bal}
    \end{subfigure}
    \captionsetup{justification=centering}
    \caption{Three dimensions of dominance solvability in $n \times n$ games for alternative distribution assumptions}
    \label{fig: alt_bal}
    \end{figure}

Panel~(\subref{fig: S_alt_bal}) of the figure illustrates the number of surviving actions for either player when the iterated elimination procedure terminates. We see small differences across the different game structures. As already mentioned, the elimination procedure is somewhat less effective in lab games relative to the random games we study. It is also slightly less effective than in the other classes of games analyzed here. The number of surviving actions increases with the size of the game. For $8 \times 8$ games, in expectation, there is nearly nothing eliminated through the iterative procedure in either of the game classes we consider.


\section{Rationalizability and Mixed Strategies} \label{Rationalizability}


Throughout, we focus on elimination of strictly-dominated actions and consider domination only via \textit{pure }actions. Our notion of strict-dominance solvability is closely related to the rationalizability notion proposed by \cite{borgers_pure_1993}. In contrast to the traditional notion \citep{bernheim_rationalizable_1984, pearce_rationalizable_1984}, he considers only players’ ordinal preferences over strategy profiles to be common knowledge, but not their cardinal preferences. \cite{borgers_pure_1993} characterizes this ``robust'' notion of rationality in terms of a pure-strategy dominance property that, for generic games with distinct payoffs, coincides with the standard pure-strategy dominance relation. Furthermore, he shows that common belief in his rationality notion is outcome equivalent to the procedure of iterated elimination of dominated strategies.

Having said that, experimental evidence suggests that mixed strategies are more cognitively demanding (see \citealp{erev_predicting_1998, shachat_mixed_2002}). Our analysis of \cite{fudenberg_predicting_2019}'s data is consistent with this assertion---their experimental participants were roughly double as likely to play a dominated action when domination was via mixed strategies (see details in the Online Appendix).   
Such ``mistakes'' are likely to be compounded through the iterative elimination procedure, where each step feeds into the next. The suspicion that domination by mixed strategies is more challenging to identify is in line with the focus of much of the robust mechanism design literature on pure-strategy dominance. 

Nonetheless, there is a strong link between the set of surviving actions we identify and the set of \textit{rationalizable} actions, which account for domination by mixed strateges. Recall that the iterative elimination of actions strictly dominated by any arbitrary mixed strategy generates the set of \textit{rationalizable} actions. Since some actions may be strictly dominated only by mixed strategies and not by pure strategies, the set of rationalizable actions is, in general, a subset of the set of actions surviving iterated elimination of actions strictly dominated by pure actions that we study. The literature often considers the iterative elimination of never best responses against (surviving) pure strategies. This procedure culminates in a set of actions that are commonly termed \textit{point-rationalizable} \citep{bernheim_rationalizable_1984}.\footnote{Point-rationalizability is an ordinal concept, and hence its analysis is distribution-free.} Naturally, an action might not be a best response against any pure strategy and still be a best response with some (non-degenerate) belief about the opponent's strategy. Therefore, the set rationalizable actions---and thus, our set of actions surviving iterated elimination of actions strictly dominated by pure actions---is a superset of the set of point-rationalizable actions.



\cite{weinstein_effect_2016} considers deterministic games without indifferences and shows that, for sufficiently risk-averse players, the set of rationalizable actions coincides with the set of pure-strategy rationalizable actions that we consider. For sufficiently risk-loving players, the set of rationalizable actions coincides with the set of point-rationalizable actions. \cite{pei_rationalizable_2019} translate this result to random games.\footnote{They demonstrate existence of payoff distributions for which the set of rationalizable actions coincides with the set of pure-strategy rationalizable actions with high probability, and payoff distributions for which the set of rationalizble actions coincides with the set of point-rationalizable actions with high probability.} They also obtain the distribution of the number of point-rationalizable actions. In particular, for $m \times n$ random games, $m \leq n$, there is a unique point-rationalizable action profile with probability $\frac{m+n-1}{mn}=\Theta\left(m^{-1}\right)$. 
 
\cite{pei_rationalizable_2019}'s analysis focuses mostly on identifying the distribution of the number of point-rationalizable actions. The object of their study is then best responses and the techniques utilized, as well as their results, are different than ours. \cite{pei_rationalizable_2019} are silent about the iterative nature of point-rationalizability. Nonetheless, their bounds combined with ours offer insights on the probability of mixed-strategy dominance solvability. As mentioned, mixed-strategy dominance-solvable games are a superset of our pure-strategy dominance-solvable games and a subset of games with a unique point-rationalizable action profile. Since we showed that $\pi(m,n)=n^{-\Theta(m)}$, it follows that the probability of mixed-strategy dominance solvability is within the interval $[n^{-\Theta(m)},\Theta\left(m^{-1}\right)]$. 
Similarly, we can obtain asymptotic bounds on the number of Column's mixed-strategy rationalizable actions for any $m(n) \leq n$. For balanced $n \times n$ games, \cite{pei_rationalizable_2019} show that a lower bound on the number of point-rationalizable Column actions is given by $\sqrt{\pi n}/2$.\footnote{They also show that, for an arbitrary payoff distribution with a finite third moment, the number of rationalizable actions is close to $n$ with high probability as $n$ grows.} The resulting range for Column's mixed-strategy rationalizable actions is then $[\sqrt{\pi n}/2,n]$. For $m \times n$ games with fixed $m$, the bound on the mixed-strategy rationalizable Column's actions can be directly extended and the resulting range is $[\sqrt{\pi m/2},(\ln{n})^{m-1}/(m-1)!]$.

Figure~\ref{fig: mixed_bal} illustrates these bounds for $n \times n$ games and depicts our simulated variables of interest considering mixed-strategy dominance solvability, or rationalizability, for uniform and normal distributions. In addition, we consider a transformation of the uniform distribution using \cite{fudenberg_predicting_2019}'s estimated risk parameter, $\alpha_{FL}=0.41$.\footnote{Due to computational limitations, we use $10^{4}$ simulations when analyzing mixed strategies. The corresponding figure for imbalanced $3 \times n$ games is in the Online Appendix.} 

\begin{figure}[t!]
    \centering
    \begin{subfigure}[t]{.49\textwidth}
    \centering
    \includegraphics[width=\linewidth]{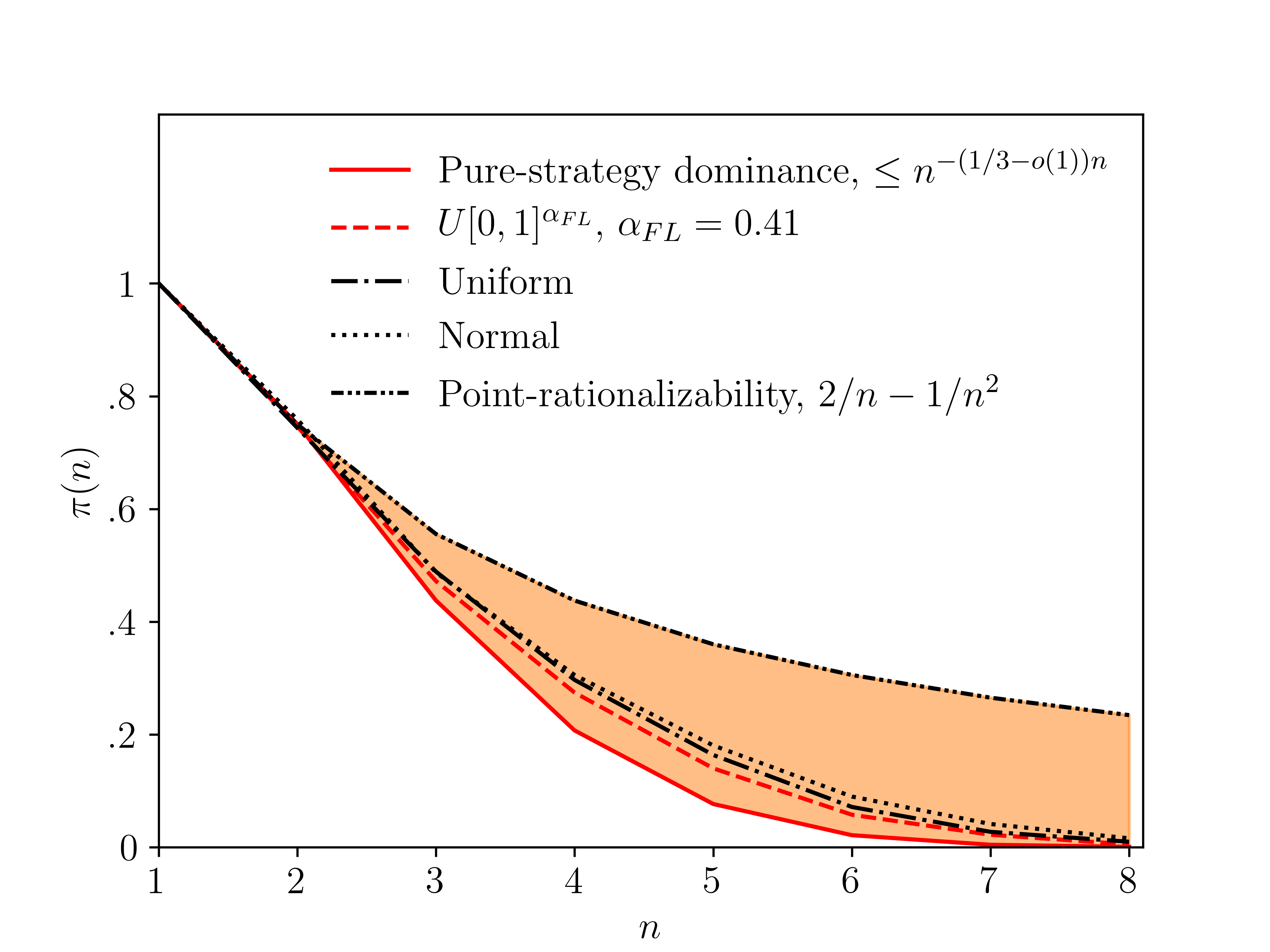}
            \caption{Probability of strict dominance}\label{fig: pi_mixed_bal}
    \end{subfigure}
    \begin{subfigure}[t]{.49\textwidth}
    \centering
    \includegraphics[width=\linewidth]{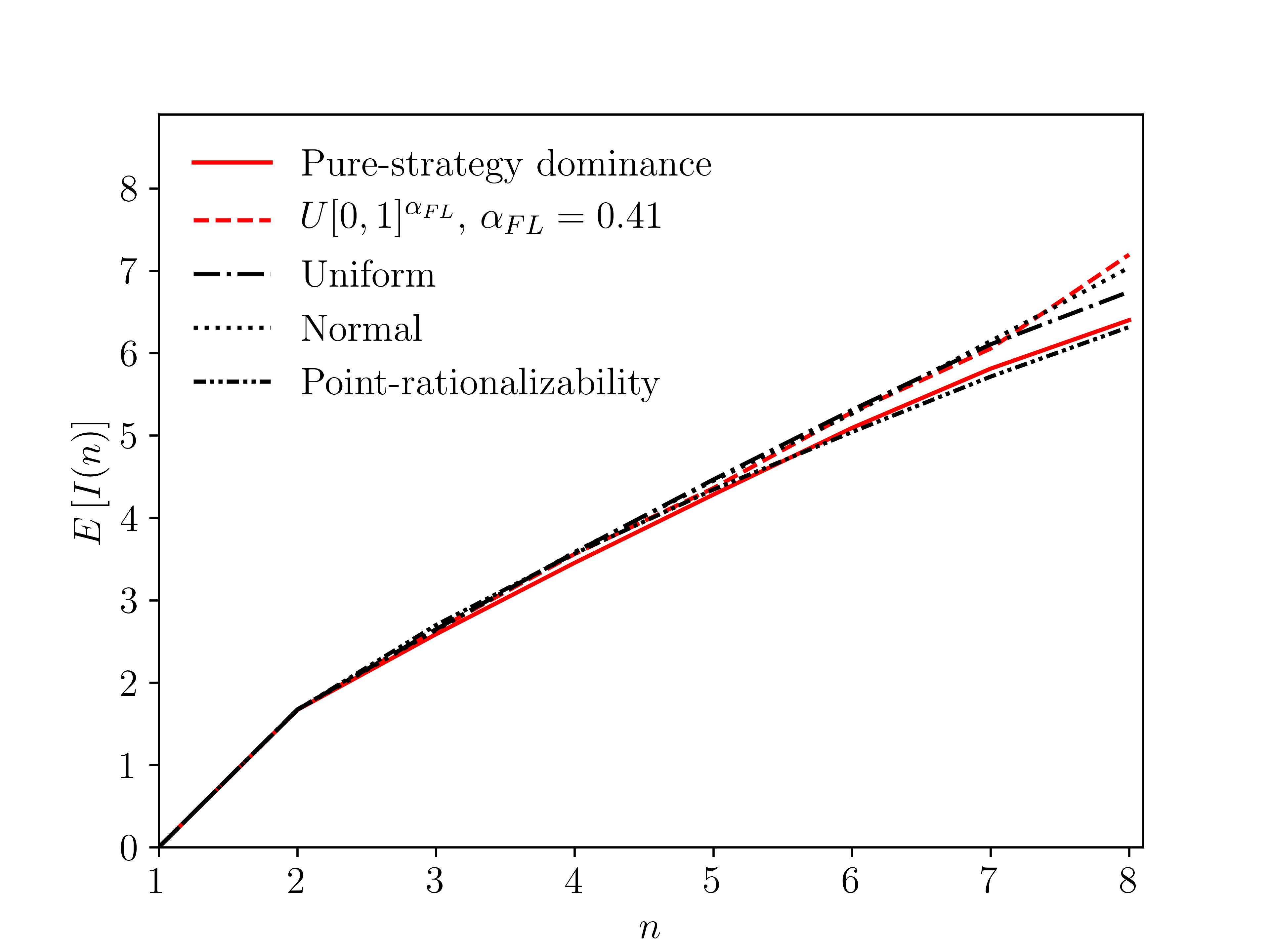}
    \caption{Conditional number of iterations}\label{fig: I_mixed_bal}
    \end{subfigure}
    \par
    \medskip
    \par
    \begin{subfigure}[t]{\textwidth}
    \centering
    \vspace{0pt}
    \includegraphics[width=0.49\linewidth]{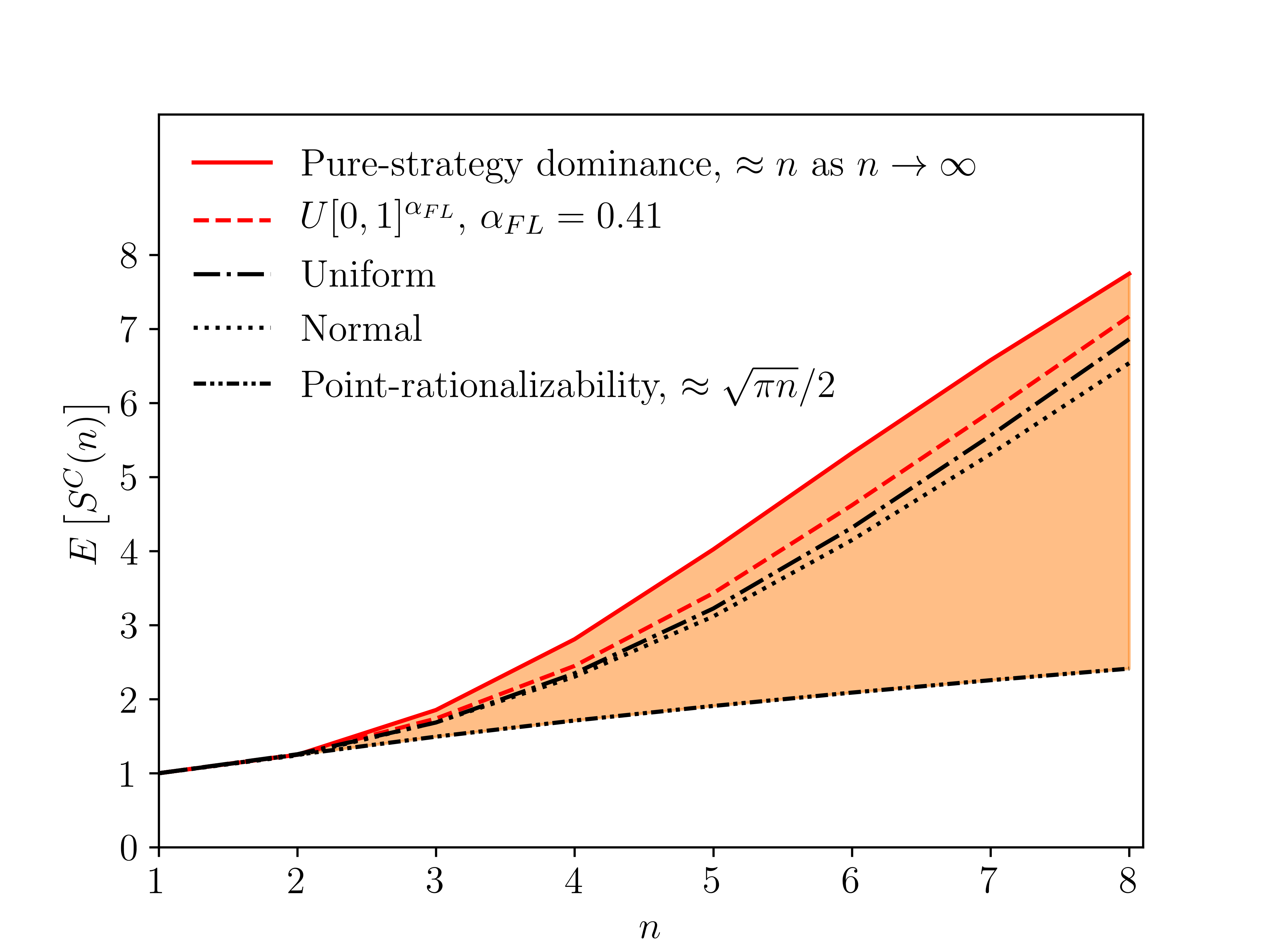}
    \caption{Surviving actions}\label{fig: S_mixed_bal}
    \end{subfigure}
    \captionsetup{justification=centering}
    \caption{Three dimensions of mixed-strategy dominance solvability in balanced $n \times n$ games, where $\alpha_{FL}=0.41$ for \textit{randomly-generated} games is estimated in \cite{fudenberg_predicting_2019}}
    \label{fig: mixed_bal}
    \end{figure}

The figure illustrates that, for these payoff distributions, both the probability of mixed-strategy solvability and the expected number of Column's rationalizable actions are close to those obtained for pure-strategy dominance. This suggests that commonly-used distributions imply sufficient risk aversion to yield similar outcomes from mixed-strategy and pure-strategy dominance. Furthermore, the number of iterations required under the various solvability notions, and accounting for the three payoffs distributions we consider, are nearly identical. In particular, mixed-strategy solvable games appear to require many iterations, reinforcing our results about complexity of solvable games. 


\section{Conclusions}
This paper provides a characterization of several features resulting from iterative elimination of strictly-dominated actions in general random games. We show that \textquotedblleft simple\textquotedblright\ games, ones that are dominance solvable in fairly few steps, are rare. Iterated elimination can help players simplify the game only when players' action sets are sufficiently imbalanced. These insights remain even when restricting attention to various classes of games commonly studied in the literature, or when allowing for domination by mixed strategies. From a technical perspective, we show the usefulness of several new methods from probabilistic combinatorics. 




\section*{Appendix -- Proofs}
\addcontentsline{toc}{section}{Appendix -- Proofs}

\begin{delayedproof}{prop: p_2_n}
We prove the proposition's three statements in turn.
\begin{enumerate}[wide, labelindent=0pt]
\item By Lemma~\ref{lem: u_2_n},  
\begin{align*}
\pi(2,n) &= \Pr(S^R(2,n)=1) = \sum_{k=1}^n \Pr\left(U^C(2,n)=k\right) \cdot \Pr\left(S^R(2,n)=1 \mid
U^C(2,n)=k\right) \\
&= \sum_{k=1}^n \Pr\left(U^C(2,n)=k\right) \cdot \Pr\left(U^R(2,k)=1\right)
= \frac{2}{n!} \cdot \left.\left(\sum_{k=1}^n s(n, k) \cdot
x^k\right)\right|_{x=1/2},
\end{align*}
where the third equality follows from the ordinal randomness assumption. By using
the Pochhammer symbol $x^{(n)} \equiv x(x+1)\ldots(x+n-1)$, we get 
\begin{equation*}
\pi(2,n) = \frac{2}{n!} \cdot \left. x^{(n)} \right|_{x=1/2} = \frac{2}{\Gamma(n+1)} \cdot \frac{\Gamma(n+1/2)}{\Gamma(1/2)},
\end{equation*}
where $\Gamma(\cdot)$ is the gamma function with $\Gamma(1/2)=\sqrt{\pi}$,
the first equality follows from the Proposition 1.3.7 in \cite%
{stanley_permutations_2015}, and the second equality is standard (e.g., 
\citealp{srivastava_generalizations_2013}). By introducing the Wallis ratio 
\begin{equation*}
W(n) \equiv \frac{(2 n-1) ! !}{(2 n) ! !}=\frac{\Gamma\left(n+1/2\right)}{%
\Gamma(1/2) \Gamma(n+1)},
\end{equation*}
we finally obtain $\pi(2,n) = 2W(n)=\dfrac{(2n-1)!!}{2^{n-1}\cdot n!}$.
\item For any $n \geq 1$, by using the identity $\Gamma(x+1)=x\Gamma(x)$,  
\begin{equation*}
\pi(2,n+1) = \frac{n+1/2}{n+1} \cdot \pi(2,n) <\pi(2,n).
\end{equation*}

\item By Stirling's formula applied to the gamma function,  
\begin{align*}
\lim \limits_{n \to \infty}\frac{\Gamma(n+\alpha)}{\Gamma(n) \cdot n^\alpha}%
=1,
\end{align*}
\noindent so that  
\begin{equation*}
\lim \limits_{n \to \infty} n^{1/2} \cdot \pi(2,n) = \frac{2}{\sqrt{\pi}}
\cdot \lim \limits_{n \to \infty} \frac{\Gamma(n+1/2) \cdot n^{1/2}}{%
\Gamma(n+1)} = \frac{2}{\sqrt{\pi}}. \qedhere
\end{equation*}
\end{enumerate}
\end{delayedproof}

\begin{delayedproof}{prop: i_2_n}
We prove the proposition's three statements in order.

\begin{enumerate}[wide, labelindent=0pt]
\item \label{prop: i_2_n_1} By using probabilities derived in Subsection~\ref{subsection: i_2_n},
  
\begin{align*}
\E\left[I(2,n)\right] &= \sum_{i=1}^3 i \cdot \Pr(I(2,n)=i) = 1 \cdot \frac{\sqrt{\pi}}{2^{n}}\cdot \frac{\Gamma(n)}{\Gamma(n+1/2)} + 2
\cdot \frac{n+2^{n-1}-2}{2^n} \cdot \sqrt{\pi} \cdot \frac{\Gamma(n)}{%
\Gamma(n+1/2)} \\
& + 3 \cdot \left(1-\frac{n+2^{n-1}-1}{2^n} \cdot \sqrt{\pi} \cdot \frac{%
\Gamma(n)}{\Gamma(n+1/2)}\right) = 3-\frac{n+2^{n-1}}{2^n} \cdot \sqrt{\pi} \cdot \frac{\Gamma(n)}{%
\Gamma(n+1/2)}.
\end{align*}

\item Note that for any $n \geq 1$,  
\begin{equation*}
A(n+1) \equiv \frac{n+1+2^{n}}{2^{n+1}} \cdot \sqrt{\pi} \cdot \frac{%
\Gamma(n+1)}{\Gamma(n+3/2)} = \frac{n+1+2^n}{2n+2^{n}} \cdot \frac{n}{n+1/2}
\cdot A(n) <A(n),
\end{equation*}
\noindent so that $\E\left[I(2,n)\right]$ is strictly increasing in $n$ by~%
\ref{prop: i_2_n_1}. 

\item From Stirling's formula, we have  $\lim \limits_{n \to \infty} n^{1/2} \cdot\left(3-\E\left[I(2,n)\right]\right) = \dfrac{\sqrt{\pi}}{2}.$\qedhere

\end{enumerate}
\end{delayedproof}

\begin{delayedproof}{prop: s_c_2_n}
The proof of this statement is similar to \cite{hwang_convergence_1998} and
uses the Berry-Esseen theorem to find the convergence rate in the stated central limit result. The difference is that the problem does not belong to the exp-log class immediately. 
For simplicity of notation, we let
\begin{align*}
\mu_n &\equiv \E\left[S^C(2,n)\right]=\ln{n}+\gamma + o(1),\\
\sigma_n &\equiv \sqrt{\Var\left[S^C(2,n)\right]}=\sqrt{\ln{n}}- \left(%
\frac{\pi^2}{12}-\frac{\gamma}{2}\right)\cdot \frac{1}{\sqrt{\ln{n}}}%
+o\left(\frac{1}{\sqrt{\ln{n}}}\right)\text{, and}\\
\varphi_n(t)&=\sum_{j=1}^{n} \Pr\left(S^C(2,n)=j\right) \cdot e^{it(j-\mu_n)/\sigma_n},
\end{align*}
\noindent where $\varphi _n(t)$ denotes the characteristic function of the normed variable $(S^C(2,n)-\mu_n)/{\sigma_n}$. The asymptotic expressions above are derived in the Online Appendix.

\begin{Berry}
Let $F ( x )$ be a non-decreasing function, $G(x)$ a differentiable function
of bounded variation on the real line, $\varphi ( t )$ and $\gamma ( t )$
the corresponding Fourier-Stieltjes transforms: 
\begin{equation*}
\varphi ( t ) = \int _ { - \infty } ^ { \infty } e ^ { i t x } d F ( x ) ,
\quad \gamma ( t ) = \int _ { - \infty } ^ { \infty } e ^ { i t x } d G ( x
).
\end{equation*}
Suppose that $F ( - \infty ) = G ( - \infty )$, $F ( \infty ) = G ( \infty )$%
, $T$ is an arbitrary positive number, and $\left| G ^ { \prime } ( x )
\right| \leq A$. Then for every $b > 1 / ( 2 \pi )$ we have 
\begin{equation*}
\sup _ { - \infty < x < \infty } | F ( x ) - G ( x ) | \leq b \int _ { - T }
^ { T } \left| \frac { \varphi ( t ) - \gamma ( t ) } { t } \right| d t + r
( b ) \frac { A } { T },
\end{equation*}
where $r(b)$ is a positive constant depending only on $b$.
\end{Berry}

We proceed in two steps. In step 1, we reformulate the problem by using the
Berry-Esseen inequality. In step 2, we calculate the characteristic function
and establish the result.\bigskip

\noindent {\textit{Step 1. Reformulated problem}}

Take $G ( x ) = \Phi ( x )$ (so that $A = 1 / \sqrt { 2 \pi }$) and $T
= T _ { n } = c\sigma _ { n }$, where $c>0$ is a sufficiently small
constant. By the Berry-Esseen inequality, it will be sufficient to prove
that 
\begin{align*}
J _ { n } = \int _ { - T _ { n } } ^ { T _ { n } } \left| \frac { \varphi _
{ n } ( t ) - e ^ { - \frac { 1 } { 2 } t ^ { 2 } } } { t } \right| d t = 
\mathcal{O} \left( \frac{1}{\sqrt{\ln{n}}} \right).
\end{align*}

\noindent {\textit{Step 2. Characteristic function}}
\begin{align*}
\varphi_n(t)=\sum_{j=1}^{n} \Pr\left(S^C(2,n)=j\right) \cdot
e^{it(j-\mu_n)/\sigma_n} =A_n(t)+B_n(t),
\end{align*}
where, following algebraic simplification, we get
\begin{align*}
A_n(t) &\equiv e^{-it\mu_n/\sigma_n} \cdot\frac{1}{\Gamma
\left(e^{it/\sigma_n} \right)}\cdot \frac{\Gamma\left(n + e^{it/\sigma_n}
\right)}{\Gamma(n+1)}=e^{-\frac{t^2}{2} + \mathcal{O}%
\left(\frac{|t|+|t|^3}{\sqrt{\ln{n}}}\right)}\text{, and} \\
B_n(t) &\equiv 2 \cdot e^{-it\mu_n/\sigma_n} \cdot\left(\frac{e^{it/\sigma_n} 
}{\Gamma(1/2)}\cdot \frac{\Gamma(n+1/2)}{\Gamma(n+1)} - \frac{1}{\Gamma
\left(e^{it/\sigma_n}/2 \right)}\cdot \frac{\Gamma\left(n +
e^{it/\sigma_n}/2 \right)}{\Gamma(n+1)} \right)\\
&=\frac{2}{\sqrt{\pi}} \cdot e^{-\left(it \cdot\sqrt{\ln{n}} + 
\mathcal{O}\left(\frac{|t|}{\sqrt{\ln{n}}}\right)\right)} \cdot e^{\mathcal{%
O}\left(\frac{|t|}{\sqrt{\ln{n}}}\right)} \cdot e^{-\frac{1}{2} \cdot \ln{n%
} + \mathcal{O}\left(\frac{1}{n}\right)} \\
&- 2 \cdot e^{-\left(it \cdot\sqrt{\ln{n}} + \mathcal{O}\left(\frac{|t|}{%
\sqrt{\ln{n}}}\right)\right)} \cdot e^{-\mathcal{O}\left(1\right)} \cdot
e^{- \frac{1}{2}\cdot \ln{n} + \frac{1}{2}it\cdot \sqrt{\ln{n}} + \mathcal{%
O}\left(t^2\right)}.
\end{align*}
The Online Appendix provides omitted details behind the above derivations.

Note that $B_n(0)=0$ and $B_n(s)=\mathcal{O}\left(\frac{e^{\tau \cdot \sqrt{\ln{n}}}}{%
n^{1/2}}\right)$ uniformly for $|s| \leq \tau$, $s \in \mathcal{C}$, for some
fixed $\tau>0$. By denoting $\kappa_n \equiv \frac{n^{1/2}}{e^{\tau \cdot \sqrt{\ln{n}}}}$ for convenience, we can rewrite $B_n(s)=\mathcal{O}\left(\frac{1}{\kappa_n}\right)$ for $|s| \leq \tau$. Furthermore, by taking a small ball around the origin we easily obtain $B_n(s)=\mathcal{O}\left(\frac{|s|}{\kappa_n}\right)$ for $|s| \leq c<\tau$, where sufficiently small $c>0$ can be taken to be less than $\tau$. Consequently,
\begin{align*}
\varphi _ { n } ( t )=A_n(t) + B_n(t) = e^{-\frac{t^2}{2} + \mathcal{O}\left(\frac{|t|+|t|^3%
}{\sqrt{\ln{n}}}\right)} + \mathcal{O}\left(\frac{|t|}{\kappa_n \cdot \sqrt{\ln n}}\right),
\end{align*}
for $| t | \leq T _ { n }=c\sigma_n$.


Based on the obtained approximation, we can follow the proof of Theorem 1 in \cite%
{hwang_convergence_1998}. That is, using the inequality $\left| e ^ { w } -
1 \right| \leq | w | e ^ { | w | }$ for all complex $w$, we obtain 
\begin{align*}
\left| \frac { \varphi _ { n } ( t ) - e ^ { - \frac { 1 } { 2 } t ^ { 2 } } 
} { t } \right| &= \mathcal{O} \left( \left( \frac { 1 + t ^ { 2 } } { \sqrt{%
\ln{n}} } \right) \exp \left( - \frac { t ^ { 2 } } { 2 } + \mathcal{O}
\left( \frac { | t | + | t | ^ { 3 } } { \sqrt{\ln{n}} } \right) \right) + 
\frac { 1 } { {\kappa_n \cdot \sqrt{\ln{n}}} } \right) \\
&= \mathcal{O} \left( \left( \frac { 1 + t ^ { 2 } } { \sqrt{\ln{n}} }
\right) e ^ { - \frac { 1 } { 4 } t ^ { 2 } } + \frac { 1 } { {\kappa_n
\cdot \sqrt{\ln{n}}} } \right) \quad \left( | t | \leq T _ { n } \right),
\end{align*}
for sufficiently small $0<c<\tau$. 

Thus, 
\begin{align*}
J_n &= \int _ { - T _ { n } } ^ { T _ { n } } \left| \frac { \varphi _ { n }
( t ) - e ^ { - \frac { 1 } { 2 } t ^ { 2 } } } { t } \right| d t = \mathcal{O} \left( \frac { 1 } { \sqrt{\ln{n}} } \int _ { - T _ { n } }
^ { T _ { n } } \left( 1 + t ^ { 2 } \right) e ^ { - \frac { 1 } { 4 } t ^ {
2 } } d t + \frac { 1 } { {\kappa_n} } \right) \\
&= \mathcal{O} \left( \frac { 1 } { \sqrt{\ln{n}} } + \frac { 1 } { {%
\kappa_n} } \right) = \mathcal{O} \left( \frac { 1 } { \sqrt{\ln{n}} } \right),
\end{align*}
\noindent because $\lim\limits_{n \to \infty} \dfrac{\sqrt{\ln{n}}}{\kappa_n} = \lim\limits_{n \to \infty} 
\dfrac{\sqrt{\ln{n}} \cdot e^{\tau \cdot \sqrt{\ln{n}}}}{n^{1/2}} 
= \lim \limits_{n \to \infty} \dfrac{n \cdot e^{\tau \cdot n}}{e^{n^2/2}} =0$. 
\qedhere
\end{delayedproof}

\begin{delayedproof}{lem: u_m_n} 
We prove the lemma's three statements in turn.
\begin{enumerate}[wide, labelindent=0pt]
\item For any $k\geq1$, if there are at most $k$ undominated Column's
actions for some $g(m+1,n)$, then there are at most $k$
undominated Column's actions for the corresponding $g(m, n)$
constructed from $g(m+1,n)$ by removing the $(m+1)$-th action of Row.
Therefore, 
\begin{equation*}
\Pr(U^C(m+1,n) \leq k) \leq \Pr(U^C(m,n) \leq k) \text{ for any }%
k=1,2,\ldots,n.
\end{equation*}
To conclude the proof, $\Pr(U^C(m+1,n) \leq 1) = n^{-m} <
n^{-(m-1)}=\Pr(U^C(m,n) \leq 1)$.
    \item For any $m \geq 2$, consider Column's payoff matrix $C$, 
where rows $\{c_{1\cdot}, c_{2\cdot}, \ldots, c_{m\cdot}\}$ are \textit{i.i.d.} uniform on $S_n$. Let $p^C(m,n)$ be the probability that any particular column is undominated. 

Without loss of generality, focus on the first column $c_{\cdot1}$. There are $n$ possible values for $c_{m1}$. Suppose that $c_{m1}=k$ for some $k \in \lbrack n \rbrack$. This happens with probability $\frac{1}{n}$. Then, the first action is undominated if and only if it is undominated in the $(m-1)\times(n-k+1)$ game formed by removing all columns $j$ with $c_{mj}<k$ and the last row. It happens with probability $p^C(m-1,n-k+1)$. By summing over all possible $k$, $k \in \lbrack n \rbrack$, we easily get
\begin{equation*}
p^C(m,n)=\frac{\sum_{k=1}^n p^C(m-1,n-k+1)}{n}=\frac{\sum_{k=1}^n p^C(m-1,k)}{n}.
\end{equation*}
To conclude, by linearity of the expectation, $\E\left[U^C(m,n)\right]= n \cdot p^C(m,n)$.

Component-wise monotonicity follows immediately from the first part of this lemma and the recurrence relation itself.
\item We show, using induction on $m \geq 2$, that for any $n \geq 1$,
\begin{equation*}
\E\left[U^C(m,n)\right] \leq \frac{(\ln{n})^{m-1}}{(m-1)!}+\frac{(\ln{n})^{m-2}}{(m-2)!}+\ldots+\frac{(\ln{n})^{2}}{2}+\ln{n}+1.
\end{equation*}
For $m=2$, $\E\left[U^C(2,n)\right]=H_n=1+\frac{1}{2}+\ldots+\frac{1}{n} \leq \ln{n}+1$ for any $n \geq 1$. Assume that this statement holds for $m \geq 2$. We prove it for $m+1$.

For any $x>0$, define 
\begin{equation*}
 f(x) \equiv \frac{(\ln{x})^{m-1}}{(m-1)!x}+\frac{(\ln{x})^{m-2}}{(m-2)!x}+\ldots+\frac{(\ln{n})^{2}}{2x}+\frac{\ln{x}}{x}+\frac{1}{x}.
\end{equation*}
The function $f(x)$ is strictly decreasing in $x>1$. 
By the recurrence relation, 
\begin{multline*}
\E\left[U^C(m+1,n)\right] = \frac{\E\left[U^C(m,1)\right]}{1}+\frac{\E\left[U^C(m,2)\right]}{2}+\ldots+\frac{\E\left[U^C(m,n)\right]}{n}\\
\leq f(1)+f(2)+\ldots+f(n)\leq f(1)+\int_1^n f(x)dx
=1+\left(\frac{(\ln{x})^{m}}{m!}+\frac{(\ln{x})^{m-1}}{(m-1)!}+\ldots+\ln{x}\right) \bigg|_{1}^n\\
=\frac{(\ln{n})^{m}}{m!}+\frac{(\ln{n})^{m-1}}{(m-1)!}+\ldots+\frac{(\ln{n})^{2}}{2}+\ln{n}+1,
\end{multline*}
as desired, where the the first inequality follows from the induction hypothesis and the second holds since $f(x)$ is strictly decreasing in $x>1$.

We now use induction on $m \geq 2$ to show that, for any $n \geq 1$,
\begin{equation*}
\E\left[U^C(m,n)\right] \geq \frac{(\ln{n})^{m-1}}{(m-1)!}.
\end{equation*}
For any fixed $m\geq 2$, $p^C(m,n) = \E\left[U^C(m,n)\right]/n$---the probability that the first Column's action is undominated---is decreasing in $n \geq 1$ by its definition.

For $m=2$, $\E\left[U^C(2,n)\right]=H_n=1+\frac{1}{2}+\ldots+\frac{1}{n} > \ln{n}$ for any $n \geq 1$. Assume that the desired statement holds for $m \geq 2$. We prove it for $m+1$.

For any $x>0$, define 
\begin{equation*}
 f(x) \equiv \frac{(\ln{x})^{m-1}}{(m-1)!x},\qquad \text{so that} \qquad
 f'(x)=-\frac{(\ln{x})^{m-1}}{(m-1)!x^2}+\frac{(m-1)(\ln{x})^{m-2}}{(m-1)!x^2}
\end{equation*}
is negative for $x>e^{m-1}$ and positive for $x<e^{m-1}$. Therefore, $f(x)$ has a unique (global) maximum at $x_{\max}=e^{m-1}$, so that
\begin{equation*}
    \max_{x>0} f(x) = f(x_{\max})=f(e^{m-1})=\frac{(m-1)^{m-1}}{(m-1)!e^{m-1}} \leq \frac{1}{e} \leq \frac{1}{2},
\end{equation*}
where the first inequality holds since, for any $m \geq 2$, by denoting $g(m) \equiv \frac{(m-1)^{m-1}}{(m-1)!e^{m-1}}$, we have $g(m+1)=\left(1+\frac{1}{m-1}\right)^{m-1} \cdot \frac{g(m)}{e}\leq g(m) \leq g(2)=\frac{1}{e}$. 

By our observations above, for any $i \leq \floor{e^{m-1}}-1$, we have
\begin{equation*}
\frac{\E\left[U^C(m,i)\right]}{i} = p^C(m,i) \geq p\left(m, \floor{e^{m-1}}\right) \geq f\left(\floor{e^{m-1}}\right) \geq f(x) \text{ for any }x\leq \floor{e^{m-1}},
\end{equation*}
\noindent where the first inequality follows since $p^C(m,n)$ decreases in $n$, the second follows from the induction hypothesis, and the third follows because $f(x)$ is increasing for $x<e^{m-1}$. Therefore, for any $i \leq \floor{e^{m-1}}-1$,
\begin{equation*}
\frac{\E\left[U^C(m,i)\right]}{i} \geq \int_{i}^{i+1}f(x)dx.
\end{equation*}
In addition, for $i=1$, we can, in fact, show that
\begin{equation*}
    \frac{\E\left[U^C(m,1)\right]}{1}-\int_{1}^{2}f(x)dx = 1 - \int_{1}^{2}f(x)dx \geq 1 -  \max_{x>0} f(x) \geq \frac{1}{2}.
\end{equation*}

Similarly, for any $i \geq \ceil{e^{m-1}}$, we have
\begin{equation*}
\frac{\E\left[U^C(m,i)\right]}{i} = p^C(m,i) \geq f\left(i\right) \geq f(x) \text{ for any }x\geq i,
\end{equation*}
where the first inequality follows from the induction hypothesis and the second holds because $f(x)$ is decreasing for $x>e^{m-1}$. Therefore, for any $i \geq \ceil{e^{m-1}}$,
\begin{equation*}
\frac{\E\left[U^C(m,i)\right]}{i} \geq \int_{i}^{i+1}f(x)dx.
\end{equation*}
Finally, for $i=\floor{e^{m-1}}$, as long as $m \geq 2$,
\begin{equation*}
\int_{\floor{e^{m-1}}}^{\ceil{e^{m-1}}}f(x)dx \leq \max_{x>0} f(x) \leq \frac{1}{2}.
\end{equation*}
To sum up,
\begin{align*}
    \frac{\E\left[U^C(m,1)\right]}{1} &\geq \int_{1}^{2}f(x)dx + \frac{1}{2} \geq \int_{1}^{2}f(x)dx +  \int_{\floor{e^{m-1}}}^{\ceil{e^{m-1}}}f(x)dx,\\
    \frac{\E\left[U^C(m,i)\right]}{i} &\geq \int_{i}^{i+1}f(x)dx \quad \text{for} \quad i \neq  \floor{e^{m-1}}.
\end{align*}
Therefore, by the recurrence relation and the set of inequalities above, 
\begin{align*}
\E\left[U^C(m+1,n)\right] &= \frac{\E\left[U^C(m,1)\right]}{1}+\frac{\E\left[U^C(m,2)\right]}{2}+\ldots+\frac{\E\left[U^C(m,n)\right]}{n}\\
&\geq \int_{1}^{n+1} f(x)dx > \int_{1}^{n} f(x)dx = \left(\frac{(\ln{x})^{m}}{m!}\right) \bigg|_{1}^n=\frac{(\ln{n})^{m}}{m!}.\qedhere
\end{align*}
\end{enumerate}
\end{delayedproof}

\setcounter{prop}{3}
\begin{spprop}
\label{prop:s_m_n_star}
Consider a random game $G(m,n)$. Then, for any $m \leq n$, 
\begin{equation*}
\Pr\left(S^R(m,n) < m \right) \leq m(m-1)\cdot\left(\dfrac{m}{n}\right)^{\frac{m-1}{4}}.
\end{equation*}
For any fixed $m$, an asymptotic bound can be improved to $\Pr\left(S^R(m,n) < m \right)=\mathcal{O}\left(n^{-\frac{m-1}{2}}\right)$.
\end{spprop}
\begin{proof}
Let $p(m,n)$ denote the probability that the second Row's action strictly dominates her first after the first round of Column's elimination. By symmetry and Boole's inequality,
\begin{equation*}
\Pr\left(S^R(m,n) < m \right) \leq m(m-1)\cdot p(m,n).
\end{equation*}
 
Without loss of generality, set $c_{1\cdot}=(n,n-1, \ldots,1)$. Define events $E_j$, $j \in \lbrack n \rbrack$, as follows:
\begin{align*}
    E_j &\equiv \mathcal{C}_j \bigcap \mathcal{R}_j, \quad\text{where}\\
\mathcal{C}_j &\equiv \bigcup_{i \geq 2}\left\{C: c_{ij}>\max(c_{i,j-1}, c_{i,j-2},\ldots, c_{i,1})\right\} \quad \text{and} \quad
\mathcal{R}_j \equiv \{R: r_{1j}>r_{2j}\}.
\end{align*}
Note that if $E_j$ happens for some $j \in \lbrack n \rbrack$, then the second Row's action cannot strictly dominate her first one after the first round of Column's elimination. Indeed, if both $\mathcal{C}_j$ and $\mathcal{R}_j$ occur, then the column $j$ stays and $r_{1j}>r_{2j}$. 

Assume that events $\{E_j\}_{j\in \lbrack n \rbrack}$ are mutually independent. Then,
\begin{align*}
p(m,n) &\leq \Pr\left(\bigcap_{j \in \lbrack n \rbrack} \overline{E}_j\right) = \prod_{j \in \lbrack n \rbrack} \left(1- \Pr\left(E_j\right)\right), \quad\text{where}\\
    \Pr\left(E_j\right)&=\frac{1}{2} \cdot \frac{m-1}{j}+\frac{1}{2}\cdot \sum_{k=2}^{m-1} (-1)^{k-1} \binom{m-1}{k} \cdot \left(\frac{1}{j}\right)^k.
\end{align*}
By using the inequality $1-x \leq e^{-x}$ that holds for any $x \geq 0$, we get
\begin{equation*}
 \prod_{j \in \lbrack n \rbrack} \left(1- \Pr\left(E_j\right)\right) \leq e^{-\sum_{j \in \lbrack n \rbrack}\Pr\left(E_j\right)}.  
\end{equation*}

Next, we verify that events $\{E_j\}_{j\in \lbrack n \rbrack}$ are mutually independent. Since rows and columns are mutually independent, it suffices to prove that $\{\mathcal{C}_j\}_{j\in \lbrack n \rbrack}$ are mutually independent. 

Note that for any matrix $C$ with $c_{1\cdot}=(n,n-1, \ldots,1)$, we can map it to the matrix $\{m_{ij}\}_{i \in \lbrack m-1 \rbrack,\, j \in \lbrack n \rbrack}$ defined as $m_{ij}=\left|\{k < j: c_{i+1,k}>{c_{i+1,j}}\}\right| \in \{0,1, \ldots, j-1\}$. This mapping is a bijection. Furthermore, $\mathcal{C}_j$ occurs if and only if $m_{ij}=0$ for some $i \in \lbrack m-1 \rbrack$ (corresponding events are mutually independent).

We can now show the statement, considering two possibilities: 
\begin{enumerate}[wide, labelindent=0pt]
    \item If $m=\mathcal{O}\left(1\right)$, then
    \begingroup
    \allowdisplaybreaks
    \begin{align*}
\Pr\left(S^R(m,n) < m \right) &\leq m(m-1)\cdot p(m,n) \leq m(m-1) \cdot e^{-\sum_{j \in \lbrack n \rbrack}\Pr\left(E_j\right)}\\
&=\mathcal{O}\left(1\right)\cdot e^{-\frac{m-1}{2}\ln{n}+\mathcal{O}\left(1\right)}=\mathcal{O}\left(1\right) \cdot n^{-\frac{m-1}{2}}=\mathcal{O}\left(n^{-\frac{m-1}{2}}\right).
\end{align*}
\endgroup
    \item The main statement is trivial for $m=1$. Consider $m \geq 2$. By applying Bonferroni's inequality up to $k=2$, we have
\begin{equation*}
    \Pr\left(E_j\right) \geq \frac{m-1}{2} \cdot \frac{1}{j} - \frac{(m-1)(m-2)}{4} \cdot \frac{1}{j^2}=\frac{m-1}{2}\left(\frac{1}{j}-\frac{m-2}{2}\cdot \frac{1}{j^2}\right).
\end{equation*}
As $m \leq n$, because $\frac{m-2}{2j} < \frac{1}{2}$ for any $j\geq m-1$,

\begin{multline*}
    \sum_{j \in \lbrack n \rbrack}\Pr\left(E_j\right)\geq \sum_{j \geq m-1}\Pr\left(E_j\right)=\sum_{j \geq m-1}\frac{m-1}{2}\left(\frac{1}{j}-\frac{m-2}{2j}\cdot \frac{1}{j}\right)\\
    >\sum_{j \geq m-1}\frac{m-1}{2}\left(\frac{1}{j}-\frac{1}{2}\cdot \frac{1}{j}\right)=\frac{m-1}{4} \cdot \sum_{j \geq m-1} \frac{1}{j}
    >\frac{m-1}{4} \cdot \int_{m}^n \frac{dx}{x}=\frac{m-1}{4} \cdot \ln{\frac{n}{m}},
\end{multline*}
\noindent so that
\begin{equation*}
\Pr\left(S^R(m,n) < m \right) \leq m(m-1)\cdot e^{-\frac{m-1}{4} \cdot \ln{\frac{n}{m}}}=m(m-1)\cdot\left(\frac{m}{n}\right)^{\frac{m-1}{4}}.\qedhere
\end{equation*}
\end{enumerate}
\end{proof}

\begin{delayedproof}{lem: subgame}
If either $m=1$ or $n=1$, the proof is trivial. Consider, then, $m, n \geq 2$.

We show that there exists an $(m-1) \times n$ subgame that is strict-dominance solvable. Indeed, if there is a strictly dominated action for Row in the original game, then the $(m-1) \times n$ subgame formed by the exclusion of this action is strict-dominance solvable. Otherwise, there is a strictly dominated action for Column such that, in the induced game after the first iteration, Row has a strictly dominated action.  The $(m-1) \times n$ subgame, formed by the exclusion of this action from the original game, is strict-dominance solvable. By a symmetric argument, there exists an $m \times (n-1)$ subgame that is strict-dominance solvable.

We can repeat these steps to prove the desired result by induction.\qedhere
\end{delayedproof}

\begin{delayedproof}{thm:p_m_n}
We first show that $\pi(n,n)\leq n^{-\left(\frac{1}{3}-o(1)\right)n}$. The idea behind the proof is to estimate the probability to eliminate at least $\frac{n}{3}$ rows (actions of Row) or columns (actions of Column). This probability will provide the desired upper bound for the probability of strict-dominance solvability.

Start the standard iterative elimination procedure and stop exactly when at least $\frac{n}{3}$ rows  or at least $\frac{n}{3}$ columns are deleted. To simplify the presentation, we omit all floor and ceiling signs whenever these are not crucial. Without loss of generality, suppose that we deleted $\frac{n}{3}$ rows and at most $\frac{n}{3}$ columns.

Let $X$ be the set of columns that are not yet eliminated. Similarly, $Y$ is defined as the set of rows that are not yet deleted. Their complements $X' \equiv [n]	\setminus X$ and $Y' \equiv [n]	\setminus Y$ correspond to eliminated columns and rows, respectively. By the previous paragraph, $|X| \geq \frac{2n}{3}$ and $|Y|=\frac{2n}{3}$. Also, for any row $r_{i}$ eliminated, $i \in Y'
$, there must exist a row $r_{j}$ not eliminated yet, $j \in Y$, so that $r_{jx}>r_{ix}$ for any $x \in X$, namely $r_{j}$ strictly dominates $r_{i}$ when restricted to columns $X$.

For any row $r_{i}$ eliminated, $i \in Y'$, choose some row $r_{j(i)}$ not eliminated yet, $j(i) \in Y$, so that $r_{j(i)x}>r_{ix}$ for any $x \in X$, and draw a directed edge from $j(i)$ to $i$. We get a collection of $r$ stars of sizes $k_1,k_2,\ldots,k_r$ with centers in $Y$ (not eliminated rows) and leaves $Y'$ (eliminated rows), so that $k_1+k_2+\ldots+k_r=|Y'| = \frac{n}{3}$. 

First, the total number of ways to choose such $X$, $Y$, and stars, is bounded above by

\begin{equation*}
    \binom{n}{|X|} \cdot \binom{n}{|Y|}   \cdot |Y|^{|Y'|} \leq 2^n \cdot 2^n \cdot |Y|^{n-|Y|} \leq 4^n \cdot \left(\frac{2n}{3}\right)^{\frac{n}{3}}.
\end{equation*}

Second, for any such fixed $X$, $Y$, and $r$ stars of sizes $k_1+k_2+\ldots+k_r=|Y'|=\frac{n}{3}$, the probability that for each star, its center dominates all corresponding leaves when restricted to $X$, is exactly
\begin{equation*}
    \left(\frac{1}{k_1+1}\cdot\frac{1}{k_2+1}\cdot\ldots\cdot \frac{1}{k_r+1}\right)^{|X|} \leq \left(\frac{1}{|Y'|+1}\right)^{|X|} \leq \left(\frac{1}{|Y'|}\right)^{|X|} \leq \left(\frac{1}{n/3}\right)^{\frac{2n}{3}}.
\end{equation*}

Based on two previous inequalities, the probability to eliminate at least $\frac{n}{3}$ rows or columns is bounded above by
\begin{equation*}
4^n \cdot \left(\frac{2n}{3}\right)^{\frac{n}{3}} \cdot  \left(\frac{1}{n/3}\right)^{\frac{2n}{3}} = n^{-\left(\frac{1}{3}-o(1)\right)n}\text{, as desired.}
\end{equation*}

In order to prove the main statement of this theorem, we consider two relevant cases. If $m \geq n^{0.9}$, then by Lemma~\ref{lem: subgame} and the inequality for balanced games proved above,
\begin{equation*}
\pi(m,n) \leq \binom{n}{m} \cdot \pi(m,m) \leq \binom{n}{m} \cdot \frac{1}{m^{0.3m}} \leq \binom{n}{m} \cdot \frac{1}{n^{0.27m}}.
\end{equation*}
By using the standard upper bound for the binomial coefficient, we get
\begin{equation*}
  \binom{n}{m} \cdot \frac{1}{n^{0.27m}} \leq \left(\frac{en}{m}\right)^m \cdot \frac{1}{n^{0.27m}} = e^m \cdot \frac{1}{n^{0.17m}}.
\end{equation*}
Otherwise, if $m \leq n^{0.9}$, then by Proposition~\ref{prop:s_m_n},
\begin{equation*}
\pi(m,n) \leq m^2 \cdot \left(\frac{1}{n^{0.1}}\right)^{\frac{m-1}{4}} \leq m^2 \cdot \frac{1}{n^{0.025m}}.
\end{equation*}
To sum up, by taking $C_2=0.01$ and sufficiently large $C_1>0$, for any $m \leq n$, we obtain the desired bound $\pi(m,n) \leq C_1 n^{-C_2m}$.\qedhere
\end{delayedproof}  

\addcontentsline{toc}{section}{References}
\bibliographystyle{ecta}
\bibliography{references}

\end{document}


\appendix
\title{\bfseries{\large Online Appendix for ``Dominance Solvability in Random Games''}}
\author{by Noga Alon, Kirill Rudov, and Leeat Yariv}
\begin{titlingpage}
\begin{changemargin}{0.5in}{0.5in} 
 \maketitle
\begin{abstract}
\normalsize This Online Appendix contains the following four items. First, in Appendix~\ref{appendix_A} we provide proofs omitted from the main text. Second, in Appendix~\ref{appendix_B} we provide more details on enumerative issues arising in our general analysis and particularly on the connection between our problem and permutation patterns. Third, in Appendix~\ref{appendix_C} we discuss how our results extend to multiple players. Fourth, in Appendix~\ref{appendix_D} we include additional figures referred to in the main text.
\end{abstract}
\end{changemargin} 
\end{titlingpage}

\newgeometry{bottom=1in, top=1.5in, right=1in,left=1in}
\section{Omitted Proofs}
\label{appendix_A}

\setcounter{lemma}{3}
\begin{lemma}
\label{lem: i_2_n} Consider a random game $G(2,n)$. Then,

\begin{enumerate}
\item $\Pr(I(2,n)=1)$ is strictly decreasing in $n$ and  $\lim \limits_{n \to \infty} {2^n}\cdot{n^{1/2}}\cdot\Pr(I(2,n)=1)=%
\sqrt{\pi}$;

\item $\Pr(I(2,1)=2)=0$, $\Pr(I(2,2)=2)=\Pr(I(2,3)=2)=2/3$, $%
\Pr(I(2,n)=2)$ is strictly decreasing in $n$ for $n \geq 3$, and $\lim \limits_{n \to \infty} {n^{1/2}}\cdot\Pr(I(2,n)=2)=\dfrac{\sqrt{%
\pi}}{2}$;

\item $\Pr(I(2,1)=3)=\Pr(I(2,2)=3)=0$, $\Pr(I(2,n)=3)$ is strictly
increasing in $n$ for $n \geq 2$, and $\lim \limits_{n \to \infty} {n^{1/2}}\cdot(1-\Pr(I(2,n)=3))=\dfrac{%
\sqrt{\pi}}{2}$. 
\end{enumerate}
\end{lemma}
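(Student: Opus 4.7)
The plan is to reduce the analysis to two independent events arising in round~$1$ of IEDS on $G(2,n)$. Let $D$ be the event that one of player~$1$'s two rows dominates the other, and let $K$ be the number of Pareto-optimal columns among player~$2$'s $n$ strategies. Since the two players' payoff matrices are mutually independent, so are $D$ and $K$; one verifies $\Pr(D)=2\cdot 2^{-n}=2^{1-n}$ (each of the $n$ column-wise row comparisons is an independent fair coin flip) together with the classical fact that $K$ is distributed as the number of left-to-right maxima in a uniformly random permutation of $[n]$, with probability generating function
\begin{align*}
\E[z^{K}]=\frac{z(z+1)(z+2)\cdots(z+n-1)}{n!}.
\end{align*}
I would first record these two facts.

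Next, since IEDS on a $2\times n$ game halts within three rounds, the round-$1$ pair $(m_{1},K)\in\{1,2\}\times\{1,\ldots,n\}$ fully determines $I$: a short case check yields $I=1$ iff $m_{1}=1$ and $K=1$; $I=2$ iff exactly one of these equalities holds; and $I=3$ iff $m_{1}=2$, $K\geq 2$, and player~$1$'s two rows are consistently ordered across the $K$ Pareto-optimal columns but not across all $n$ original columns. Independence together with an inclusion--exclusion on the Pareto set size then gives the joint-with-solvable probabilities
\begin{align*}
\Pr(I=1)=\frac{2^{1-n}}{n},\qquad \Pr(I=2)=\frac{1}{n}+\frac{2^{1-n}(n-2)}{n},\qquad \Pr(I=3)=\sum_{k=2}^{n-1}\bigl(2^{1-k}-2^{1-n}\bigr)\Pr(K=k).
\end{align*}
Summing these and specializing the generating function at $z=1/2$, via $\E[(1/2)^{K}]=\binom{2n}{n}/4^{n}$, the total collapses to $\Pr(\text{solvable})=2\binom{2n}{n}/4^{n}$. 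Dividing yields clean closed forms for the conditional probabilities of the lemma, such as $\Pr(I(2,n)=1)=2^{n}/(n\binom{2n}{n})$, after which Stirling's formula $\binom{2n}{n}\sim 4^{n}/\sqrt{\pi n}$ delivers the asymptotic constants $\sqrt{\pi}$, $\sqrt{\pi}/2$, and $\sqrt{\pi}/2$ claimed in parts~1--3.

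For the monotonicity statements I would compute ratios of consecutive values. Using $\binom{2n+2}{n+1}/\binom{2n}{n}=2(2n+1)/(n+1)$, the ratio $\Pr(I(2,n+1)=1)/\Pr(I(2,n)=1)$ telescopes to $n/(2n+1)<1$, which proves part~1 uniformly in $n$. The analogous ratios for $\Pr(I=2)$ and $\Pr(I=3)$ pick up an extra factor from the $2^{1-n}(n-2)/n$ correction, and a direct algebraic check shows that $\Pr(I=2)$ is constant from $n=2$ to $n=3$ (both equal $2/3$) and strictly decreasing for $n\geq 3$, while $\Pr(I=3)=1-\Pr(I=1)-\Pr(I=2)$ is strictly increasing for $n\geq 2$; the small-$n$ boundary values (the vanishing probabilities at $n=1,2$ and $\Pr(I(2,2)=2)=\Pr(I(2,3)=2)=2/3$) fall out from direct substitution in the closed forms. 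The main obstacle I anticipate is not the asymptotic analysis, which is routine once the generating function for $K$ is in hand, but rather the strict monotonicity of $\Pr(I=2)$ for $n\geq 3$: the ratio of consecutive values is close to $1$ and the inequality must be certified while tracking the exponentially small $(n-2)/2^{n-1}$ correction carefully.
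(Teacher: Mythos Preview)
Your proposal is correct and, once the closed forms are in hand, follows the same route as the paper: ratio tests for monotonicity and Stirling's approximation for the asymptotics. Your ratio $\Pr(I(2,n+1)=1)/\Pr(I(2,n)=1)=n/(2n+1)$ is exactly the paper's computation, and your use of $\Pr(I=3)=1-\Pr(I=1)-\Pr(I=2)$ for part~3 mirrors the paper as well.

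The one cosmetic difference is packaging: the paper's appendix simply imports the closed forms $\Pr(I(2,n)=1)=\dfrac{\sqrt{\pi}}{2^n}\cdot\dfrac{\Gamma(n)}{\Gamma(n+1/2)}$ etc.\ from Subsection~3.3 of the main text and works in the Gamma-function normalization, whereas you re-derive them here via the record/left-to-right-maxima distribution of $K$ and express everything through $\binom{2n}{n}$. These are the same formulas, since $\dfrac{\Gamma(n)}{\Gamma(n+1/2)}=\dfrac{4^n}{\sqrt{\pi}\,n\binom{2n}{n}}$, so your $\Pr(I(2,n)=1)=2^n\big/\bigl(n\binom{2n}{n}\bigr)$ and $\Pr(\text{solvable})=2\binom{2n}{n}/4^n$ match the paper's expressions identically. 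Two minor notes: your sum for $\Pr(I=3)$ should run to $k=n$ rather than $n-1$ (the $k=n$ term vanishes, so this is harmless), and the ``direct algebraic check'' you flag for the monotonicity of $\Pr(I=2)$ is indeed routine---the paper writes the ratio as $\dfrac{n+2^n-1}{2n+2^n-4}\cdot\dfrac{n}{n+1/2}$ and the inequality reduces to $n^2-2n-2+2^{n-1}>0$ for $n\ge 3$.
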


\begin{proof} We prove the three statements in turn.
\begin{enumerate}[wide, labelindent=0pt]
    \item \label{lem: i_2_n_1} Consider $\Pr(I(2,n)=1)=\dfrac{\sqrt{\pi}}{2^{n}}\cdot \dfrac{\Gamma(n)}{\Gamma(n+1/2)}$, derived in Subsection 3.3. For any $n\geq1$,
     \begin{equation*}
    \Pr(I(2,n+1)=1)= \frac{\sqrt{\pi}}{2^{n+1}}\cdot \frac{\Gamma(n+1)}{\Gamma(n+3/2)}= \frac{n}{2n+1} \cdot \Pr(I(2,n)=1)  < \Pr(I(2,n)=1).
    \end{equation*}
    By Stirling's formula applied to the gamma function,
    \begin{equation*}
     \lim \limits_{n \to \infty} {2^n}\cdot{n^{1/2}}\cdot\Pr(I(2,n)=1)=\sqrt{\pi} \lim \limits_{n \to \infty} \frac{\Gamma(n) \cdot n^{1/2}}{\Gamma(n+1)} =\sqrt{\pi}. 
    \end{equation*}
      \item \label{lem: i_2_n_2} Similarly, consider $\Pr(I(2,n)=2)=\dfrac{n+2^{n-1}-2}{2^n} \cdot \sqrt{\pi} \cdot \dfrac{\Gamma(n)}{\Gamma(n+1/2)}$, derived in Subsection 3.3. It is straightforward to check that $\Pr(I(2,2)=2)=\Pr(I(2,3)=2)=2/3$ and $\Pr(I(2,1)=2)=0$. For any $n\geq3$,
      \begin{equation*}
    \Pr(I(2,n+1)=2) = \frac{n+2^{n}-1}{2n+2^{n}-4}  \cdot \frac{n}{n+1/2} \cdot \Pr(I(2,n)=2) < \Pr(I(2,n)=2).
    \end{equation*}
    By Stirling's formula applied to the gamma function,
    \begin{equation*}
    \lim \limits_{n \to \infty} {n^{1/2}}\cdot\Pr(I(2,n)=2)=\frac{\sqrt{\pi}}{2} \cdot \lim \limits_{n \to \infty} \frac{n+2^{n-1}-2}{2^{n-1}} \cdot \lim \limits_{n \to \infty} \frac{\Gamma(n) \cdot n^{1/2}}{\Gamma(n+1)} =\frac{\sqrt{\pi}}{2}.
    \end{equation*}
     \item Finally, consider $\Pr(I(2,n)=3)=1-\dfrac{n+2^{n-1}-1}{2^n} \cdot \sqrt{\pi} \cdot \dfrac{\Gamma(n)}{\Gamma(n+1/2)}$, also derived in Subsection 3.3. It is straightforward to check that $\Pr(I(2,1)=3)=\Pr(I(2,2)=3)=0$. By~\ref{lem: i_2_n_1} and~\ref{lem: i_2_n_2}, $\Pr(I(2,n)=3)$ is strictly increasing in $n$ for $n \geq 2$. By Stirling's formula,
     \begin{equation*}
    \lim \limits_{n \to \infty} {n^{1/2}}\cdot\left(1-\Pr(I(2,n)=3)\right)=\frac{\sqrt{\pi}}{2} \cdot \lim \limits_{n \to \infty} \frac{n+2^{n-1}-1}{2^{n-1}} \cdot \lim \limits_{n \to \infty} \frac{\Gamma(n) \cdot n^{1/2}}{\Gamma(n+1)} =\frac{\sqrt{\pi}}{2}.\qedhere
    \end{equation*}
\end{enumerate}
\end{proof}

\begin{lemma}
\label{lem: s_c_2_n_1} Consider a random game $G(2,n)$. Then,

\begin{enumerate}
\item $\lim \limits_{n \to \infty} n^{1/2} \cdot \Pr\left(S^C(2,n)=1\right)
= \dfrac{2}{\sqrt{\pi}}$; 

\item for any fixed $k \geq 2$, $\lim \limits_{n \to \infty} \dfrac{n}{(\ln{%
n})^{k-1}} \cdot \Pr\left(S^C(2,n)=k\right) = \dfrac{1}{(k-1)!} \cdot
\left(1-\dfrac{1}{2^{k-1}}\right) $; 

\item for $k(n) \sim \ln{n}$, $\lim \limits_{n \to \infty} (\ln{n})^{1/2}
\cdot \Pr\left(S^C(2,n)=k(n)\right)=\dfrac{1}{\sqrt{2\pi}}$.
\end{enumerate}
\end{lemma}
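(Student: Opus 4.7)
The plan is to first write down an explicit formula for $\Pr(S^C(2,n)=k)$ and then derive the three asymptotics. The key decomposition, which I expect is established (or easily adapted) from the analysis in Subsection~3.3, is as follows. In $G(2,n)$ the column-player's Pareto-maximal columns---equivalently the left-to-right maxima of the column payoff sequence---form a random set whose size $K$ satisfies $\Pr(K=k)=|s(n,k)|/n!$, where $|s(n,k)|$ is the unsigned Stirling number of the first kind. Conditional on $K=k$, the row-player's payoffs on the Pareto frontier are i.i.d.\ and independent of $K$, so one of the two rows dominates the other there with probability $2\cdot 2^{-k}=2^{1-k}$; in that case iterated elimination collapses to a single surviving column, otherwise all $k$ Pareto-maximal columns survive. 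Hence
\[
\Pr(S^C(2,n)=k) \;=\; \bigl(1-2^{1-k}\bigr)\cdot\frac{|s(n,k)|}{n!}\qquad\text{for } k\geq 2,
\]
and, using the generating identity $\sum_{k\geq 1}|s(n,k)|z^k/n!=\Gamma(z+n)/(\Gamma(z)\,n!)$ evaluated at $z=1/2$,
\[
\Pr(S^C(2,n)=1) \;=\; \frac{1}{n}+\sum_{k\geq 2}2^{1-k}\cdot\frac{|s(n,k)|}{n!} \;=\; \frac{2}{\sqrt{\pi}}\cdot\frac{\Gamma(n+1/2)}{\Gamma(n+1)}.
\]

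Given this, Part~1 follows from Stirling's formula applied to the gamma-ratio, exactly as in the proof of Lemma~4. Part~2 follows from Erd\H{o}s's theorem (quoted in the preamble), which gives $|s(n,k)|/n!\sim(\ln n)^{k-1}/((k-1)!\,n)$ for each fixed $k\geq 2$; the $(1-2^{1-k})$ factor is constant in $n$ and equals $1-2^{-(k-1)}$, so the claim is immediate.

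For Part~3, the prefactor $1-2^{1-k(n)}$ tends to $1$ since $k(n)\to\infty$, so the claim reduces to a local limit estimate $|s(n,k(n))|/n!\sim 1/\sqrt{2\pi\ln n}$ at $k(n)\sim\ln n$. I would apply Hwang's theorem (quoted in the preamble) to the quasi-powers representation
\[
\sum_{k}\frac{|s(n,k)|}{n!}\,z^k \;=\; \frac{\Gamma(z+n)}{\Gamma(z)\,n!},
\]
which, uniformly for $z$ in a complex neighborhood of $1$, factors as $\exp\!\bigl((z-1)\ln n+O(1)\bigr)\cdot(1+o(1))$. This yields a Gaussian local limit with mean and variance both asymptotic to $\ln n$, and hence the peak value $1/\sqrt{2\pi\ln n}$ at $k(n)\sim\ln n$.

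The main obstacle is Part~3: verifying the quasi-powers hypothesis of Hwang's theorem cleanly, with the correct mean and variance both equivalent to $\ln n$, and confirming that the $2^{1-k}$ correction has negligible effect uniformly for $k$ near $\ln n$ (which is immediate, since $2^{1-\ln n}=2n^{-\ln 2}=o(1)$). A secondary, more structural obstacle is justifying the decomposition formula cleanly: this requires showing that in $G(2,n)$ iterated elimination either terminates after a single round of column elimination (yielding exactly the Pareto frontier) or, if a row becomes dominated on that frontier, collapses to a single column, with no intermediate configuration possible---so that the simple product form $\bigl(1-2^{1-k}\bigr)\cdot|s(n,k)|/n!$ correctly captures $\Pr(S^C(2,n)=k)$ for $k\geq 2$.
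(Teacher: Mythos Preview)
Your approach matches the paper's, and your decomposition formula for $\Pr(S^C(2,n)=k)$ is exactly what the paper takes as given from Subsection~3.3 and Proposition~1. One correction: in Part~2 you attribute the asymptotic $|s(n,k)|/n!\sim(\ln n)^{k-1}/\bigl((k-1)!\,n\bigr)$ to Erd\H{o}s's theorem, but that theorem (as quoted in the preamble) concerns only log-concavity and the location of the mode; the asymptotic you need is Hwang's theorem (or Wilf's earlier version for fixed~$k$), which is precisely what the paper invokes. For Part~3 your quasi-powers/Gaussian local-limit route is valid but more roundabout than the paper's: since Hwang's theorem as quoted already gives the uniform estimate $s(n,k)/n!\sim\tfrac{1}{n}(\ln n+\gamma)^{k-1}/(k-1)!$ for $k\le\eta\ln n$, the paper simply substitutes $k(n)\sim\ln n$, replaces $(k-1)!$ by $\Gamma(\ln n)$, and applies Stirling to $\Gamma(\ln n)$ to read off $1/\sqrt{2\pi\ln n}$ directly---no separate local-limit argument is needed.
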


\begin{proof} For these results, we use two theorems regarding the unsigned Stirling numbers of the first kind.

\begin{Hwang}
For any $\eta>0$, the unsigned Stirling numbers of the first kind $s(n,k)$ satisfy asymptotically 
\begin{equation*}
 \frac{s(n,k)}{n!}=\frac{1}{n} \cdot \frac{( \ln n + \gamma ) ^ { k - 1 }}{(k-1)!}+\mathcal{O}\left(\frac{( \ln n ) ^ { k }}{k!\cdot n^2}\right) \quad (n \to \infty),   
\end{equation*}
uniformly for $1\leq k \leq \eta \ln{n}$.
\end{Hwang}

\begin{Erdos}
The Stirling numbers of the first kind form log-concave sequences. In addition, the signless Stirling number $s(n,k)$ is maximized at $k(n)=\argmax_{k \in \{\floor{H_n}, \ceil{H_n}\}}s(n,k)$, where $H_n=1+\frac{1}{2}+\ldots+\frac{1}{n}$ is the $n$-th harmonic number. That is, $k(n) \sim \ln{n}$.
\end{Erdos}

We now show the three statements of the lemma in sequence.
\begin{enumerate}[wide, labelindent=0pt]
    \item It follows immediately from Proposition 1.
    \item It follows immediately from Hwang's theorem for fixed $k \geq 2$. In fact, the original theorem pertaining to this case was offered by \cite{wilf_asymptotic_1993}.
    \item By Hwang's theorem,

\begin{equation*}
\tag{$\dagger$}
\label{dagger1}
\frac{s(n,k(n))}{n!} \sim \frac{1}{n} \cdot \frac{(\ln{n})^{\ln{n}-1}}{\Gamma(\ln{n})}\quad (n \to \infty,\; k(n) \sim \ln{n}).
\end{equation*}
By applying Stirling's formula to the gamma function,
\[\Gamma ( z ) = \sqrt { \frac { 2 \pi } { z } } \left( \frac { z } { e } \right) ^ { z } \left( 1 + \mathcal{O} \left( \frac { 1 } { z } \right) \right).\]
\noindent For $\Gamma(\ln{n})$, we get
\begin{equation*}
\tag{$\dagger\dagger$}
\label{dagger2}
    \Gamma(\ln{n}) \sim \frac{\sqrt{2\pi}}{n} \cdot (\ln{n})^{\ln{n}-\frac{1}{2}} \quad (n \to \infty),
\end{equation*}
\noindent so that from equations~\ref{dagger1} and~\ref{dagger2} we have
\[\frac{s(n,k(n))}{n!} \sim \frac{1}{\sqrt{2\pi}}\cdot(\ln{n})^{-\frac{1}{2}}\quad (n \to \infty,\; k(n) \sim \ln{n}).\]
\noindent Thus,
\begin{equation*}
\Pr\left(S^C(2,n)=k(n)\right) = \frac{s(n,k(n))}{n!} \cdot \left(1-\frac{1}{2^{k(n)-1}}\right) \sim  \frac{1}{\sqrt{2\pi}}\cdot(\ln{n})^{-\frac{1}{2}}\quad (n \to \infty,\; k(n) \sim \ln{n})
\end{equation*}
\noindent and $\lim\limits_{n \to \infty} (\ln{n})^{1/2} \cdot \Pr\left(S^C(2,n)=k(n)\right)=\dfrac{1}{\sqrt{2\pi}}$ as desired.\qedhere
\end{enumerate}
\end{proof}

\begin{lemma}
\label{lem: s_c_2_n_2} Consider a random game $G(2,n)$. Then,

\begin{enumerate}
\item for any $n \geq 1$, $\E\left[\left|S^C(2,n)\right|\right]= W(n) \cdot %
\big(2 - (\psi(n+1/2)-\psi(1/2))\big)+H_n$, where $\psi(z)=\dfrac{%
\Gamma^{\prime }(z)}{\Gamma(z)}$ is the digamma function and $W(n)=\dfrac{%
\Gamma(n+1/2)}{\Gamma(n+1)\cdot\Gamma(1/2)}$; 

\item $\E\left[\left|S^C(2,n)\right|\right]$ is strictly increasing in $n$; 

\item $\E\left[\left|S^C(2,n)\right|\right]=\ln{n} +\gamma - \dfrac{1}{%
\sqrt{\pi}} \cdot \dfrac{\ln{n}}{n^{1/2}}+ \mathcal{O}\left(\dfrac{1}{%
n^{1/2}}\right)$, where $\gamma$ is the Euler-Mascheroni
constant.
\end{enumerate}
\end{lemma}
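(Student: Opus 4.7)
The plan is to use the distribution of $S^C(2,n)$ that is implicit in Lemma 5: writing $s(n,k)$ for the unsigned Stirling numbers of the first kind, one has $\Pr(S^C(2,n)=k)=\tfrac{s(n,k)}{n!}\bigl(1-2^{-(k-1)}\bigr)$ for $k\geq 2$, and $\Pr(S^C(2,n)=1)=\tfrac{1}{n!}\sum_{k\geq 1}s(n,k)\,2^{-(k-1)}$. Summing $k\cdot\Pr(S^C(2,n)=k)$ and collecting like terms rewrites the expectation as
\[\E\bigl[|S^C(2,n)|\bigr]=H_n-\frac{1}{n!}\sum_{k\geq 1}(k-1)\,s(n,k)\,2^{-(k-1)}.\]
For part~1, I would evaluate the residual sum using the rising-factorial identity $\sum_k s(n,k)x^k=\Gamma(x+n)/\Gamma(x)$. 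Setting $x=1/2$ directly gives $\sum_k s(n,k)2^{-(k-1)}=2\,n!\,W(n)$; differentiating in $x$ first and then setting $x=1/2$ gives $\sum_k k\,s(n,k)2^{-(k-1)}=n!\,W(n)\bigl(\psi(n+1/2)-\psi(1/2)\bigr)$, via $\psi(x+n)-\psi(x)=\sum_{i=0}^{n-1}\tfrac{1}{x+i}$. Substituting and using the standard $\sum_k k\,s(n,k)=n!H_n$ recovers the closed form.

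For part~2, I would compute $\E[|S^C(2,n+1)|]-\E[|S^C(2,n)|]$ directly from part~1, using the elementary recurrences $W(n+1)=\tfrac{2n+1}{2(n+1)}W(n)$, $\psi(n+3/2)-\psi(n+1/2)=\tfrac{2}{2n+1}$, and $H_{n+1}-H_n=\tfrac{1}{n+1}$. After several cancellations the difference simplifies to
\[\frac{1}{n+1}\left[1-\frac{W(n)\bigl(4-\psi(n+1/2)+\psi(1/2)\bigr)}{2}\right].\]
Strict positivity then follows by casework: if $4-\psi(n+1/2)+\psi(1/2)\leq 0$ the product is non-positive; otherwise $\psi(n+1/2)-\psi(1/2)\in[2,4)$ (using $\psi(3/2)-\psi(1/2)=2$), and pairing $4-\psi(n+1/2)+\psi(1/2)\leq 2$ with the decreasing bound $W(n)\leq W(1)=1/2$ gives $W(n)(4-\psi(n+1/2)+\psi(1/2))\leq 1<2$.

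For part~3, I would apply Stirling's formula to obtain $W(n)=(\sqrt{\pi n})^{-1}\bigl(1+\mathcal{O}(1/n)\bigr)$, use the digamma asymptotic $\psi(z)=\ln z+\mathcal{O}(1/z)$ together with $\psi(1/2)=-\gamma-2\ln 2$ to get $\psi(n+1/2)-\psi(1/2)=\ln n+\gamma+2\ln 2+\mathcal{O}(1/n)$, and combine with $H_n=\ln n+\gamma+\mathcal{O}(1/n)$. The single term $-W(n)\ln n$ supplies the announced correction $-\ln n/\sqrt{\pi n}$; the remaining contributions ($2W(n)$, $-W(n)(\gamma+2\ln 2)$, the $\mathcal{O}(1/n)$ tail of $H_n$, and the lower-order error of $W(n)$) are all $\mathcal{O}(n^{-1/2})$.

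The main obstacle is the algebraic bookkeeping in part~2: the three recurrences for $W$, $\psi(n+1/2)$, and $H_n$ must cancel in just the right way for the difference to acquire a form in which positivity is manifest from elementary monotonicity properties of $W$ and $\psi$. Once the closed form of part~1 is secured, parts~2 and~3 are essentially mechanical applications of those recurrences and of standard Stirling/digamma asymptotics, respectively.
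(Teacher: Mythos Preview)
Your proposal is correct and follows essentially the same route as the paper: the same distribution formulas for $S^C(2,n)$, the same generating-function identity $\sum_k s(n,k)x^k=\Gamma(n+x)/\Gamma(x)$ (and its derivative at $x=1/2$) for part~1, the same asymptotics for part~3, and the same recurrence-based difference computation for part~2. Your part~2 is in fact slightly cleaner---you push the simplification one step further to the closed form $\tfrac{1}{n+1}\bigl[1-\tfrac{1}{2}W(n)(4-P_n)\bigr]$ and then use $W(n)\le W(1)=1/2$ together with $P_n\ge P_1=2$, whereas the paper stops at an intermediate expression and uses the weaker bound $W(n)<1$; both arguments rest on the same two monotonicity facts.
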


\begin{proof}
We prove each of the three statements in turn.

\begin{enumerate}[wide, labelindent=0pt]
    \item Recall that
\begin{align*}
\Pr\left(\left|S^C(2,n)\right|=1\right) &= \frac{1}{n!} \cdot \sum_{k=1}^n \frac{s(n, k)}{2^{k-1}}\text{, and}\\
\Pr\left(\left|S^C(2,n)\right|=k\right) &= \frac{s(n, k)}{n!} \cdot \left(1-\frac{1}{2^{k-1}}\right) \quad \text
{for any } 2\leq k \leq n,
\end{align*}
\noindent so that
\begin{align*}
\E\left[\left|S^C(2,n)\right|\right] &= \frac{1}{n!} \cdot \sum_{k=1}^n \frac{s(n, k)}{2^{k-1}} + \sum_{k=2}^{n} k \cdot \frac{s(n, k)}{n!} \cdot \left(1-\frac{1}{2^{k-1}}\right)= \frac{1}{n!} \cdot \sum_{k=1}^n \frac{s(n, k)}{2^{k-1}} \\
&+ \sum_{k=1}^{n} k \cdot \frac{s(n, k)}{n!} \cdot \left(1-\frac{1}{2^{k-1}}\right)= 2W(n) + \sum_{k=1}^{n} k \cdot \frac{s(n, k)}{n!} - \sum_{k=1}^{n} k \cdot \frac{s(n, k)}{2^{k-1} \cdot n!}.
\end{align*}

First, note that $\displaystyle\sum_{k=1}^{n} k \cdot \dfrac{s(n, k)}{n!} =  H_n$ (see, e.g., Theorem 2 in \citealp{benjamin_stirling_2002}).

Second, we can differentiate the identity $\displaystyle\sum_{k=1}^{n} {s(n, k)} \cdot x^{k}=\dfrac{\Gamma(n+x)}{\Gamma(x)}$
to derive an explicit expression for $\displaystyle\sum_{k=1}^{n} k \cdot \dfrac{s(n, k)}{2^{k-1} \cdot n!}$. Namely,
\begin{equation*}
\sum_{k=1}^{n} k\cdot {s(n, k)} \cdot x^{k-1} =  \frac{d}{dx}\left(\frac{\Gamma(n+x)}{\Gamma(x)}\right) = \frac{\Gamma(n+x)}{\Gamma(x)} \cdot (\psi(n+x)-\psi(x)), 
\end{equation*}
\noindent where $\psi(z)=\dfrac{\Gamma'(z)}{\Gamma(z)}$, so that $\displaystyle\sum_{k=1}^{n} k \cdot \dfrac{s(n, k)}{2^{k-1} \cdot n!} = W(n) \cdot (\psi(n+1/2)-\psi(1/2))$.

By collecting all terms, we get
\begin{equation*}
\E\left[\left|S^C(2,n)\right|\right] =  W(n) \cdot \big(2 - (\psi(n+1/2)-\psi(1/2))\big)+H_n,
\end{equation*}
where $\psi(n+1/2) = - \gamma - 2 \ln 2 + \displaystyle\sum_{k = 1}^{n} \dfrac{2}{2k-1} = - \gamma + H_{n-1/2}$ and $\psi(1/2) = - \gamma - 2 \ln 2$.

\item For any $n \geq 1$,
\begin{align*}
\E\left[\left|S^C(2,n+1)\right|\right]&=  W(n+1) \cdot \big(2 - (\psi(n+3/2)-\psi(1/2)\big)+H_{n+1}\text{ and}  \\
\E\left[\left|S^C(2,n)\right|\right]&=  W(n) \cdot \big(2 - (\psi(n+1/2)-\psi(1/2)\big)+H_{n},
\end{align*}
\noindent so that
\begin{align*}
&\E\left[\left|S^C(2,n+1)\right|\right] - \E\left[\left|S^C(2,n)\right|\right] = \frac{1}{n+1} \\
& + W(n)\cdot \Bigg(-\frac{1}{n+1} + (\psi(n+1/2)-\psi(1/2)) - \frac{n+1/2}{n+1} \cdot \big(\psi(n+3/2)-\psi(1/2)\big)\Bigg)\\
&> W(n) \cdot \Bigg(\sum_{k=1}^{n} \frac{2}{2k-1} - \frac{n+1/2}{n+1} \cdot \sum_{k=1}^{n+1} \frac{2}{2k-1}\Bigg) = \dfrac{W(n)}{n+1} \cdot \Bigg(\sum_{k=1}^{n} \frac{1}{2k-1} - 1 \Bigg) \geq 0,
\end{align*}
\noindent where the first inequality follows from $W(n)=\dfrac{\pi(2,n)}{2}<1$ for any $n \geq 1$. 

\item Note that 
\begin{align*}
    H_n &= \ln{n} +\gamma + \mathcal{O}\left(\frac{1}{n}\right), \qquad
    \dfrac{\Gamma(n+1/2)}{\Gamma(n+1)} = \frac{1}{n^{1/2}} + \mathcal{O}\left(\frac{1}{n^{3/2}}\right),\qquad\text{and}\\
    \psi(n+1/2)&=\ln{(n+1/2)}+\mathcal{O}\left(\frac{1}{n}\right),
\end{align*}
\noindent so that
\begin{align*}
\E\left[\left|S^C(2,n)\right|\right]&= \frac{1}{\sqrt{\pi}} \cdot \left(\frac{1}{n^{1/2}} + \mathcal{O}\left(\frac{1}{n^{3/2}}\right)\right) \cdot \left(2 + \gamma + 2\ln{2}-\ln{(n+1/2)}-\mathcal{O}\left(\frac{1}{n}\right)\right) \\
&+ \ln{n} +\gamma + \mathcal{O}\left(\frac{1}{n}\right) = \ln{n} +\gamma - \frac{1}{\sqrt{\pi}} \cdot \frac{\ln{n}}{n^{1/2}}+ \mathcal{O}\left(\frac{1}{n^{1/2}}\right).
\end{align*}

In particular, $\lim\limits_{n \to \infty} \left(\E\left[\left|S^C(2,n)\right|\right] - \ln{n}\right) = \gamma$.\qedhere
\end{enumerate}
\end{proof}

\begin{lemma}
\label{lem: s_c_2_n_3} Consider a random game $G(2,n)$. Define the
polygamma function $\psi^{(m)}(x)$ of order $m$ as the $m$-th derivative of
the digamma function $\psi(x)\equiv \psi^{(0)}(x)$. In addition, let $%
H^{(m)}_{n} \equiv 1 + \dfrac{1}{2^m}+\ldots+\dfrac{1}{n^m}$ be the
generalized harmonic number of order $m$ of $n$. Then,

\begin{enumerate}
\item for any $n \geq 1$, 
\begin{multline*}
\Var\left[S^C(2,n)\right]= H_n-H^{(2)}_n + \dfrac{W(n)}{2} \cdot \Bigg(4 - 2 \cdot \big(%
\psi(n+1/2)-\psi(1/2)\big)\\
-\Big( \big(\psi(n+1/2)-\psi(1/2)\big)^2+\big(\psi^{(1)}(n+1/2)-%
\psi^{(1)}(1/2)\big) \Big) - 8\cdot H_n + 4 \cdot H_n \cdot \big(\psi(n+1/2)-\psi(1/2)\big) \Bigg)\\ 
- \left(W(n) \cdot \big(2 - (\psi(n+1/2)-\psi(1/2))\big)\right)^2.
\end{multline*}

\item $\Var\left[S^C(2,n)\right]=\ln{n}+\gamma - \dfrac{\pi^2}{6}+\dfrac{3}{%
2\cdot\sqrt{\pi}}\cdot\dfrac{(\ln{n})^2}{n^{1/2}} - \dfrac{5-2\ln{2}%
-3\gamma}{\sqrt{\pi}}\cdot \dfrac{\ln{n}}{n^{1/2}} +\mathcal{O}\left(\dfrac{%
1}{n^{1/2}}\right)$.
\end{enumerate}
\end{lemma}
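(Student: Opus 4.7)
For part 1, my approach mirrors the proof of the previous lemma for $\E[|S^C(2,n)|]$, but applied to the second moment. Starting from
\[
\E\!\left[(S^C(2,n))^2\right] = \Pr(|S^C(2,n)|=1) + \sum_{k=2}^n k^2\,\Pr(|S^C(2,n)|=k),
\]
I extend the second sum to $k=1$ (whose contribution vanishes because $1 - 1/2^{0} = 0$) and obtain
\[
\E\!\left[(S^C(2,n))^2\right] = 2W(n) + \frac{1}{n!}\sum_{k=1}^n k^2 s(n,k) - \frac{1}{n!}\sum_{k=1}^n \frac{k^2 s(n,k)}{2^{k-1}}.
\]
Applying the operator $x\frac{d}{dx}$ twice to the identity $\sum_k s(n,k)x^k = \Gamma(n+x)/\Gamma(x)$ gives $\sum_k k^2 s(n,k)x^k = x f'(x) + x^2 f''(x)$ with $f(x)=\Gamma(n+x)/\Gamma(x)$. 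Evaluating at $x=1$ produces $H_n + H_n^2 - H_n^{(2)}$ (using $\psi^{(1)}(n+1)-\psi^{(1)}(1) = -H_n^{(2)}$); evaluating at $x=1/2$ produces an expression involving $A \equiv \psi(n+1/2)-\psi(1/2)$ and $B \equiv \psi^{(1)}(n+1/2)-\psi^{(1)}(1/2)$. Subtracting the square of $\E[|S^C(2,n)|] = W(n)(2-A)+H_n$ from Lemma 5 and grouping the terms into $H_n - H_n^{(2)}$, $(W(n)/2)(\cdots)$, and $(W(n)(2-A))^2$ yields exactly the displayed identity.

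For part 2, I substitute the standard expansions $H_n = \ln n + \gamma + \mathcal{O}(1/n)$, $H_n^{(2)} = \pi^2/6 + \mathcal{O}(1/n)$, $W(n) = 1/\sqrt{\pi n} + \mathcal{O}(n^{-3/2})$, $A = \ln n + \gamma + 2\ln 2 + \mathcal{O}(1/n)$, and $B = -\pi^2/2 + \mathcal{O}(1/n)$ into the exact formula of part 1. The $\ln n + \gamma - \pi^2/6$ piece comes immediately from $H_n - H_n^{(2)}$. The leading $(\ln n)^2/\sqrt{n}$ term comes solely from $(W(n)/2)(4H_nA - A^2)$, whose quadratic-in-$\ln n$ part is $3(\ln n)^2$; multiplying by $W(n)/2 \sim 1/(2\sqrt{\pi n})$ produces $\tfrac{3}{2\sqrt{\pi}}(\ln n)^2/\sqrt{n}$. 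The $\ln n/\sqrt{n}$ coefficient is then obtained by collecting the linear-in-$\ln n$ parts of $-A^2$, $4H_nA$, $-2A$, and $-8H_n$, which collapse to $(-10 + 6\gamma + 4\ln 2)\ln n$; dividing by $2\sqrt{\pi n}$ yields precisely $-(5 - 2\ln 2 - 3\gamma)/\sqrt{\pi}\cdot \ln n/\sqrt{n}$. The contribution $-(W(n)(2-A))^2 = \mathcal{O}((\ln n)^2/n)$, the $B$-term, and the constant-in-$\ln n$ pieces all fall into $\mathcal{O}(1/\sqrt{n})$.

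\textbf{Main obstacle.} The conceptual moves are routine applications of the $\sum s(n,k)x^k$ generating function and well-known asymptotic expansions, so the real work is bookkeeping in part 2: one must expand $4H_nA - A^2$ to $\mathcal{O}(\ln n/n)$ accuracy, verify that its linear-in-$\ln n$ coefficient collapses exactly to $-10 + 6\gamma + 4\ln 2$, and check that every subdominant piece, in particular $-(W(n)(2-A))^2$ and the $B$-contribution, is truly $\mathcal{O}(1/\sqrt{n})$ and hence does not contaminate the displayed $(\ln n)^2/\sqrt{n}$ and $\ln n/\sqrt{n}$ coefficients.
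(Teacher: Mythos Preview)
Your proposal is correct and follows essentially the same route as the paper: compute $\E[(S^C(2,n))^2]$ by extending the sum to $k=1$, evaluate $\sum_k k^2 s(n,k)x^k$ at $x=1$ and $x=1/2$ via differentiation of $\sum_k s(n,k)x^k=\Gamma(n+x)/\Gamma(x)$, subtract $\E[S^C(2,n)]^2$ from Lemma~\ref{lem: s_c_2_n_2}, and then substitute the standard expansions of $H_n$, $H_n^{(2)}$, $W(n)$, and $\psi(n+1/2)-\psi(1/2)$ for the asymptotics. The only cosmetic differences are that the paper splits $k^2=k+k(k-1)$ and differentiates once and twice separately (you apply $(x\,d/dx)^2$ directly), and the paper cites Gontcharoff for $\sum_j j^2 s(n,j)/n!=H_n+H_n^2-H_n^{(2)}$ whereas you derive it by evaluating at $x=1$; neither changes the argument.
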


\begin{proof} We prove two statements in sequence.

\begin{enumerate}[wide, labelindent=0pt]
    \item Similar to the proof of Lemma~\ref{lem: s_c_2_n_2},
\begingroup
\allowdisplaybreaks
\begin{align*}
\E\left[S^C(2,n)^2\right] &= \frac{1}{n!} \cdot \sum_{j=1}^n \frac{s(n, j)}{2^{j-1}} + \sum_{j=2}^{n} j^2 \cdot \frac{s(n, j)}{n!} \cdot \left(1-\frac{1}{2^{j-1}}\right)= \frac{1}{n!} \cdot \sum_{j=1}^n \frac{s(n, j)}{2^{j-1}} \\
&+ \sum_{j=1}^{n} j^2 \cdot \frac{s(n, j)}{n!} \cdot \left(1-\frac{1}{2^{j-1}}\right)= 2W(n) + \sum_{j=1}^{n} j^2 \cdot \frac{s(n, j)}{n!} - \sum_{j=1}^{n} j^2 \cdot \frac{s(n, j)}{2^{j-1} \cdot n!} \\
&= 2W(n) + \sum_{j=1}^{n} j^2 \cdot \frac{s(n, j)}{n!} -\sum_{j=1}^{n} j \cdot \frac{s(n, j)}{2^{j-1} \cdot n!} - \frac{1}{2} \cdot \sum_{j=1}^{n} j(j-1) \cdot \frac{s(n, j)}{2^{j-2} \cdot n!}.
\end{align*}
\endgroup

First, we have $\displaystyle\sum_{j=1}^{n} j^2 \cdot \dfrac{s(n, j)}{n!} = H_n+H^2_n-H^{(2)}_n$ (\citealp{gontcharoff_du_1944}). Second, note that $\displaystyle\sum_{j=1}^{n} j \cdot \dfrac{s(n, j)}{2^{j-1} \cdot n!} = W(n) \cdot (\psi(n+1/2)-\psi(1/2))$.

Third, we can differentiate twice the identity $\displaystyle\sum_{j=1}^{n} {s(n, j)} \cdot x^{j}=\dfrac{\Gamma(n+x)}{\Gamma(x)}$ to derive an explicit expression for $\displaystyle\sum_{j=1}^{n} j(j-1) \cdot \dfrac{s(n, j)}{2^{j-2} \cdot n!}$ as follows: 
\begin{align*}
\sum_{j=1}^{n} j(j-1)\cdot {s(n, j)} \cdot x^{j-2} &=  \frac{d^2}{dx^2}\left(\frac{\Gamma(n+x)}{\Gamma(x)}\right) = \frac{d}{dx}\left(\frac{\Gamma(n+x)}{\Gamma(x)} \cdot (\psi(n+x)-\psi(x))\right)\\
&= \frac{\Gamma(n+x)}{\Gamma(x)} \cdot \Big( \big(\psi(n+x)-\psi(x)\big)^2 + \big(\psi^{(1)}(n+x)-\psi^{(1)}(x)\big) \Big), 
\end{align*}
\noindent where $\psi ^ { ( m ) } ( z ) = \dfrac { d ^ { m } } { d z ^ { m } } \psi ( z )$ the polygamma function of order $m$, so that
\begin{align*}
\sum_{j=1}^{n} j(j-1) \cdot \frac{s(n, j)}{2^{j-2} \cdot n!} &= W(n) \cdot \Big( \big(\psi(n+1/2)-\psi(1/2)\big)^2 +\big(\psi^{(1)}(n+1/2)-\psi^{(1)}(1/2)\big) \Big).
\end{align*}
\noindent Thus,
\begin{align*}
\E\left[S^C(2,n)^2\right] &= 2W(n) + \left(H_n+H^2_n-H^{(2)}_n\right) -W(n) \cdot (\psi(n+1/2)-\psi(1/2))\\
&-\frac{W(n)}{2} \cdot \Big( \big(\psi(n+1/2)-\psi(1/2)\big)^2 +\big(\psi^{(1)}(n+1/2)-\psi^{(1)}(1/2)\big) \Big).
\end{align*}

By Lemma~\ref{lem: s_c_2_n_2},
\begin{align*}
\E\left[S^C(2,n)\right]=  W(n) \cdot \big(2 - (\psi(n+1/2)-\psi(1/2))\big)+H_n,
\end{align*}
\noindent so that by using $\Var\left[S^C(2,n)\right]=\E\left[S^C(2,n)^2\right] - \E\left[S^C(2,n)\right]^2$, we get the desired expression.
\item By using asymptotic expressions for each term, we get
\begin{align*}
\Var\left[S^C(2,n)\right]&= \ln{n}+\gamma - \frac{\pi^2}{6} +\frac{1}{2 \cdot \sqrt{\pi}} \cdot \left(\frac{1}{n^{1/2}} + \mathcal{O}\left(\frac{1}{n^{3/2}}\right)\right) \\
& \times \Bigg(-(\ln(n+1/2))^2+4 \cdot\ln(n+1/2) \cdot \ln{n}-2\ln(n+1/2) \\
&-(4\ln{2}+2\gamma)\ln{(n+1/2)}-8 \ln{n} + 4 \gamma \cdot \ln{n} \\
&+(8\ln{2}+4\gamma) \ln{(n+1/2)} + \mathcal{O}(1)\Bigg)\\
&=\ln{n}+\gamma - \frac{\pi^2}{6}+\frac{3}{2\cdot\sqrt{\pi}}\cdot\frac{(\ln{n})^2}{n^{1/2}} - \frac{5-2\ln{2}-3\gamma}{\sqrt{\pi}}\cdot \frac{\ln{n}}{n^{1/2}}  +\mathcal{O}\left(\frac{1}{n^{1/2}}\right),
\end{align*}
where $\lim_{n \to \infty} H^{(2)}_n=\dfrac{\pi^2}{6}$.\qedhere
\end{enumerate}
\end{proof}

\setcounter{prop}{2}
\begin{prop}
\label{prop: s_c_2_n} Consider a random game $G(2,n)$. Then, 
\begin{equation*}
\Pr \left( S^{C}(2,n)-\E\left[ S^{C}(2,n)\right] \leq x\cdot \sqrt{\Var\left[
S^{C}(2,n)\right] }\right) =\Phi (x)+\mathcal{O}\left( \frac{1}{\sqrt{\ln n}%
}\right) ,
\end{equation*}%
\noindent where $\Phi (\cdot )$ is the distribution function of the standard
normal distribution, 
\begin{align*}
\E\left[ S^{C}(2,n)\right] =\ln {n}+\gamma +o(1)\text{,\quad and \quad}
\sqrt{\Var\left[ S^{C}(2,n)\right] }=\sqrt{\ln {n}}- \frac{{\pi ^{2}}-6\gamma}{12\sqrt{\ln {n}}}+o\left( \frac{%
1}{\sqrt{\ln {n}}}\right).
\end{align*}
\end{prop}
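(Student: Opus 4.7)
The plan is to reduce the claim to the classical Berry-Esseen CLT for the number of cycles of a uniformly random permutation, via a coupling that shows $S^C(2,n)$ differs from this cycle count with vanishing probability. The asymptotic expressions for $\E[S^C(2,n)]$ and $\sqrt{\Var[S^C(2,n)]}$ then follow from Lemmas~\ref{lem: s_c_2_n_2} and~\ref{lem: s_c_2_n_3} by a direct Taylor expansion.

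First, I would invoke the classical Feller representation: if $C_n$ denotes the number of cycles of a uniformly random permutation of $[n]$, then $C_n$ has the same distribution as $\sum_{i=1}^n X_i$ with $X_1,\dots,X_n$ independent and $X_i \sim \mathrm{Bernoulli}(1/i)$. Since $\sum_{i=1}^n \Var(X_i) = H_n - H_n^{(2)} \sim \ln n$ and $\sum_{i=1}^n \E|X_i - \E X_i|^3 \leq H_n \sim \ln n$, the Berry-Esseen theorem cited in the preamble yields
\begin{equation*}
\sup_x \left| \Pr\left( \frac{C_n - H_n}{\sqrt{H_n - H_n^{(2)}}} \leq x \right) - \Phi(x) \right| = \mathcal{O}\left(\frac{1}{\sqrt{\ln n}}\right).
\end{equation*}
Next, couple $C_n$ with $S^C(2,n)$: given $C_n = k$, draw an independent $U \sim \mathrm{Bernoulli}(1/2^{k-1})$ and set $S^C(2,n) = 1$ if $U = 1$ and $S^C(2,n) = C_n$ otherwise. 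A direct check shows that the marginal of $S^C(2,n)$ matches its target distribution, and using the identity $\sum_{k=1}^n s(n,k)/(2^{k-1} n!) = 2W(n)$ derived inside the proof of Lemma~\ref{lem: s_c_2_n_2}, one obtains $\Pr(S^C(2,n) \neq C_n) \leq 2W(n) = \mathcal{O}(1/\sqrt{n})$, which is $o(1/\sqrt{\ln n})$. Since Kolmogorov distance is bounded by the probability of disagreement, $S^C(2,n)$ inherits the same Berry-Esseen rate when centered by $H_n$ and scaled by $\sqrt{H_n - H_n^{(2)}}$.

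Finally, I would translate the standardization from $(H_n, \sqrt{H_n - H_n^{(2)}})$ to $(\E[S^C(2,n)], \sqrt{\Var[S^C(2,n)]})$. By Lemma~\ref{lem: s_c_2_n_2}, $\E[S^C(2,n)] - H_n = \mathcal{O}(\ln n / \sqrt{n})$; by Lemma~\ref{lem: s_c_2_n_3}, $\Var[S^C(2,n)] - (H_n - H_n^{(2)}) = o(1)$. These discrepancies shift the centering by $o(1/\sqrt{\ln n})$ standard deviations and perturb the scale by a factor $1 + o(1/\ln n)$, each contributing at most $\mathcal{O}(1/\sqrt{\ln n})$ to the Kolmogorov distance. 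The two displayed asymptotic forms then follow by combining $H_n = \ln n + \gamma + \mathcal{O}(1/n)$ with $H_n^{(2)} = \pi^2/6 + \mathcal{O}(1/n)$ and using $\sqrt{\ln n + c + o(1)} = \sqrt{\ln n} + c/(2\sqrt{\ln n}) + o(1/\sqrt{\ln n})$ with $c = \gamma - \pi^2/6$. The main obstacle will be the re-standardization step: one must verify that swapping $(H_n, \sqrt{H_n - H_n^{(2)}})$ for the true moments of $S^C(2,n)$ preserves the $\mathcal{O}(1/\sqrt{\ln n})$ rate, which reduces to checking that the error terms from Lemmas~\ref{lem: s_c_2_n_2} and~\ref{lem: s_c_2_n_3} remain small after dividing by $\sqrt{\ln n}$.
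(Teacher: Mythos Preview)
Your argument is correct and takes a genuinely different route from the paper's. The paper works analytically: it applies the Esseen smoothing inequality directly to the characteristic function of the normalized $S^C(2,n)$, writing $\varphi_n(t)=A_n(t)+B_n(t)$ where $A_n(t)$ is the characteristic function coming from the cycle-count generating identity $\sum_k s(n,k)x^k=\Gamma(n+x)/\Gamma(x)$ and $B_n(t)$ collects the $2^{-k}$ correction terms. Stirling's formula is then used termwise to show $A_n(t)=e^{-t^2/2+\mathcal{O}((|t|+|t|^3)/\sqrt{\ln n})}$ and $B_n(t)=\mathcal{O}(|t|\,e^{\tau\sqrt{\ln n}}/\sqrt{n\ln n})$, after which the smoothing integral $J_n$ is bounded by $\mathcal{O}(1/\sqrt{\ln n})$. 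In other words, the paper adapts Hwang's exp-log machinery to a distribution that is not quite in the exp-log class.

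Your approach replaces all of this by two probabilistic observations: the Feller representation $C_n\stackrel{d}{=}\sum_{i=1}^n\mathrm{Bernoulli}(1/i)$, which hands you the Berry--Esseen rate for $C_n$ immediately, and the explicit coupling $S^C(2,n)=C_n$ except on an event of probability at most $2W(n)=\mathcal{O}(n^{-1/2})$. Because total-variation (hence Kolmogorov) distance is dominated by the coupling disagreement probability, the rate transfers for free; the re-standardization lemma $\sup_x|\Phi(cx+d)-\Phi(x)|=\mathcal{O}(|c-1|+|d|)$ then lets you swap $(H_n,\sqrt{H_n-H_n^{(2)}})$ for the true moments with negligible cost. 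This is shorter, entirely real-variable, and makes the structural reason for the result transparent---$S^C(2,n)$ is literally the cycle count except with probability $\mathcal{O}(n^{-1/2})$. What the paper's characteristic-function route buys in exchange is self-containment (no need to import the Feller decomposition) and a template that would extend to perturbations of Stirling-type distributions where no such clean coupling exists.
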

\begin{proof}
The proof of this statement is similar to \cite{hwang_convergence_1998} and
uses the Berry-Esseen theorem to find the convergence rate in the stated
central-limit result. The difference is that the problem does not belong to
the exp-log class immediately (one can observe several exp-terms below after
some manipulations). Still, similar reasoning can be applied to establish
the result. For notation simplicity, we let
\begin{align*}
\mu_n &\equiv \E\left[S^C(2,n)\right]=\ln{n}+\gamma + o(1),\\
\sigma_n &\equiv \sqrt{\Var\left[S^C(2,n)\right]}=\sqrt{\ln{n}}- \left(%
\frac{\pi^2}{12}-\frac{\gamma}{2}\right)\cdot \frac{1}{\sqrt{\ln{n}}}%
+o\left(\frac{1}{\sqrt{\ln{n}}}\right)\text{, and}\\
\varphi_n(t)&=\sum_{j=1}^{n} \Pr\left(S^C(2,n)=j\right) \cdot e^{it(j-\mu_n)/\sigma_n},
\end{align*}
\noindent where $\varphi _n(t)$ denotes the characteristic function of the normed variable $(S^C(2,n)-\mu_n)/{\sigma_n}$.

\begin{Berry}
Let $F ( x )$ be a nondecreasing function and $G(x)$ a differentiable function
of bounded variation on the real line. The corresponding Fourier-Stieltjes transforms $\varphi ( t )$ and $\gamma ( t )$ are then: 
\begin{equation*}
\varphi ( t ) = \int _ { - \infty } ^ { \infty } e ^ { i t x } d F ( x ) ,
\quad \gamma ( t ) = \int _ { - \infty } ^ { \infty } e ^ { i t x } d G ( x
).
\end{equation*}
Suppose that $F ( - \infty ) = G ( - \infty )$, $F ( \infty ) = G ( \infty )$%
, $T$ is an arbitrary positive number, and $\left| G ^ { \prime } ( x )
\right| \leq A$. Then, for every $b > 1 / ( 2 \pi )$, we have 
\begin{equation*}
\sup _ { - \infty < x < \infty } | F ( x ) - G ( x ) | \leq b \int _ { - T }
^ { T } \left| \frac { \varphi ( t ) - \gamma ( t ) } { t } \right| d t + r
( b ) \frac { A } { T },
\end{equation*}
where $r(b)$ is a positive constant depending only on $b$.
\end{Berry}

We proceed in two steps. In step 1, we reformulate the problem by using the
Berry-Esseen inequality. In step 2, we calculate the characteristic function
and use it to establish the result.\bigskip

\noindent {\textit{Step 1. Reformulated problem}}

Let $G ( x ) = \Phi ( x )$ (so that $A = 1 / \sqrt { 2 \pi }$) and $T
= T _ { n } = c\sigma _ { n }$, where $c>0$ is a sufficiently small
constant. By the Berry-Esseen inequality, it will be sufficient to prove
that 
\begin{align*}
J _ { n } = \int _ { - T _ { n } } ^ { T _ { n } } \left| \frac { \varphi _
{ n } ( t ) - e ^ { - \frac { 1 } { 2 } t ^ { 2 } } } { t } \right| d t = 
\mathcal{O} \left( \frac{1}{\sqrt{\ln{n}}} \right).
\end{align*}

\noindent {\textit{Step 2. Characteristic function}}
\begin{align*}
\varphi_n(t)&=\sum_{j=1}^{n} \Pr\left(S^C(2,n)=j\right) \cdot
e^{it(j-\mu_n)/\sigma_n} \\
&=e^{-it\mu_n/\sigma_n} \cdot \left( \frac{1}{n!} \cdot \sum_{j=1}^n \frac{%
s(n, j)}{2^{j-1}} \cdot e^{it/\sigma_n} + \sum_{j=2}^{n} \frac{s(n, j)}{n!}
\cdot \left(1-\frac{1}{2^{j-1}}\right) \cdot e^{ itj/\sigma_n}\right) \\
&=e^{-it\mu_n/\sigma_n} \cdot \left( \frac{1}{n!} \cdot \sum_{j=1}^n \frac{%
s(n, j)}{2^{j-1}} \cdot e^{it/\sigma_n} + \sum_{j=1}^{n} \frac{s(n, j)}{n!}
\cdot \left(1-\frac{1}{2^{j-1}}\right) \cdot e^{itj/\sigma_n}\right) \\
&= e^{-it\mu_n/\sigma_n} \cdot \left(\sum_{j=1}^{n} \frac{s(n, j)}{n!} \cdot
e^{itj/\sigma_n}\right) \\
&+ 2 \cdot e^{-it\mu_n/\sigma_n} \cdot \left(\frac{1}{n!} \cdot \sum_{j=1}^n 
\frac{s(n, j)}{2^{j}} \cdot e^{it/\sigma_n} - \sum_{j=1}^{n} \frac{s(n, j)}{%
n!} \cdot \frac{1}{2^{j}} \cdot e^{itj/\sigma_n} \right) =A_n(t)+B_n(t),
\end{align*}
where 
\begin{align*}
A_n(t) &\equiv e^{-it\mu_n/\sigma_n} \cdot\frac{1}{\Gamma
\left(e^{it/\sigma_n} \right)}\cdot \frac{\Gamma\left(n + e^{it/\sigma_n}
\right)}{\Gamma(n+1)}\text{, and} \\
B_n(t) &\equiv 2 \cdot e^{-it\mu_n/\sigma_n} \cdot\left(\frac{e^{it/\sigma_n} 
}{\Gamma(1/2)}\cdot \frac{\Gamma(n+1/2)}{\Gamma(n+1)} - \frac{1}{\Gamma
\left(e^{it/\sigma_n}/2 \right)}\cdot \frac{\Gamma\left(n +
e^{it/\sigma_n}/2 \right)}{\Gamma(n+1)} \right).
\end{align*}

We start by finding the asymptotic expression for $A_n(t)$. By denoting $%
e^{it/\sigma_n} \equiv 1 + \varepsilon_n$ with $\varepsilon_n = \dfrac{it}{%
\sigma_n}-\dfrac{t^2}{2\sigma^2_n}+\mathcal{O}\left(\dfrac{|t|^3}{\sigma^3_n}%
\right)$ and using Stirling's formula,

\begingroup
\allowdisplaybreaks
\begin{multline*}
\ln \frac { \Gamma \left( n + e ^{it/\sigma_n} \right) } { \Gamma ( n + 1 ) 
} = \ln{\Gamma \left( n + 1 + \varepsilon _ { n } \right)} - \ln{\Gamma
\left( n + 1 \right)}\\
= (n+1/2+\varepsilon _ { n })\cdot \ln{(n+1+\varepsilon _ { n })} -
(n+1+\varepsilon _ { n }) + \frac{1}{2}\ln {2\pi} 
- (n+1/2)\cdot \ln{(n+1)} + (n+1) - \frac{1}{2} \ln {2\pi} + 
\mathcal{O}\left(\frac{1}{n}\right) \\
= \varepsilon _ { n } \ln (n+1+ \varepsilon _ { n }) + (n+1/2)\ln\left(1+%
\frac{\varepsilon _ { n }}{n+1}\right) - \varepsilon _ { n } + \mathcal{O}%
\left(\frac{1}{n}\right) 
= \varepsilon _ { n } \ln n + \mathcal{O}\left(\frac{1}{n}\right) \\
= \left(\dfrac{it}{\sigma_n}-\dfrac{t^2}{2\sigma^2_n}+\mathcal{O}\left(%
\dfrac{|t|^3}{\sigma^3_n}\right)\right) \cdot \ln n + \mathcal{O}\left(%
\frac{1}{n}\right) 
= \left(\dfrac{it}{\sqrt{\ln{n}}}-\dfrac{t^2}{2\ln{n}}+\mathcal{O}\left(%
\dfrac{|t|^3}{(\ln{n})^{3/2}}\right)\right) \cdot \ln n + \mathcal{O}\left(%
\frac{1}{n}\right) \\
= it \cdot \sqrt{\ln{n}} -\dfrac{t^2}{2} + \mathcal{O}\left(\dfrac{|t|^3}{%
\sqrt{\ln{n}}}\right).
\end{multline*}
\endgroup

\noindent In addition, 
\begin{align*}
\ln \Gamma \left(e^{it/\sigma_n} \right) &= \ln \Gamma\left(1 + \mathcal{O}%
\left(\dfrac{|t|}{\sqrt{\ln{n}}}\right)\right) = \mathcal{O}\left(\dfrac{|t|%
}{\sqrt{\ln{n}}}\right)\qquad\text{and}\\
\frac{it\mu_n}{\sigma_n}&=it \cdot \frac{\ln{n}+\gamma + o(1)}{\sqrt{\ln{n}}%
- \left(\frac{\pi^2}{12}-\frac{\gamma}{2}\right)\cdot \frac{1}{\sqrt{\ln{n}}%
}+o\left(\frac{1}{\sqrt{\ln{n}}}\right)} = it \cdot\sqrt{\ln{n}} + 
\mathcal{O}\left(\dfrac{|t|}{\sqrt{\ln{n}}}\right).
\end{align*}

\noindent Collecting all terms, we get 
\begin{align*}
A_n(t) = e^{-\left(it \cdot\sqrt{\ln{n}} + \mathcal{O}\left(\frac{|t|}{%
\sqrt{\ln{n}}}\right)\right)} \cdot e^{-\mathcal{O}\left(\frac{|t|}{\sqrt{%
\ln{n}}}\right)} \cdot e^{it \cdot \sqrt{\ln{n}} -\frac{t^2}{2} +\mathcal{O%
}\left(\frac{|t|^3}{\sqrt{\ln{n}}}\right)} = e^{-\frac{t^2}{2} + \mathcal{O}%
\left(\frac{|t|+|t|^3}{\sqrt{\ln{n}}}\right)}.
\end{align*}

Next, we find the asymptotic expression for $B_n(t)$. Using similar
calculations, 
\begin{multline*}
\ln \frac { \Gamma \left( n + 1/2 \right) } { \Gamma ( n + 1 ) } = \ln{%
\Gamma \left( n + 1/2 \right)} - \ln{\Gamma \left( n + 1 \right)} \\
= n\cdot \ln{(n+1/2)} - (n+1/2) + \frac{1}{2}  \ln {2\pi} 
- (n+1/2)\cdot \ln{(n+1)} + (n+1) - \frac{1}{2}  \ln {2\pi} + 
\mathcal{O}\left(\frac{1}{n}\right) \\
= -\frac{1}{2} \cdot \ln{(n+1)} + n\ln\left(1-\frac{1}{2}\cdot \frac{1}{%
n+1}\right) + 1/2 + \mathcal{O}\left(\frac{1}{n}\right) 
=-\frac{1}{2} \cdot \ln{n}+\mathcal{O}\left(\frac{1}{n}\right),
\end{multline*}
\noindent and 
\begin{multline*}
\ln \frac { \Gamma \left( n + e ^{it/\sigma_n}/2 \right) } { \Gamma ( n + 1
) } = \ln{\Gamma \left( n + 1/2 + \varepsilon _ { n }/2 \right)} - \ln{%
\Gamma \left( n + 1 \right)} \\
= (n+\varepsilon _ { n }/2)\cdot \ln{(n+1/2+\varepsilon _ { n }/2)} -
(n+1/2+\varepsilon _ { n }/2) + \frac{1}{2} \ln {2\pi} \\
- (n+1/2)\cdot \ln{(n+1)} + (n+1) - \frac{1}{2} \ln {2\pi} + 
\mathcal{O}\left(\frac{1}{n}\right) \\
= (\varepsilon _ { n }/2-1/2) \ln{(n+1)} + (n+\varepsilon _ { n }/2) \ln
\left(1+\frac{\varepsilon _ { n }/2-1/2}{n+1}\right)
+ (1/2-\varepsilon _ { n }/2) + \mathcal{O}\left(\frac{1}{n}\right) \\
= - \frac{1}{2} \ln{n} + \frac{1}{2}it\cdot \sqrt{\ln{n}} + \mathcal{%
O}\left(t^2\right).
\end{multline*}

\noindent Furthermore, 
\begin{align*}
\ln \Gamma \left(e^{it/\sigma_n}/2 \right) &= \ln \Gamma\left(1/2 + 
\mathcal{O}\left(\dfrac{|t|}{\sqrt{\ln{n}}}\right)\right) = \mathcal{O}%
\left(1\right)\qquad\text{and}\\
\frac{it}{\sigma_n}&=it\cdot \frac{1}{{\sqrt{\ln{n}}- \left(\frac{\pi^2}{12}%
-\frac{\gamma}{2}\right)\cdot \frac{1}{\sqrt{\ln{n}}}+o\left(\frac{1}{\sqrt{%
\ln{n}}}\right)}} = \mathcal{O}\left(\dfrac{|t|}{\sqrt{\ln{n}}}\right).
\end{align*}

\noindent Collecting all terms, we get 
\begin{align*}
B_n(t) &= \frac{2}{\sqrt{\pi}} \cdot e^{-\left(it \cdot\sqrt{\ln{n}} + 
\mathcal{O}\left(\frac{|t|}{\sqrt{\ln{n}}}\right)\right)} \cdot e^{\mathcal{%
O}\left(\frac{|t|}{\sqrt{\ln{n}}}\right)} \cdot e^{-\frac{1}{2} \cdot \ln{n%
} + \mathcal{O}\left(\frac{1}{n}\right)} \\
&- 2 \cdot e^{-\left(it \cdot\sqrt{\ln{n}} + \mathcal{O}\left(\frac{|t|}{%
\sqrt{\ln{n}}}\right)\right)} \cdot e^{-\mathcal{O}\left(1\right)} \cdot
e^{- \frac{1}{2}\cdot \ln{n} + \frac{1}{2}it\cdot \sqrt{\ln{n}} + \mathcal{%
O}\left(t^2\right)}.
\end{align*}

Note that $B_n(0)=0$ and $B_n(s)=\mathcal{O}\left(\frac{e^{\tau \cdot \sqrt{\ln{n}}}}{%
n^{1/2}}\right)$ uniformly for $|s| \leq \tau$, $s \in \mathcal{C}$, for some
fixed $\tau>0$. By denoting $\kappa_n \equiv \frac{n^{1/2}}{e^{\tau \cdot \sqrt{\ln{n}}}}$ for convenience, we can rewrite $B_n(s)=\mathcal{O}\left(\frac{1}{\kappa_n}\right)$ for $|s| \leq \tau$. Furthermore, by taking a small circle around the origin we easily obtain $B_n(s)=\mathcal{O}\left(\frac{|s|}{\kappa_n}\right)$ for $|s| \leq c<\tau$, where sufficiently small $c>0$ can be taken less than $\tau$. Consequently,
\begin{align*}
\varphi _ { n } ( t )=A_n(t) + B_n(t) = e^{-\frac{t^2}{2} + \mathcal{O}\left(\frac{|t|+|t|^3%
}{\sqrt{\ln{n}}}\right)} + \mathcal{O}\left(\frac{|t|}{\kappa_n \cdot \sqrt{\ln n}}\right),
\end{align*}
for $| t | \leq T _ { n }=c\sigma_n$.

In fact, we can use Levy's convergence theorem to obtain the convergence
result. However, we still need to use the Berry-Esseen inequality to find
the convergence rate.

Based on the obtained approximation, we can follow the proof of Theorem 1 in \cite%
{hwang_convergence_1998}. That is, using the inequality $\left| e ^ { w } -
1 \right| \leq | w | e ^ { | w | }$ for all complex $w$, we obtain 
\begin{align*}
\left| \frac { \varphi _ { n } ( t ) - e ^ { - \frac { 1 } { 2 } t ^ { 2 } } 
} { t } \right| &= \mathcal{O} \left( \left( \frac { 1 + t ^ { 2 } } { \sqrt{%
\ln{n}} } \right) \exp \left( - \frac { t ^ { 2 } } { 2 } + \mathcal{O}
\left( \frac { | t | + | t | ^ { 3 } } { \sqrt{\ln{n}} } \right) \right) + 
\frac { 1 } { {\kappa_n \cdot \sqrt{\ln{n}}} } \right) \\
&= \mathcal{O} \left( \left( \frac { 1 + t ^ { 2 } } { \sqrt{\ln{n}} }
\right) e ^ { - \frac { 1 } { 4 } t ^ { 2 } } + \frac { 1 } { {\kappa_n
\cdot \sqrt{\ln{n}}} } \right) \quad \left( | t | \leq T _ { n } \right),
\end{align*}
for sufficiently small $0<c<\tau$. 

Thus, 
\begin{align*}
J_n &= \int _ { - T _ { n } } ^ { T _ { n } } \left| \frac { \varphi _ { n }
( t ) - e ^ { - \frac { 1 } { 2 } t ^ { 2 } } } { t } \right| d t = \mathcal{O} \left( \frac { 1 } { \sqrt{\ln{n}} } \int _ { - T _ { n } }
^ { T _ { n } } \left( 1 + t ^ { 2 } \right) e ^ { - \frac { 1 } { 4 } t ^ {
2 } } d t + \frac { 1 } { {\kappa_n} } \right) \\
&= \mathcal{O} \left( \frac { 1 } { \sqrt{\ln{n}} } + \frac { 1 } { {%
\kappa_n} } \right) = \mathcal{O} \left( \frac { 1 } { \sqrt{\ln{n}} } \right),
\end{align*}
\noindent because $\lim\limits_{n \to \infty} \dfrac{\sqrt{\ln{n}}}{\kappa_n} = \lim\limits_{n \to \infty} 
\dfrac{\sqrt{\ln{n}} \cdot e^{\tau \cdot \sqrt{\ln{n}}}}{n^{1/2}} 
= \lim \limits_{n \to \infty} \dfrac{n \cdot e^{\tau \cdot n}}{e^{n^2/2}} =0$. Due to Step 1, this concludes the proof.\qedhere
\end{proof}

\newpage
\section{Enumerative Issues}
\label{appendix_B}
To illustrate the enumerative challenges posed by games in which both players have many agents, consider games in which Row has $m=3$ actions. We now focus on the basic problem of finding the probability that Column has no strictly dominated actions. Fix the first row of Column's payoff matrix to be the identity permutation: $c_{1\cdot }=e_{n}\equiv (1,2, \ldots, n)$. There are no strictly dominated actions for Column
if and only if a pair of permutations $(c_{2\cdot },c_{3\cdot })$ avoids the 
permutation pattern in which $c_{2j}>c_{2i}$ and $c_{3j}>c_{3i}$ for some $j>i$. 
This particular avoidance imposes restrictions on the
pair of \textit{i.i.d.} uniform $(c_{2\cdot },c_{3\cdot })$ that has been
the main object of interest for \cite{hammett_how_2008}. Formally, our
problem can be equivalently reformulated in terms of what is termed \textit{%
parallel permutation patterns }in enumerative combinatorics (see our Lemma~%
\ref{lem: bruhat} below). This problem lies at the
research frontier of that literature (\citealp{hammett_how_2008,
gunby_asymptotics_2019}). To make things worse, in general, permutation
patterns induced by strict dominance are different from those studied in the
literature on permutation avoidance.

As a numerical demonstration, Table~\ref{table: issues} displays the number
of possible Column's matrices with one fixed payoff row for $m=3$ and $n\in
\lbrack 6]$ corresponding to exactly $k$ strictly undominated actions, $k\in
\lbrack n]$.\footnote{%
These numbers correspond to exact computations for all $(n!)^{2}$ possible
combinations.} These numbers can be viewed as a generalization of the
unsigned Stirling numbers of the first kind for $m=3$. In particular, the
underlined sequence corresponds to the number of incidents in which Column
has no strictly dominated actions, as described above. The table suggests
properties similar to those of the standard Stirling numbers. Values appear
to be log-concave (and unimodal) and asymptotically normal with faster
convergence rates. This hints at the qualitative similarities
between the general $m$ by $n$ case and the particular $2$ by $n$ case
studied in Section 3 of the main text.

\renewcommand\thetable{\thesection.\arabic{table}}

\begin{table}[ht]
\centering
\begin{tabular}{c|cccccc}
\multicolumn{1}{l}{$n$} &  &  &  &  &  &  \\ \cline{1-1}
1 & \underline{1} &  &  &  &  &  \\ 
2 & 1 & \underline{3} &  &  &  &  \\ 
3 & 4 & 15 & \underline{17} &  &  &  \\ 
4 & 36 & 147 & 242 & \underline{151} &  &  \\ 
5 & 576 & 2460 & 4775 & 4690 & \underline{1899} &  \\ 
6 & 14400 & 63228 & 134909 & 164193 & 109959 & \underline{31711} \\ \hline
\multicolumn{1}{l}{} & 1 & 2 & 3 & 4 & 5 & 6 \\ \cline{2-7}
\multicolumn{1}{l}{} & \multicolumn{6}{c}{$k$}%
\end{tabular}
\captionsetup{justification=centering}
\caption{$(n!)^2 \cdot \Pr(U^C(3,n)=k)$, $k \in \lbrack n \rbrack$: exact
calculations (the \protect\underline{underlined} sequence corresponds to sequence
\href{https://oeis.org/A007767}{A007767} in the \href{https://oeis.org/}{OEIS})}
\label{table: issues}
\end{table}

The combinatorics community has accumulated knowledge of many number
sequences summarized in the On-Line Encyclopedia of Integer Sequences (\href{https://oeis.org/}%
{OEIS}). It is worth noting that none of the sequences corresponding to any
dimension of our analysis has been enumerated before in the \href{https://oeis.org/}%
{OEIS}. This suggests that our fundamental combinatorial object has not been
studied previously.

To state Lemma~\ref{lem: bruhat}, we need to introduce three additional
definitions related to the literature on permutation patterns. First, we say
that for $\sigma _ { 1 } , \ldots , \sigma _ { d } \in S _ { n }$ and $%
\sigma _ { 1 } ^ { \prime } , \ldots , \sigma _ { d } ^ { \prime } \in S _ {
m }$, $\left( \sigma _ { 1 } , \ldots , \sigma _ { d } \right)$ \textit{%
avoids} $\left( \sigma _ { 1 } ^ { \prime } , \ldots , \sigma _ { d } ^ {
\prime } \right)$ if there does not exist indices $c _ { 1 } < \cdots < c _
{ m }$ such that $\sigma _ { i } \left( c _ { 1 } \right) \sigma _ { i }
\left( c _ { 2 } \right) \cdots \sigma _ { i } \left( c _ { m } \right)$ is
order-isomorphic to $\sigma _ { i } ^ { \prime }$ for all $i$ (e.g., see %
\citealp{gunby_asymptotics_2019, klazar_counting_2000}). Second, we say that 
$\pi \preceq \sigma$ in the \textit{weak Bruhat Order} if there is a chain $%
\sigma=\omega_{1} \rightarrow \omega_{2} \rightarrow \cdots \rightarrow
\omega_{s}=\pi$, where each $\omega_{t}$ is a \textit{simple reduction} of $%
\omega_{t-1}$, i.e. obtained from $\omega_{t-1}$ by transposing two \textit{%
adjacent} elements $\omega_{t-1}(i), \omega_{t-1}(i+1)$ with $%
\omega_{t-1}(i)>\omega_{t-1}(i+1)$. Equivalently (see Lemma 4.1 in %
\citealp{hammett_how_2008}), $\pi \preceq \sigma$ in the \textit{weak Bruhat
Order} if $I(\pi) \subseteq I(\sigma)$, where for any $\omega \in S_n$, the
inversion set $I(\omega)=\{(i,j) \mid i<j\text{ with }\omega^{-1}(i)>%
\omega^{-1}(j)\}$ is defined to be the set of all inversions in $\omega$.
Finally, for any $\sigma \in S_n$, let $\sigma^\star \in S_n$ denote its 
\textit{complement}, i.e. $\sigma^{\star}(i)=n+1-\sigma(i)$.

\begin{lemma}
\label{lem: bruhat} Consider a random game $G(3,n)$. Then, for any $n \geq 1$%
, 
\begin{equation*}
\prod_{i=1}^{n}(H_i / i)\leq \Pr
(U^C(3,n)=n)=\Pr\left((c_{3\cdot}^{\star})^{-1} \preceq
c_{2\cdot}^{-1}\right) \leq (0.362)^{n}.
\end{equation*}
\end{lemma}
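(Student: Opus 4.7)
The plan is to reduce $\{U^C(3,n)=n\}$ to a weak Bruhat comparison between two i.i.d.\ uniform random permutations, and then to import the two numerical bounds on that probability from \cite{hammett_how_2008}.

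\textit{Equivalence.} Having fixed $c_{1\cdot}=e_n$, column $j$ can only be strictly dominated by column $k$ when $j<k$ and additionally $c_{2j}<c_{2k}$ and $c_{3j}<c_{3k}$. Hence $\{U^C(3,n)=n\}$ is exactly the event that, for every pair $j<k$, at least one of $c_{2j}>c_{2k}$ or $c_{3j}>c_{3k}$ holds. Rewriting the second alternative as $c_{3\cdot}^\star(j)<c_{3\cdot}^\star(k)$, this becomes the condition that every position-pair $(j,k)$ which is a position-descent of $c_{3\cdot}^\star$ is also a position-descent of $c_{2\cdot}$. Using the identity that the position-descent set of any $\omega\in S_n$ equals $I(\omega^{-1})$, the condition reads $I\big((c_{3\cdot}^\star)^{-1}\big)\subseteq I\big(c_{2\cdot}^{-1}\big)$, which by the cited Lemma 4.1 of \cite{hammett_how_2008} is equivalent to $(c_{3\cdot}^\star)^{-1}\preceq c_{2\cdot}^{-1}$ in the weak Bruhat order; this yields the middle equality.

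\textit{Reduction to a classical quantity.} Because $\omega\mapsto\omega^\star$ and $\omega\mapsto\omega^{-1}$ are bijections of $S_n$, the pair $\big((c_{3\cdot}^\star)^{-1},c_{2\cdot}^{-1}\big)$ is i.i.d.\ uniform on $S_n\times S_n$. Therefore $\Pr(U^C(3,n)=n)=P_n$, where $P_n\equiv \Pr(\pi\preceq\sigma)$ for two i.i.d.\ uniform $\pi,\sigma\in S_n$, which is exactly the central object of \cite{hammett_how_2008}.

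\textit{Bounds on $P_n$.} Both inequalities are then imported from that work. The upper bound $P_n\le(0.362)^n$ arises from their exponential-moment analysis, which combines a sharp upper estimate on the size of a Bruhat interval $|[e,\sigma]|$ with a concentration estimate for $|I(\sigma)|$ around $\binom{n}{2}/2$; optimizing the resulting expression yields the constant $0.362$. The lower bound $P_n\ge\prod_{i=1}^n H_i/i$ is obtained from their incremental insertion construction, in which compatible pairs of size $n$ are built from compatible pairs of size $i-1$ by inserting the largest value, with at least $H_i$ of the $i$ possible insertion slots preserving the Bruhat comparison at stage $i$, whence the product by induction.

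The main obstacle is naturally the upper bound: as flagged in the text preceding the lemma, sharp exponential-rate control on $P_n$ is at the research frontier of permutation-pattern enumeration because the weak Bruhat order is neither a product order nor a nice lattice, so obtaining the constant $0.362$ requires the delicate variational argument of \cite{hammett_how_2008}. By comparison, the equivalence is just bookkeeping between position-descents and value-inversions, and the lower bound reduces to a clean inductive counting of insertion positions.
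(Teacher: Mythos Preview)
Your proposal is correct and follows essentially the same route as the paper: fix $c_{1\cdot}=e_n$, translate ``no dominated columns'' into the parallel pattern-avoidance condition on $(c_{2\cdot},c_{3\cdot})$, rewrite it via complements as the inversion-set containment $I\big((c_{3\cdot}^\star)^{-1}\big)\subseteq I\big(c_{2\cdot}^{-1}\big)$, identify this with the weak Bruhat relation, and then invoke \cite{hammett_how_2008} for the numerical bounds. The only difference is cosmetic: you supply a (slightly informal) sketch of how the cited bounds are obtained, whereas the paper simply points to the reference.
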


\begin{proof}
As in Lemma 1, we can set $c_{1\cdot}$ to $e_n$. This is without loss of generality.

All Column's actions are undominated if and only if $(c_{2\cdot}, c_{3\cdot})$ avoids $(12,12)$. This holds whenever, for any $i<j$ with $c_{2i}<c_{2j}$, we have $c_{3i}>c_{3j}$, or equivalently $c_{3i}^\star<c_{3j}^\star$. In other words, the set $I(c_{2\cdot}^{-1})$ of inversions of $c_{2\cdot}$ contains the set $I\left((c_{3\cdot}^\star)^{-1}\right)$ of inversions of $\left(c_{3\cdot}^\star\right)^{-1}$, i.e. $I\left((c_{3\cdot}^\star)^{-1}\right) \subseteq I(c_{2\cdot}^{-1})$. This occurs if and only if $(c_{3\cdot}^{\star})^{-1} \preceq c_{2\cdot}^{-1}$. 

Certainly, $\Pr\left((c_{3\cdot}^{\star})^{-1} \preceq c_{2\cdot}^{-1}\right)=\Pr\left(\pi \preceq \sigma\right)$, where $\sigma, \pi \in S_n$ are selected independently and uniformly at random. Probability bounds for this problem have been studied by \cite{hammett_how_2008}.\qedhere
\end{proof}

\newpage
\section{Many Players}
\label{appendix_C}
Consider a random $n$-person game $G(m_1,m_2,\ldots,m_n)$, where player $k$ has $m_k$ actions, $k \in [n]$. Then, player $k$ has the number of her undominated actions $U^k\left(m_1,m_2,\ldots,m_n\right)$ that is equivalent to $U^R\left(m_k, \prod_{i \neq k} m_i\right)$ already examined in Section 4.1. Because the number of other players' profiles $\prod_{i \neq k} m_i$ is multiplicative and becomes large even in small games, it is problematic for any player to eliminate any of her actions.

\renewcommand\thefigure{\thesection.\arabic{figure}}
\begin{figure}[h]
    \centering
    \begin{subfigure}[t]{.49\textwidth}
    \centering
    \includegraphics[width=\linewidth]{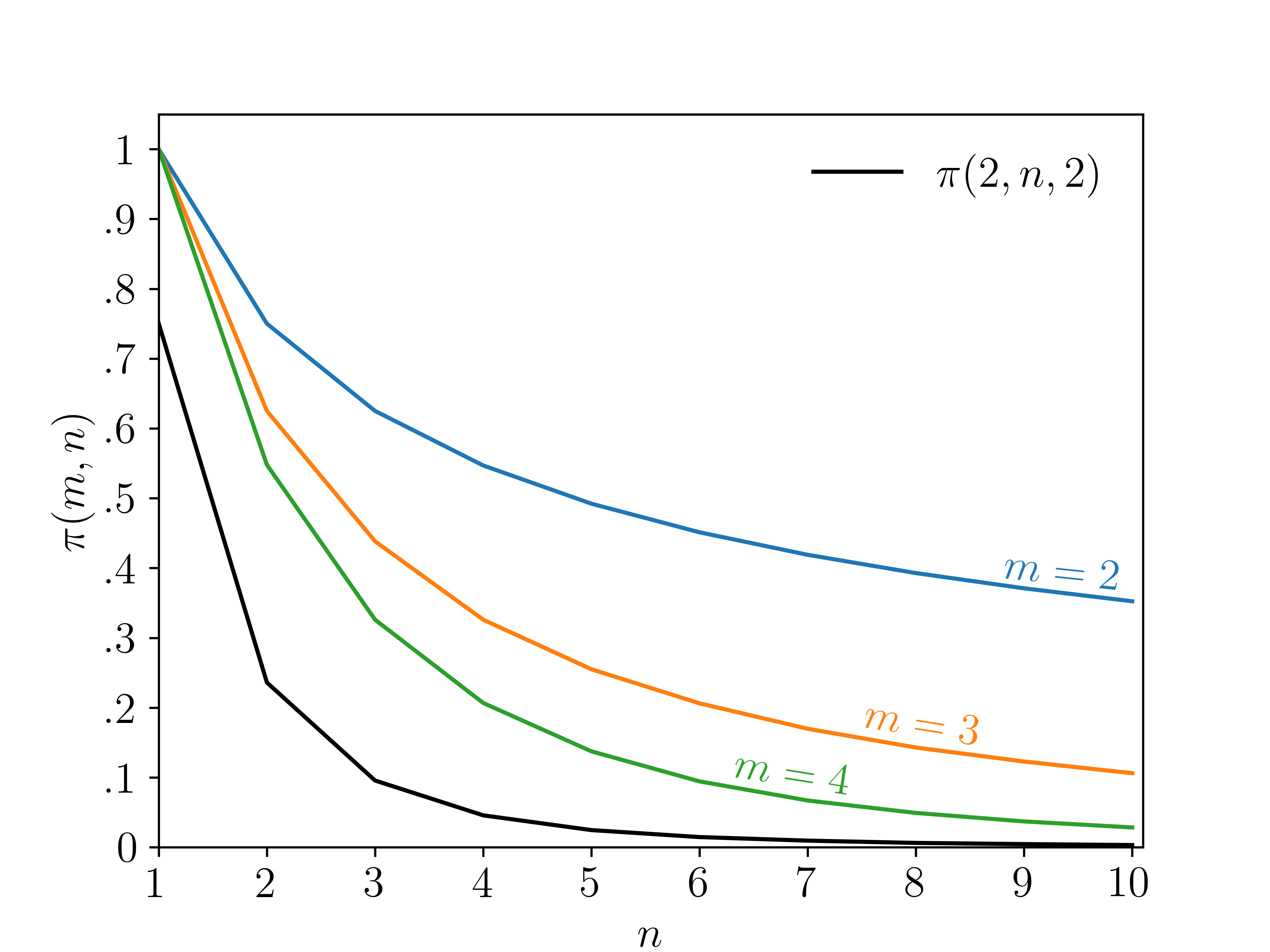}
            \caption{Probability of strict dominance}\label{fig: pi_three}
    \end{subfigure}
    \begin{subfigure}[t]{.49\textwidth}
    \centering
    \includegraphics[width=\linewidth]{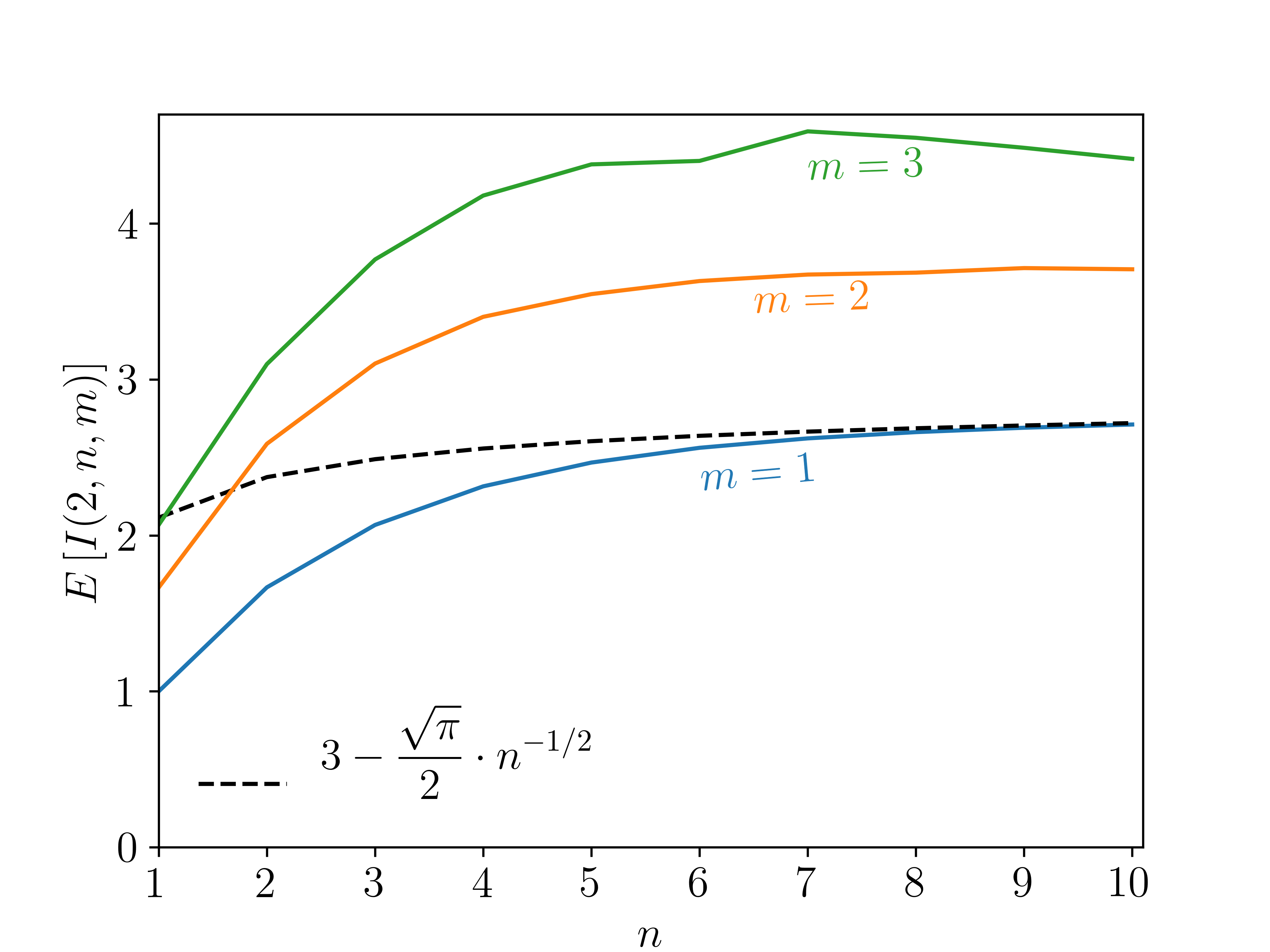}
    \caption{Conditional number of iterations}\label{fig: I_three}
    \end{subfigure}
    \par
    \medskip
    \par
    \begin{subfigure}[t]{\textwidth}
    \centering
    \vspace{0pt}
    \includegraphics[width=0.49\linewidth]{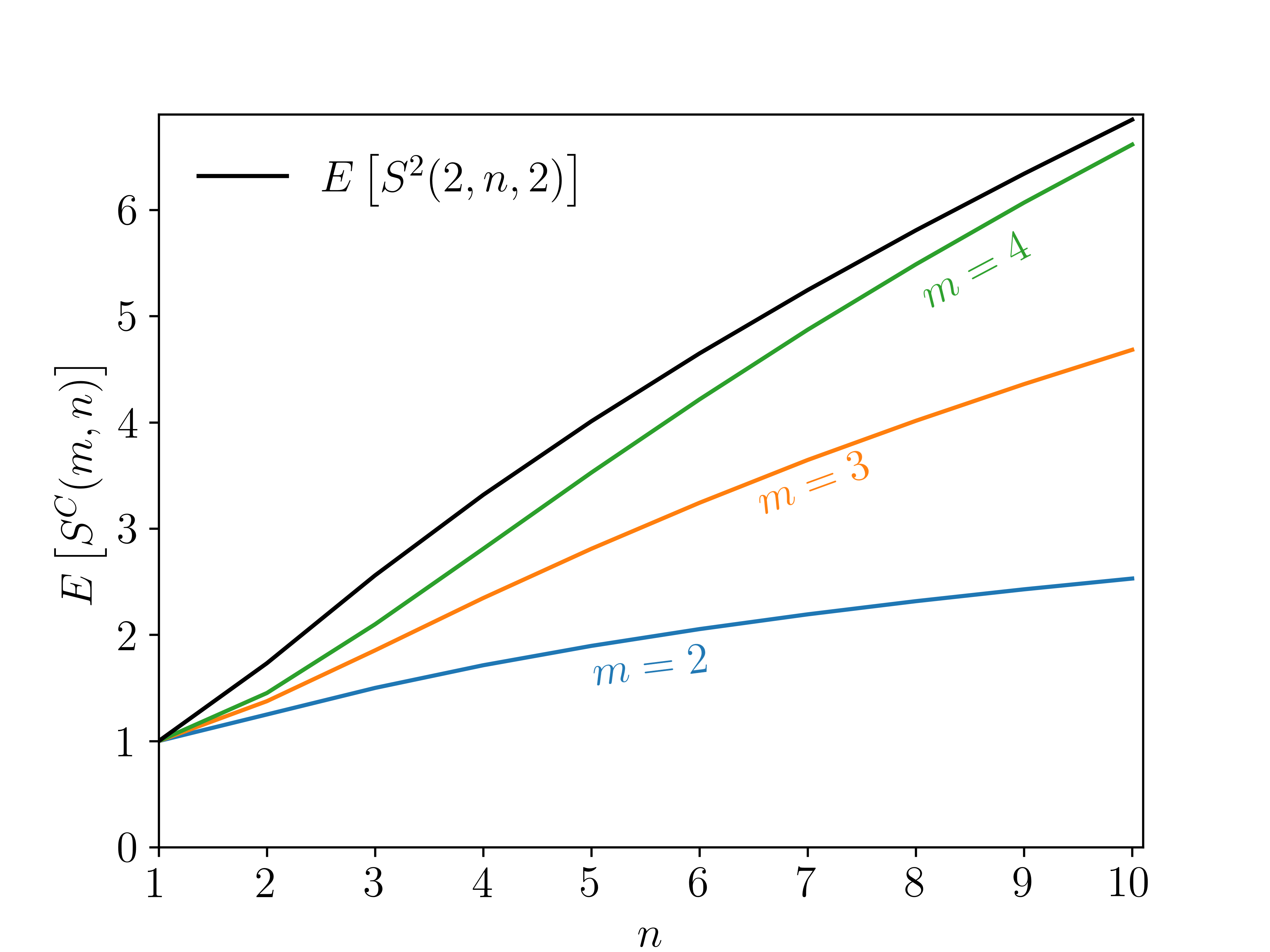}
    \caption{Surviving actions}\label{fig: S_column_three}
    \end{subfigure}
    \caption{Three dimensions of dominance solvability in games with three players}
    \label{ThreePlayers}
    \end{figure}   

Therefore, results pertaining to dominance solvability features are straightforward even for small games with at least three players and closely trace results for two-player games in which players' action sets are comparably large. First, getting strict-dominance solvability is challenging even for small games since it is unlikely players can eliminate any action at all. Second, surviving games are likely to coincide with the games prior to the process of elimination---iterated elimination is statistically ineffective at simplifying games. Last, since it is even more demanding for any player to eliminate many actions at each elimination round, we expect the conditional number of iterations to be large even in small games. Figure~\ref{ThreePlayers} illustrates these insights for three-player games. In particular, it shows that for random three-player games $G(2,n,m)$ with $G(2,n,1)$ being equivalent to a two-player game $G(2,n)$, the conditional number of iterations is component-wise increasing and exceeds 3 even for small game dimensions.

Because we have qualitative results even for small games, asymptotics is less interesting for random games with many, at least three, players. However, similar arguments to those used in the paper can be exploited to derive analogues of our main results.

\newpage
\section{Additional Figures}
\label{appendix_D}

\subsection{Alternative Distributional Assumptions in Imbalanced Games}

Figure~\ref{fig: alt_imbal} is the analogue of Figure 5 in the main text and displays the three dimensions of dominance solvability in $3 \times n$ games, rather than $n \times n$ games, for the various distributional assumptions we inspect.

\setcounter{figure}{0}
\begin{figure}[h]
\centering
\begin{subfigure}[t]{.49\textwidth}
\centering
\includegraphics[width=\linewidth]{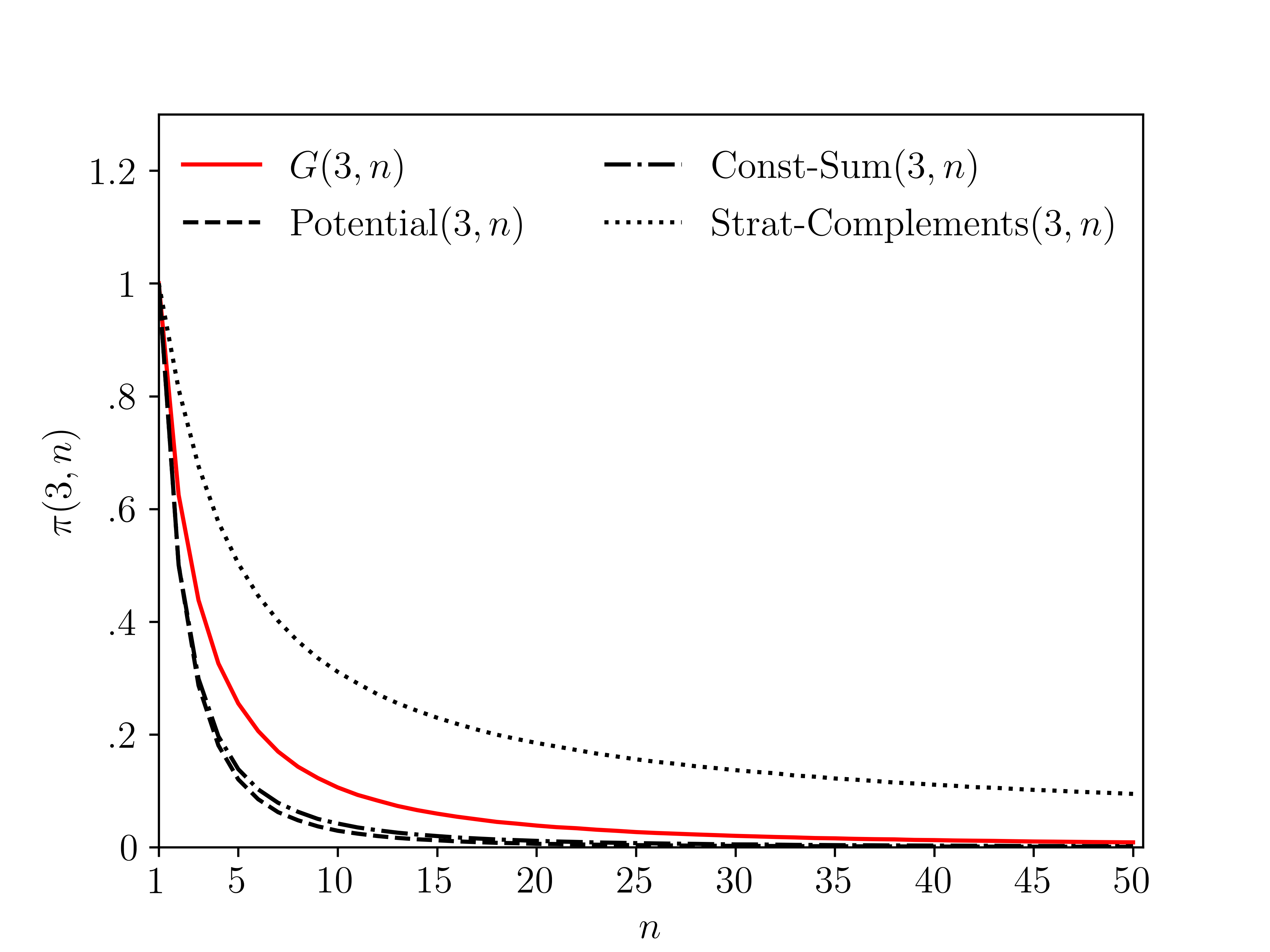}
        \caption{Probability of strict dominance}\label{fig: pi_alt_imbal}
\end{subfigure}
\begin{subfigure}[t]{.49\textwidth}
\centering
\includegraphics[width=\linewidth]{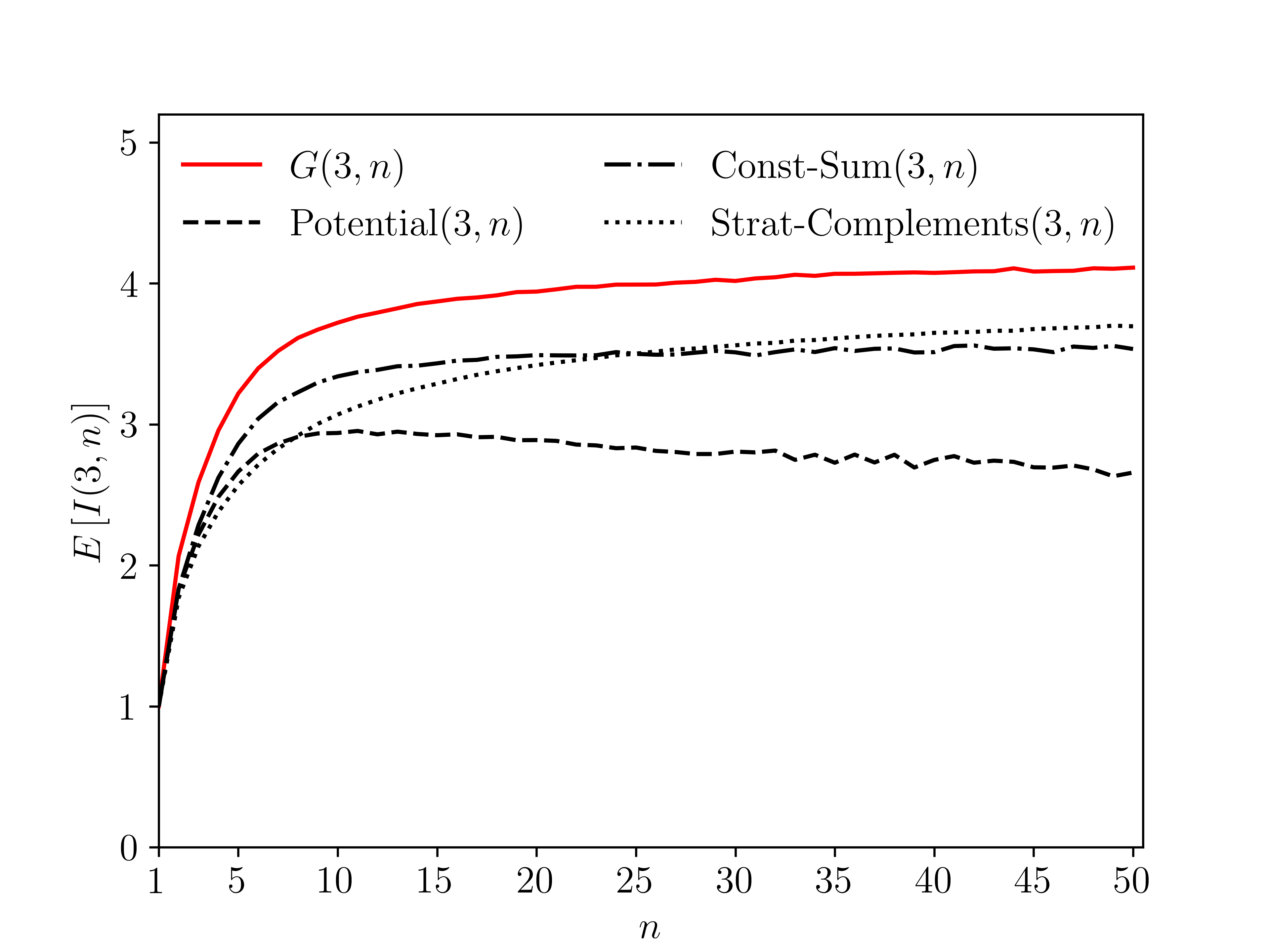}
\caption{Conditional number of iterations}\label{fig: I_alt_imbal}
\end{subfigure}
\par
\medskip
\par
\begin{subfigure}[t]{\textwidth}
\centering
\vspace{0pt}
\includegraphics[width=0.49\linewidth]{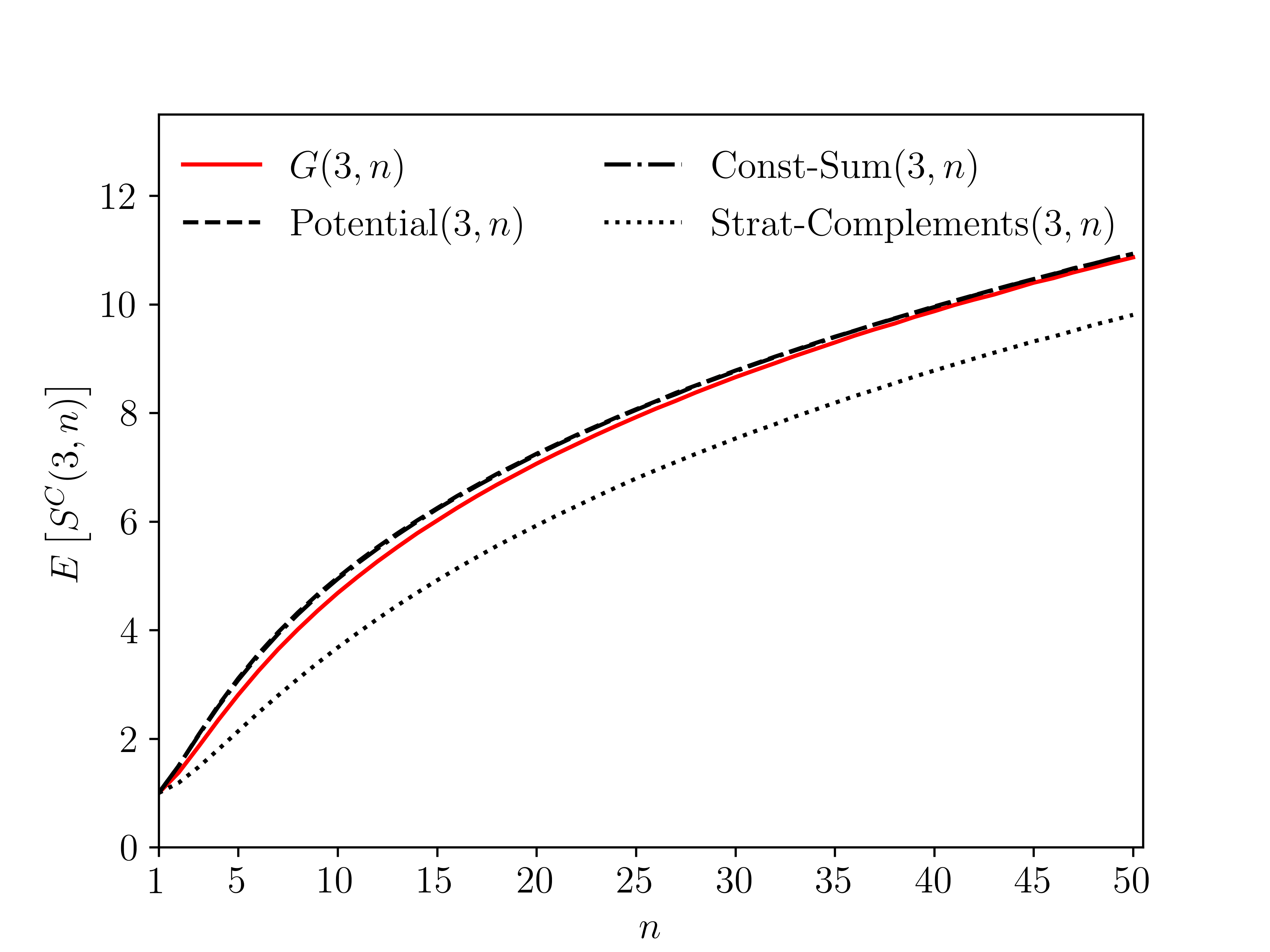}
\caption{Surviving actions}\label{fig: S_alt_imbal}
\end{subfigure}
\captionsetup{justification=centering}
\caption{Three dimensions of dominance solvability in $3 \times n$ games for alternative distributional assumptions}
\label{fig: alt_imbal}
\end{figure}

As can be seen, even when action sets are imbalanced, our insights regarding the efficacy of the iterative elimination procedure remain similar for uniformly random games and games governed by other distributions, corresponding to commonly-studied payoff structures.

One exception pertains to the number of iterations required conditional on dominance solvability in imbalanced potential games. As Column's action set expands, that number declines. We conjecture the underlying reason is the following. Conditional on a potential game being dominance solvable, when Column has multiple actions left after the first iterative round, Row must have some actions to eliminate---otherwise, the game would not be dominance solvable to begin with. This imposes restrictions on Column's payoffs that are absent when Row and Column's payoffs are determined independently. A deeper  investigation of the features of dominance solvability in potential games is left for future research.


\subsection{Dominance via Mixed Strategies in Lab Games}

Figure~\ref{fig: mixed_lab} presents the likelihood of playing an action dominated by a mixed strategy in laboratory games, using data from \cite{fudenberg_predicting_2019}, and in laboratory games, using data compiled by \cite{wright2014level}. We consider CRRA utilities, highlighting the risk-aversion parameters estimated by  \cite{fudenberg_predicting_2019} for those games.

As can be seen, in both samples, participants have a substantially harder time eliminating actions dominated by mixed strategies.

\begin{figure}[ht]
\begin{subfigure}[t]{.49\textwidth}
\centering
\includegraphics[width=\linewidth]{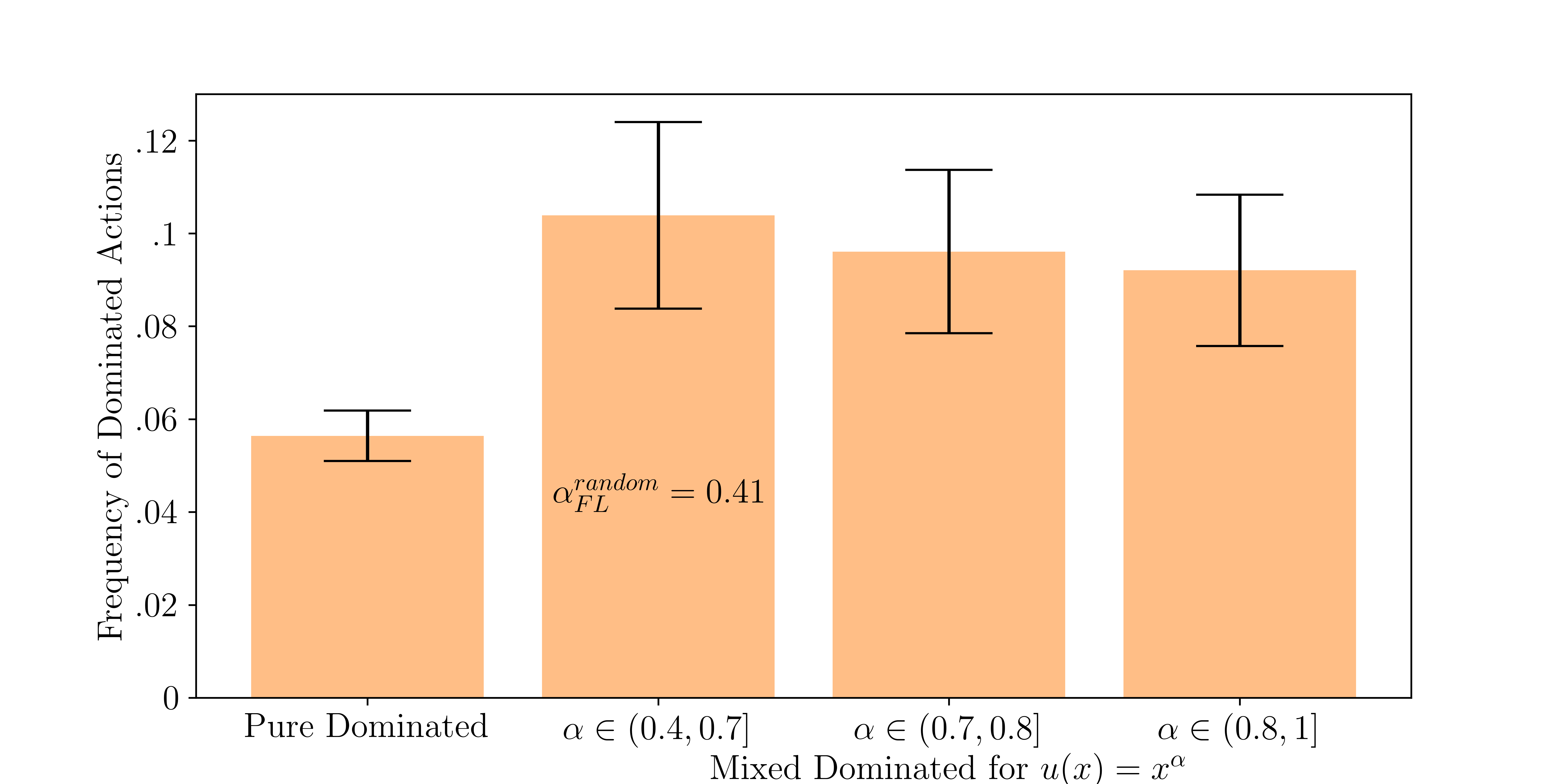}
        \caption{\textit{Random} games \citep{fudenberg_predicting_2019}, $\alpha_{FL}^{random}=0.41$ for \textit{random} games is estimated by \cite{fudenberg_predicting_2019}}\label{fig: fl_random}
\end{subfigure}
\begin{subfigure}[t]{.49\textwidth}
\centering
\includegraphics[width=\linewidth]{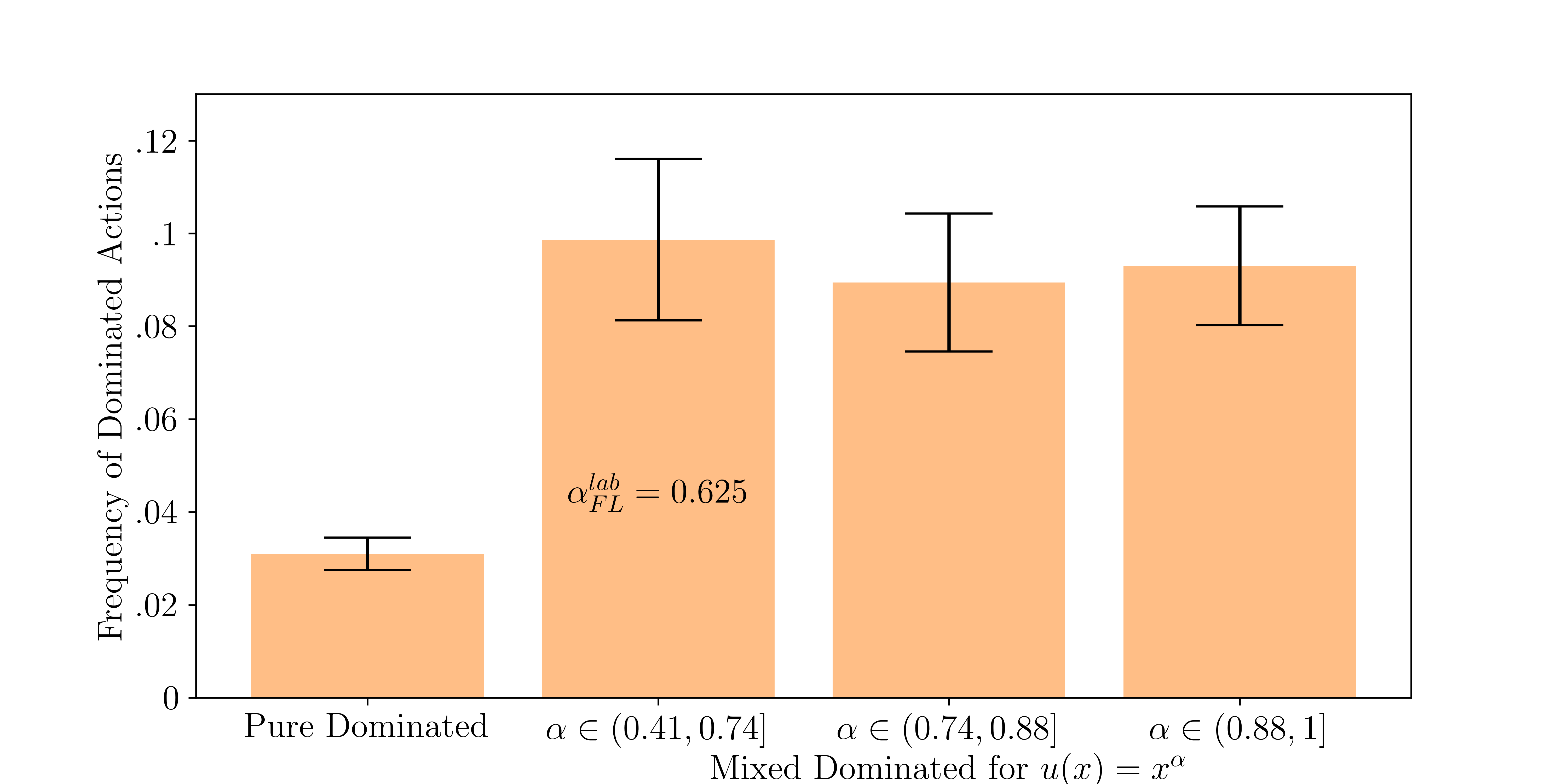}
        \caption{\textit{Laboratory} games \citep{wright2014level}, $\alpha_{FL}^{lab}=0.625$ for \textit{laboratory} games is estimated by \cite{fudenberg_predicting_2019}}\label{fig: fl_lab}
\end{subfigure}
    \captionsetup{justification=centering}
    \caption{Frequency of Row's dominated decisions in games with (1) exactly one Row's strictly dominated action and no other weakly-dominated actions and (2) exactly one Row's mixed-strategy dominated action and no weakly-dominated actions}
    \label{fig: mixed_lab}
\end{figure}


\subsection{Mixed-strategy Dominance Solvability in Imbalanced Games}

Figure~\ref{fig: mixed_bal} is the analogue of Figure 6 in the main text for imbalanced $3 \times n$ games. As can be seen, all of the insights demonstrated in the text for $n \times n$ games carry over.

\newpage

\begin{figure}[ht]
\centering
\begin{subfigure}[t]{.49\textwidth}
\centering
\includegraphics[width=\linewidth]{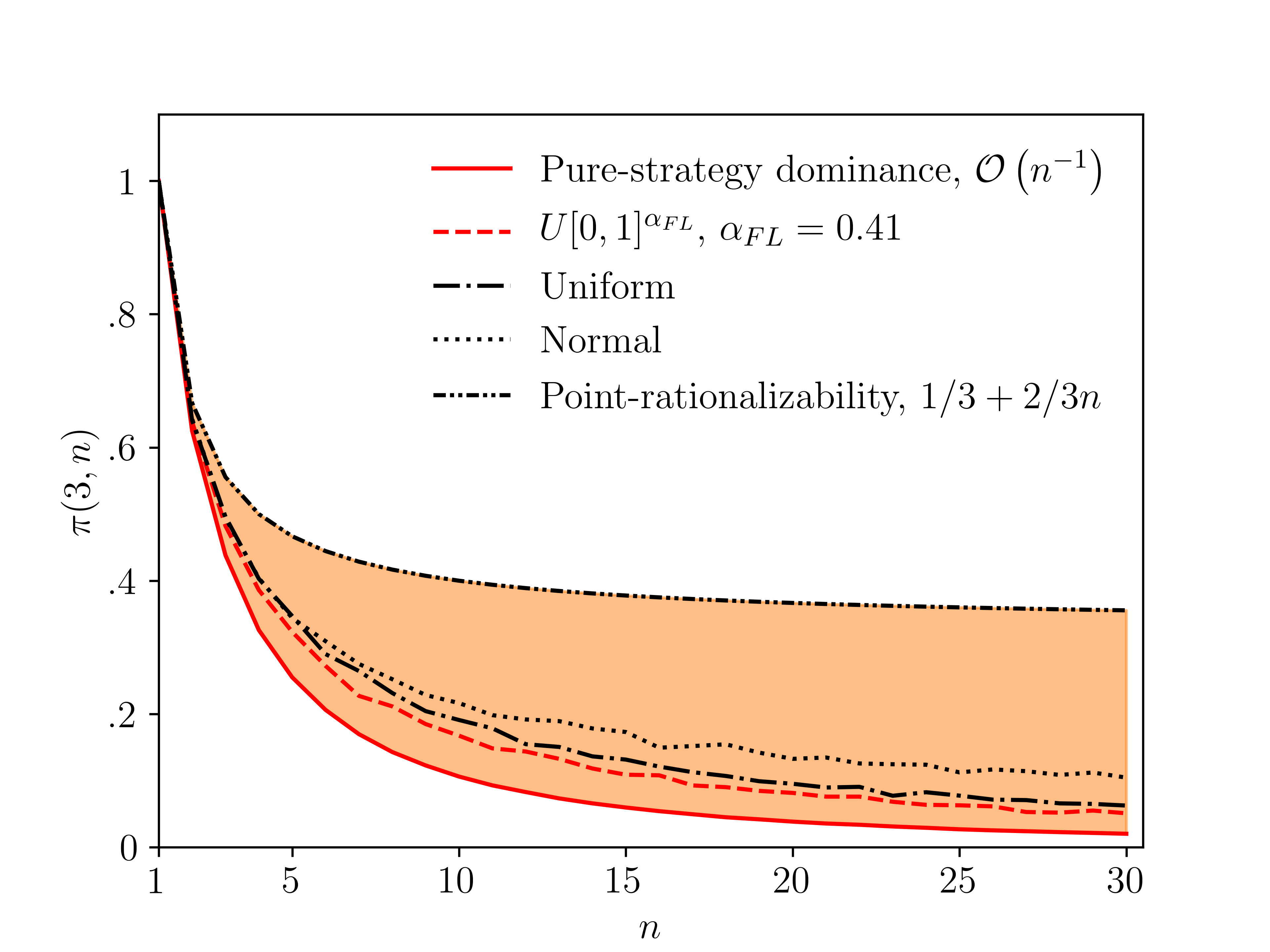}
        \caption{Probability of strict dominance}\label{fig: pi_mixed_imbal}
\end{subfigure}
\begin{subfigure}[t]{.49\textwidth}
\centering
\includegraphics[width=\linewidth]{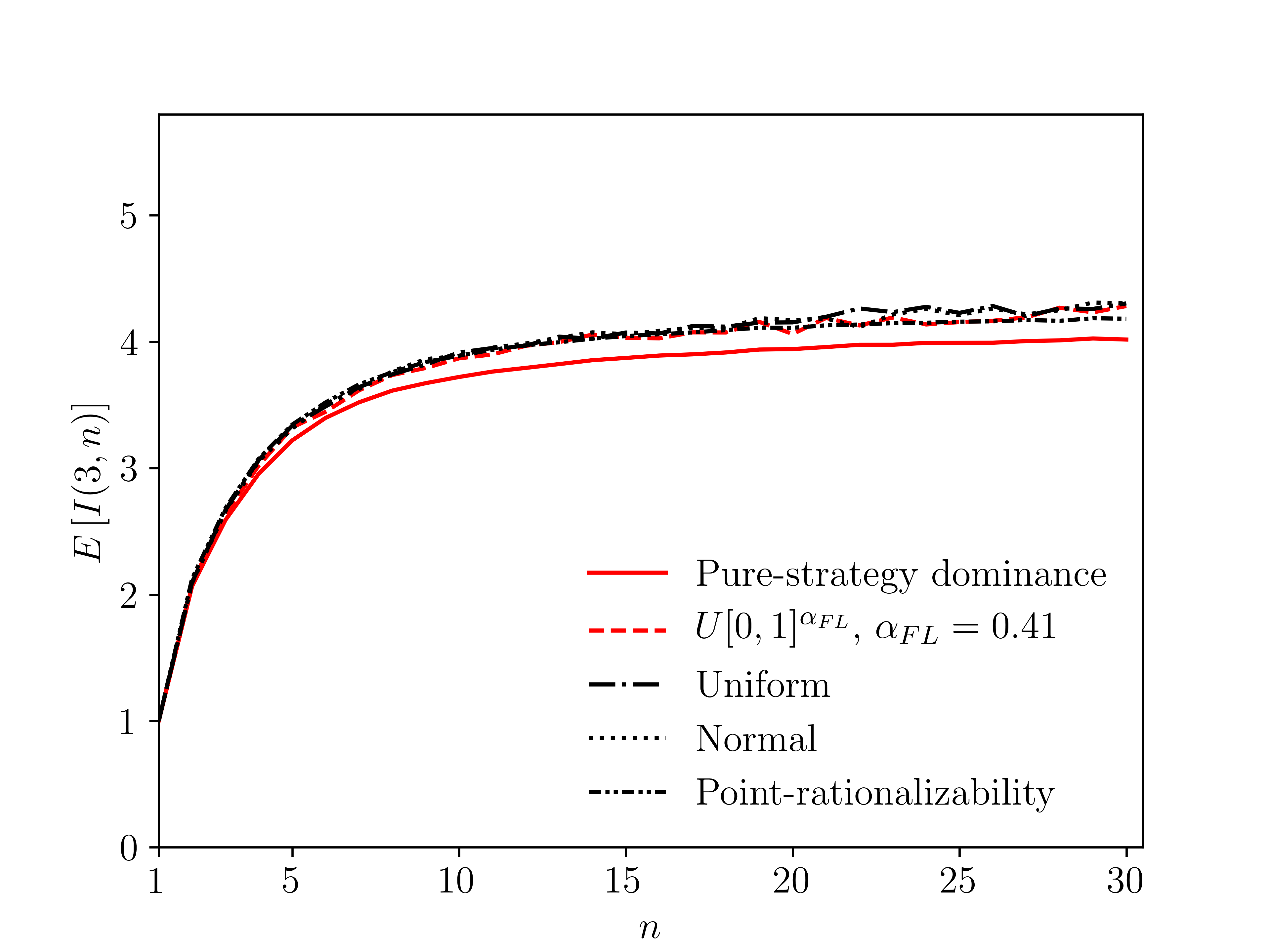}
\caption{Conditional number of iterations}\label{fig: I_mixed_imbal}
\end{subfigure}
\par
\medskip
\par
\begin{subfigure}[t]{\textwidth}
\centering
\vspace{0pt}
\includegraphics[width=0.49\linewidth]{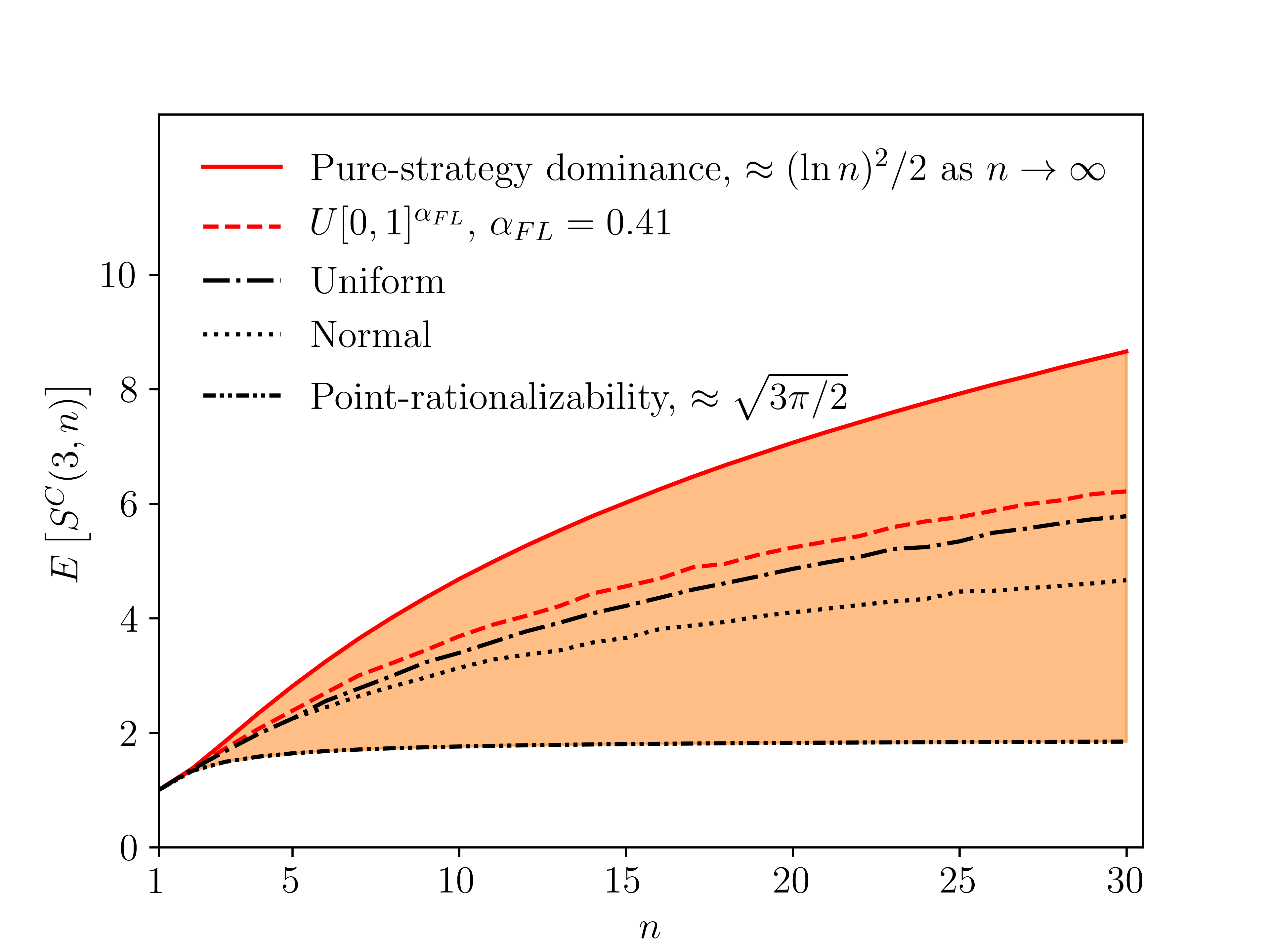}
\caption{Surviving actions}\label{fig: S_mixed_imbal}
\end{subfigure}
\captionsetup{justification=centering}
\caption{Three dimensions of mixed-strategy dominance solvability in $3 \times n$ games, where $\alpha_{FL}=0.41$ for \textit{randomly-generated} games is estimated by \cite{fudenberg_predicting_2019}}
\label{fig: mixed_bal}
\end{figure}

\newpage
\addcontentsline{toc}{section}{References}
\bibliographystyle{ecta}
\bibliography{references}